\let\ps@IEEEtitlepagestyle\ps@mahmood
\newcommand\restartchapters{\par
  \setcounter{chapter}{0}%
  \setcounter{section}{0}%
  \gdef\@chapapp{\chaptername}%
  \gdef\thechapter{\@arabic\c@chapter}}
\newtheorem{remark}{\underline{\it Remark}}
\newtheorem{lemma}{\underline{\it Lemma}}
\newtheorem{proposition}{\underline{\it Proposition}}
\newcommand{\tr}{{\mathrm{tr}}}
\newcommand{\diag}{{\mathrm{diag}}}
\newcommand{\thmend}{\hspace*{\fill}~\QEDopen\par\endtrivlist\unskip}
\newcommand{\lbn}{\par\medskip\noindent}
\newcommand{\ULU}{\mathtt{U}_{\ell}^{\mathtt{u}}}
\newcommand{\ul}{\mathtt{u}}
\newcommand{\SI}{\mathtt{SI}}
\newcommand{\dl}{\mathtt{d}}
\newcommand{\DLU}{\mathtt{U}_{k}^{\mathtt{d}}}
\newcommand{\ds}{\displaystyle}
\newcommand*{\hili}{\color{black}}
\newcommand*{\hilise}{\color{black}}
\newenvironment{bsmallmatrix}
{\left[\begin{smallmatrix}} {\end{smallmatrix}\right]}
\g@addto@macro\normalsize{%
	\setlength\abovedisplayskip{1pt}
	\setlength\belowdisplayskip{1pt}
	\setlength\abovedisplayshortskip{1pt}
	\setlength\belowdisplayshortskip{1pt}
}
\newcommand{\subparagraph}{}
\titlespacing{\section}{0pt}{1pt}{0pt}
\begin{document}

\title{\huge {\hili Joint Antenna Array Mode Selection and User Assignment for Full-Duplex MU-MISO Systems}}
\author{
	\IEEEauthorblockN{ Hieu V. Nguyen, Van-Dinh Nguyen, Octavia A. Dobre, \\ Yongpeng Wu, and Oh-Soon Shin}\\
	\thanks{H. V. Nguyen, V.-D. Nguyen, and O.-S. Shin are with the School of Electronic Engineering \& Department of ICMC Convergence Technology, Soongsil University, Seoul 06978, Korea (e-mail: \{hieuvnguyen, nguyenvandinh, osshin\}@ssu.ac.kr).
	\newline \indent O. A. Dobre is with the Faculty of Engineering and Applied Science, Memorial University of Newfoundland, St. John's, NL, Canada (e-mail: adobre@mun.ca). 
	\newline \indent Y. Wu is with the Department of Electronic Engineering, Shanghai Jiao Tong University, China, Minhang 200240, China (Email: yongpeng.wu@sjtu.edu.cn). 
	\newline \indent Part of this work was presented at the IEEE Globecom 2017,  Singapore \cite{Hieu:GLOBECOM:2017}. 
	}
}

\maketitle
\vspace*{-40pt}
\begin{abstract}
This paper considers a full-duplex (FD) multiuser multiple-input single-output system where  a base station  simultaneously serves both uplink (UL) and downlink (DL) users on the same time-frequency resource. The crucial barriers in implementing FD systems reside  in the residual self-interference  and  co-channel interference. To accelerate the use of FD radio in future wireless networks, we aim at managing the network interference more effectively by jointly designing the selection of half-array antenna modes (in the transmit or receive mode) at the BS with time phases and user assignments. The first problem of interest is to maximize the overall sum rate subject to quality-of-service  requirements, which is formulated as a highly non-concave utility function followed by non-convex constraints. To address the design problem, we propose an iterative low-complexity algorithm by developing new inner approximations, and its convergence to a stationary point is guaranteed. To provide more insights into the solution of the proposed design, a general max-min rate optimization is further considered to maximize the minimum per-user rate while satisfying a given ratio between UL and DL rates. Furthermore, a robust algorithm is devised to verify that the proposed scheme works well under channel uncertainty. Simulation results demonstrate that the proposed algorithms exhibit fast convergence and substantially outperform existing schemes.
\end{abstract}
\begin{IEEEkeywords}
Full-duplex radios, multiuser transmission, non-convex programming, robust design, self-interference,  spectral efficiency, transmit beamforming, user assignment. 
\end{IEEEkeywords}

\section{Introduction} \label{Introduction} 

The rapid growth of the demand for high data rate in the next-generation communication systems requires innovative technologies that make the maximal use of radio spectrum. Even though advanced techniques such as multiple-input multiple-output (MIMO) antennas have been implemented to improve network throughput \cite{MietznerCST09}, half-duplex (HD) systems, where uplink (UL) and downlink (DL) communications are carried out orthogonally in time domain  or in frequency domain, may not be able to provide  sufficient improvements in the spectral efficiency (SE). Full-duplex (FD) communications, which allow for simultaneous UL and DL transmissions in the same  time-frequency resource, theoretically double the SE when compared to HD. As a result, FD communications have been recognized as a promising technology for the forthcoming 5G wireless networks   \cite{Sabharwal:JSAC:Feb2014,ZhangCM15,Wong5Gbook17,Yadav:VehCOMM:18}.

Although the potential gains of FD  can be easily foreseen, its major drawback  is self-interference (SI) from the transmit (Tx) to  receive (Rx) antennas at  an FD wireless transceiver (e.g., a base station (BS)), which is typically much stronger than the signal of interest.  Recently, there have been many efforts to develop analog and digital SI cancellation techniques  to bring the SI power at the noise level when the Tx power is relatively low \cite{Duarte:TWC:12,Bharadia13,KorpiCoMag16,LaughlinCoMag15,Yadav:LSP:2018}. Such results have spurred research at the system level, considering small cell-based systems. In practice, complete elimination of the SI is not possible due to the limitations of hardware design \cite{Day:TSP:Jul2012, Zlatanov:TCOMM:Mar2017}, and thus, recent works have considered FD systems under the effects of \textit{residual} SI. Moreover, FD cellular networks also suffer from  the user-to-user (UE-to-UE) interference caused by UL users to the ones in the DL channel,  referred to as co-channel interference (CCI), which may become a performance bottleneck. The CCI is unavoidable in a cellular network, and it largely depends on the user position in the cell. While CCI was ignored in  earlier FD designs \cite{Dan-SP-13}, the latest research has taken it into account, especially in dense user-deployed small cell systems \cite{Dinh:TCOMM:2017, Yadav:Access, Dan:TWC:14, Tam:TCOM:16, Aquilina:TCOMM:2017, Dinh:Access, Cirik:WCL:June2016, Cirik:TCOMM:March2018, Nguyen:JSAC:18}.

\subsection{Related Works}

By considering both the residual SI and CCI, recent research has mainly focused on improving the SE \cite{Dinh:TCOMM:2017, Yadav:Access, Dan:TWC:14, Tam:TCOM:16, Aquilina:TCOMM:2017, Dinh:Access,Cirik:WCL:June2016, Cirik:TCOMM:March2018}. In \cite{Dinh:TCOMM:2017} and \cite{ Yadav:Access}, the sum rate (SR) maximization was studied in an FD system for both information and energy transfer. In \cite{Dan:TWC:14}, joint beamforming and power allocation were investigated to maximize the SE for a single-cell scenario.\cite{Tam:TCOM:16} further explored a multi-cell scenario, along with an additional quality-of-service (QoS) constraint. {\hili Under channel uncertainty, a robust design for the weighted SE optimization problem for a single-cell system was considered in \cite{Cirik:WCL:June2016}. In an FD multi-cell system, the weighted SE maximization problem was further examined in \cite{Aquilina:TCOMM:2017}, while the authors in \cite{Cirik:TCOMM:March2018} focused on the user fairness design through a max-min problem for the SINRs of all users in the entire network.} The performance of these systems degrades quickly as the residual SI and CCI increase, and to overcome this drawback, an approach integrating user grouping and time allocation was proposed in \cite{Dinh:Access}. Nevertheless, the optimization under many slots of the communication time block (or the coherence time) results in an extremely high complexity, while the gain achieved from increasing the number of time slots suddenly decreases. {\hili If the SI and CCI become more severe, the system performance is still worse than the traditional HD scheme even with a large number of groups, since the full antenna array is not exploited in the transmission. Moreover, the number of binary variables to optimize the BS-user association scales exponentially with the number of DL and UL users.}

Recently, a joint design of user assignment, power control and antenna selection has been applied to FD systems that further improves SE. In \cite{Nam:TWC:Jun2015, Di:TWC:Dec2016, Silva:TWC:Oct2017}, the authors proposed an FD design based on user assignment, where  users are assigned to one of the sub-carriers in an orthogonal frequency-division multiple access network. However, the improvement of such FD systems mainly relies on the frequency resource allocation, while the effect of SI still remains strong given that FD communication occurs in the same band. {\hili Regarding user assignment design, Ahn \textit{et al.} in \cite{Ahn:TWC:Nov2016} proposed a user selection scheme where a set of users is selected to maximize the overall SR. The number of selected UL and DL users is assumed to be less than or equal to the number of receive (Rx) and transmit (Tx) antennas at the base station (BS), respectively.} Regarding antenna selection techniques, they were studied in some simple FD end-to-end communications. For example, \cite{Yang:TWC:Jul2015} considered an FD relaying system where two-antenna relays can switch their antenna modes (Tx and Rx) to improve the SE in forwarding the messages from the source to the destination. A point-to-point FD MIMO system was investigated in \cite{Jang:TVT:Dec2016}, in which Tx and Rx antennas are allowed to be active or not, to enhance the energy efficiency. More challengingly, in addition to the inherently present SI, FD multiuser systems also experience mutual interferences among users, and therefore, they should be considered in designing an FD system. Furthermore, a per-antenna selection causes high computational complexity, which motivates us to consider a half-array (HA) antenna selection in this work.

\subsection{Motivation and Contributions}
Despite advancements made so far to SI cancellation, the residual SI still has a substantially negative impact on the system performance, and thus, needs to be mitigated. As evident from \cite{Dan:TWC:14, Tam:TCOM:16, Aquilina:TCOMM:2017,Dinh:TCOMM:2017,Dinh:Access,Yadav:Access}, recent advanced FD designs may provide worse performance than HD when the residual SI becomes severe. Besides, as previously mentioned, there exists a small, but not negligible amount of CCI, especially when considering an FD system in a small cell scenario with UL and DL users in close proximity \cite{Dan:TWC:14, Tam:TCOM:16}.
Therefore, it is highly desirable to develop an efficient design that utilizes the same frequency resource for all users within one transmission time block while improving the SE. Moreover, the FD system should achieve at least the same performance as HD counterpart even in unexpected conditions.

In contrast with \cite{Dinh:TCOMM:2017, Yadav:Access, Dan:TWC:14, Tam:TCOM:16, Aquilina:TCOMM:2017, Dinh:Access}, this paper considers a small cell FD system, which is enabled with an HA antenna mode selection and two-phase transmission. In particular, the transmission time block is split into two phases, and for each phase, two HAs of the antennas at the FD-enabled BS can be independently selected for UL and DL transmissions. By using this mode selection, our approach automatically switches among HD, conventional FD \cite{Dan:TWC:14}, two-phase FD \cite{Dinh:Access}, and even hybrid transmissions where the HD scheme (UL or DL transmission) is used in one phase and the FD scheme is applied in the other phase. Compared to the prior work, the proposed FD design has several advantages. Firstly, the proposed two-phase transmission helps better exploit the spatial degrees-of-freedom (DoF) since the number of users served at the same time is effectively reduced. Secondly, HA antenna selection supports the BS in finding better channel conditions and enabling the hybrid transmission. Moreover, by a joint design of the HA antenna mode selection and the two-phase transmission, we can achieve a better system performance, with the harmful effects of residual SI and CCI significantly reduced when compared with a typical FD cellular network \cite{Dan:TWC:14,Tam:TCOM:16,Dinh:Access,Aquilina:TCOMM:2017}.

The advantages of the proposed FD design are paralleled with several challenges in the practical implementation. As the SI and CCI are strongly related to the performance of the DL and UL channels, it usually requires introducing integer variables to manage the BS-UE association. Consequently, the joint design problems of interest belong to the difficult class of mixed-integer non-convex problem, where both types of the binary and continuous variables are coupled to each other in the rate functions. In addition, the design problems are established in a blend of the UL (DL) channels and antenna selection variables, making the problems even more challenging. Furthermore, the acquisition of perfect channel state information (CSI), including the interference channels from the multiple users, is relatively unrealistic, and thus, the consideration of a joint robust FD transmission strategy becomes more critical.  In this respect, a rigorous optimization and analysis are required to develop a low-complexity joint design.

Specifically, the main contributions of this paper are briefly described as follows:
{\hili 
\begin{enumerate}
	\item We first propose a new FD transmission model in which two HAs at the BS are independently selected between the Rx and Tx modes in two phases of a time block. This model not only improves the achievable rate by selecting the modes for two HAs, but also mitigates effects of both residual SI and CCI by separating users into two phases. Interestingly, the new model can enable  hybrid modes of the HD and FD to utilize a full-array antenna, which is not investigated in earlier works. 
	\item We formulate an SR maximization problem with QoS constraints under perfect CSI, which jointly optimizes HA mode selection, dynamically-timed phase, user assignment and power allocation. The design problem is a mixed-integer non-convex problem, and we present an iterative algorithm to solve it optimally in a centralized manner. In an effort to reduce the computational complexity, we first derive a set of subproblems without the integer constraints and tighten  bounds for the constraints of subproblems. With the newly developed inner approximations, these  subproblems can be efficiently solved via successive convex approximations. The optimal solution of the original problem is determined to correspond to the best optimal value of subproblems.
	\item We further introduce a general max-min problem which aims at guaranteeing the achievable rate fairness. Although the objective function of this problem is not only non-concave but also non-smooth, the developments made to address the SR maximization problem  are useful to obtain  its solution.
	\item For the imperfect CSI case, we consider a robust counterpart of the SR maximization problem when the errors are mainly due to estimation inaccuracies. We show that the worst-case robust SR maximization problem can be addressed in a similar manner as the SR maximization.
	\item Numerical results  show that the proposed algorithms  have a fast convergence speed and outperform existing schemes. The new scheme is further shown to provide robustness against the estimation error under imperfect CSI.
\end{enumerate}}

\subsection{Paper Organization and Notation}

The remainder of this paper is organized as follows. In Section \ref{System Model and Problem Formulation}, we introduce the system model and formulate the design problem. The proposed algorithms for the SR maximization and max-min rate optimization are presented in Sections \ref{sec: sum rate maximization} and \ref{sec: max-min rate optimization}, respectively. In Section \ref{sec: Channel Uncertainty}, we study the robust FD transmission under  uncertain channel information. Numerical results are  presented in Section \ref{NumericalResults}, and Section \ref{Conclusion} concludes the paper.

\emph{Notation}:  $\mathbf{X}^{T}$, $\mathbf{X}^{H}$ and $\tr(\mathbf{X})$ are the transpose, Hermitian transpose and trace of a matrix $\mathbf{X}$, respectively. $\left\langle \mathbf{x},\mathbf{y}\right\rangle=\tr(\mathbf{x}^H\mathbf{y})$ denotes the inner product of vectors $\mathbf{x}$ and $\mathbf{y}$. $\|\cdot\|$ denotes the Euclidean norm of a matrix or vector, while $|\cdot|$ stands for the absolute value of a complex scalar. $\mathbb{E}[\cdot]$ denotes the statistical expectation and $\Re\{\cdot\}$ returns the real part of the argument. The notation $\mathbf{X}\succeq\mathbf{0}$ and $\mathbf{X}\succ\mathbf{0}$ represent that
$\mathbf{X}$ is a positive-semidefinite and positive-definite matrix, respectively. $ \mathbf{x} \preceq \mathbf{y}  $ denotes the element-wise comparison of vectors, in which a certain element of $ \mathbf{x} $ is not larger than the element of $ \mathbf{y} $ at the same index. $ \diag(\mathbf{x}) $ stands for a diagonal matrix containing the elements of vector $ \mathbf{x} $ as its main diagonal. $\mathbf{x}\sim\mathcal{CN}(\boldsymbol{\eta},\boldsymbol{Z})$ means that $\mathbf{x}$ is a random vector following a complex circularly symmetric Gaussian distribution with mean  $\boldsymbol{\eta}$ and covariance matrix $\boldsymbol{Z}$. $ \mathbf{dom}\;f $ stands for the domain of the function $ f $. $ \otimes $ is the Kronecker operator, while $ \mathbf{and} $ and $ \mathbf{nor} $ denote the \textit{and} and \textit{negative or} operations in Boolean algebra, respectively.

\section{Preliminaries} \label{System Model and Problem Formulation}

\subsection{System Model}
\vspace{-20pt}
\begin{figure}[htbp] 
	\begin{minipage}[b]{0.45\columnwidth}
		\centering
		\includegraphics[width=0.58\columnwidth,trim={2.5cm 0.0cm 0cm 0.0cm}]{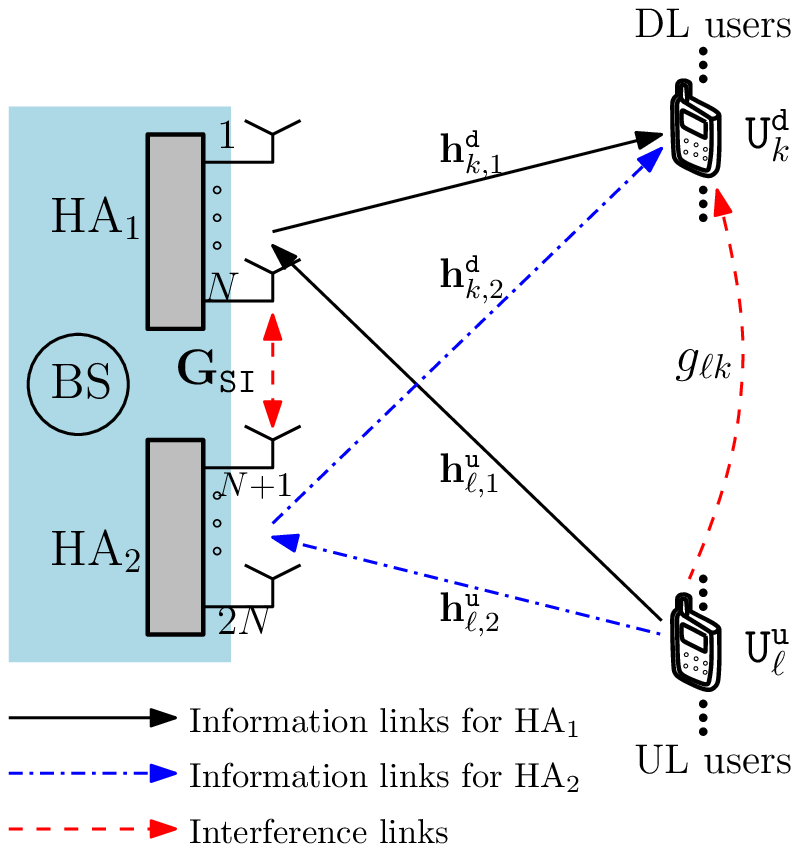}
		\caption{A small cell FD MU-MISO system.  At the BS, HA$_1$ is from the $ 1 $-st to the $ N $-th antenna, while HA$_2$ is from the $ (N+1) $-th to the $ 2N $-th antenna; at a certain moment, either one of the links from or to a HA is active, depending on the Tx/Rx-antenna modes (Rx mode for UL and Tx mode for DL).}
		\label{fig: system model}
	\end{minipage}
	\hfill
	\begin{minipage}[b]{0.5\columnwidth}
		\centering
			\includegraphics[width=1.2\columnwidth,trim={1.0cm 0.0cm -2cm 1.0cm}]{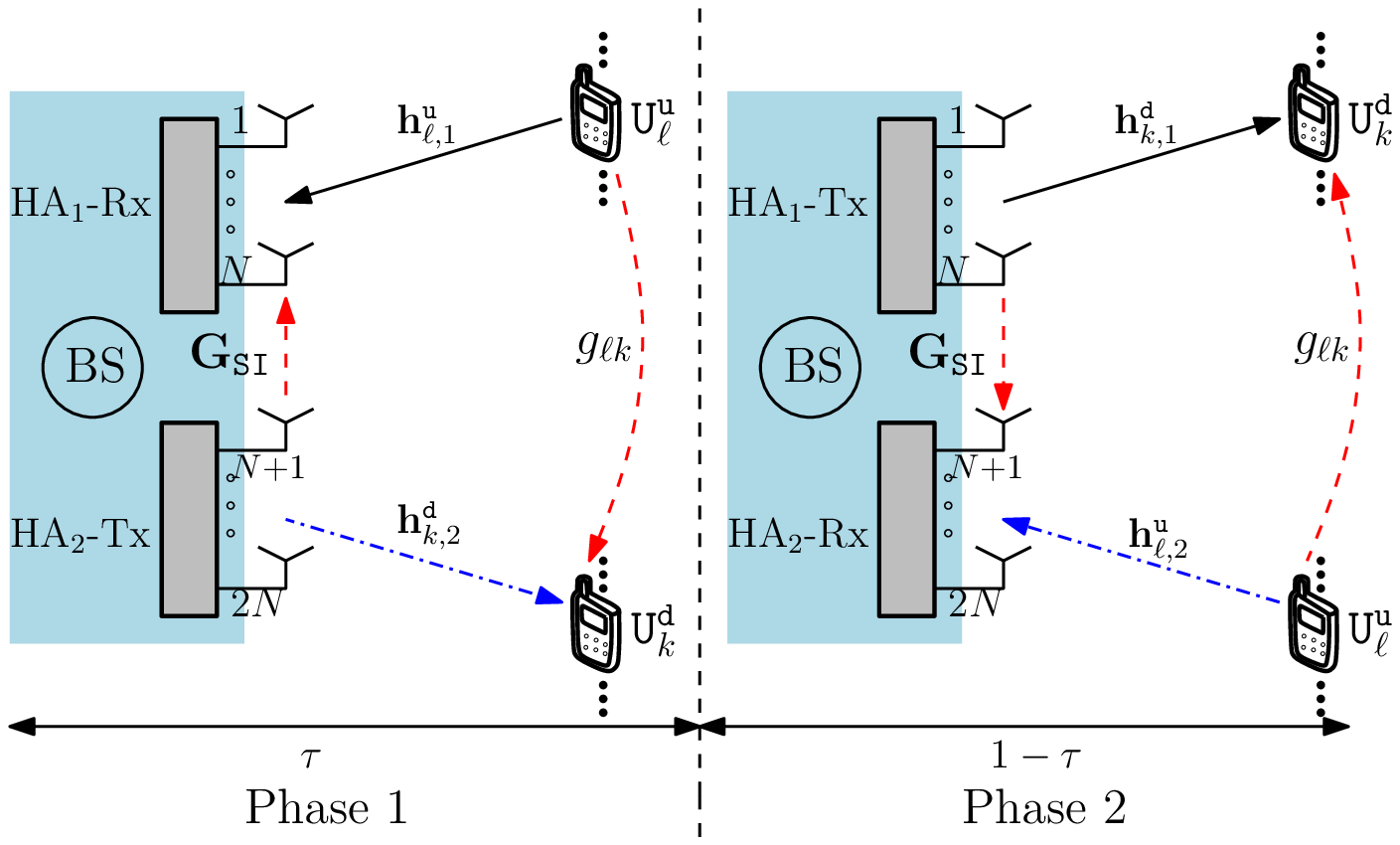}
			\caption{An example of two-phase transmission in the FD system. In Phase 1, HA$_1$ is set to the Rx mode, while HA$_2$ is set to the Tx mode, and vice versa in Phase 2. The links shown in the figure represent the active channels.}
		\label{fig: system model - 2 phases}
		\end{minipage}
\end{figure}

We consider a small cell FD multiuser multiple-input single-output (MU-MISO) system  consisting of an FD-BS equipped with $ 2N $ antennas, $ L $ single-antenna UL users, and $ K $ single-antenna DL users, as illustrated in Fig. \ref{fig: system model}. We denote $\ULU, \; \forall \ell \in \mathcal{L}\triangleq \{1,2,\dots,L\} $ and $\DLU, \; \forall k \in \mathcal{K}\triangleq \{1,2,\dots,K\} $, the $ \ell $-th UL and $ k $-th DL user, respectively. The array of $ 2N $ antennas at the BS is split into two HAs of $ N $ antennas: HA$_1$ from the $ 1 $-st to the $ N $-th antenna and HA$_2$ from the $ (N+1) $-th to the $ 2N $-th antenna. Each HA can switch between Rx and Tx modes, corresponding to UL and DL, respectively. The  channel vectors from $\ULU$  to HA$_i$ and from HA$_i$ to $\DLU$ are denoted by $ \mathbf{h}_{\ell,i}^\ul \in \mathbb{C}^{N\times1}$ and $ \mathbf{h}_{k,i}^\dl\in \mathbb{C}^{N\times1} $, respectively.  At a certain time, the channels in Fig. \ref{fig: system model} are either active or inactive, depending on the Tx/Rx mode selection for HA. The matrix $ \mathbf{G}_{\SI} \in \mathbb{C}^{N\times N} $ represents the SI channel between HA$_1$ and HA$_2$, while $ g_{\ell k} $ stands for the CCI channel from $\ULU$ to $\DLU$. The channel vectors capture the effects of both large- and small-scale fading. 


We consider that the FD transmission takes place in two phases within each transmission time block, which is normalized to 1. Fig. \ref{fig: system model - 2 phases} describes an example for the FD transmission in two phases, with transmission intervals of $ \tau \text{ and } 1-\tau\; (0<\tau<1)$, respectively. In the first phase, HA$_1$ and HA$_2$ are set to the Rx and Tx modes, respectively. Hence, $ \mathbf{h}_{\ell,1}^\ul $ and $ \mathbf{h}_{k,2}^\dl $ are active, while $ \mathbf{h}_{k,1}^\dl $ and $ \mathbf{h}_{\ell,2}^\ul $ are inactive. During the second phase, the Tx/Rx modes of the HAs are interchanged, i.e., $ \mathbf{h}_{\ell,2}^\ul $ and $ \mathbf{h}_{k,1}^\dl $ are active, while $ \mathbf{h}_{k,2}^\dl $ and $ \mathbf{h}_{\ell,1}^\ul $ are not. {\hili Herein, Fig. \ref{fig: system model - 2 phases} illustrates one possible case to indicate that the HAs can be changed across two phases. The system model tends to exploit the channel conditions through the combination of the HA modes selection and two-phase transmission, resulting in three different transmission modes: FD, HD, and hybrid mode.} In summary, the mode selection for HA$_i$ in phase $ j $ $ (j=1,2) $ is optimized using the variables $ \omega_{ij} \in \{0, 1\}$, which form a binary matrix $ \boldsymbol{\Omega} \in \mathbb{R}^{2\times2} $ as
\begin{equation} \label{eq: Omega matrix - HA assignment}
\boldsymbol{\Omega} = 
\begin{bmatrix}
\omega_{11}  & \omega_{12} \\		
\omega_{21} & \omega_{22} \\
\end{bmatrix}
\vspace{5pt}
\end{equation}
where $ \omega_{ij}=0 $ or 1 indicates that HA$_i$ in phase $ j $ is set to the Rx mode (UL) or Tx mode (DL).  We let the $ j $-th column of $ \boldsymbol{\Omega} $ be $ \boldsymbol{\omega}_j $. {\hili Note that the HA modes in two phases depend on the entries of matrix $ \boldsymbol{\Omega} $. Therefore, $ \boldsymbol{\Omega} $ represents the transmission states of the system in one time block. Although \eqref{eq: Omega matrix - HA assignment} describes a model with a $ 2\times2 $ matrix, a scalable system can be enabled by changing the number of rows and/or that of columns of $ \boldsymbol{\Omega} $ to create a general transmission model with multiple sub-array antennas and multiple time phases. However, solving the general case may not be suitable for practical implementations, as more RF chains and high complexity are required. Therefore, we consider the HA mode selection and two-phase transmission in the rest of the paper.}


\subsection{Problem Formulation}
In the DL transmission of phase $ j $,  the information signal $ x_{k,j}^{\dl} $, with $ \mathbb{E}\bigl[|x_{k,j}^{\dl}|^2\bigr]=1$, intended to $\DLU$, is precoded using a beamforming vector $ \mathbf{w}_{k,j} \in \mathbb{C}^{2N\times1} $ at the BS. The signals received at the BS and $\DLU$ in phase $j$ are, respectively, given as
\begin{align} 
\mathbf{y}_{j}^{\ul} & = \sum\nolimits_{\ell=1}^{L} p_{\ell,j}  \boldsymbol{\bar{\Lambda}}_j \mathbf{h}_{\ell}^{\ul} x_{\ell,j}^{\ul} + \underbrace{\rho\sum\nolimits_{k=1}^{K} \mathbf{\bar{G}}_{\SI}^H \boldsymbol{\Lambda}_j \mathbf{w}_{k,j} x_{k,j}^{\dl}}_{\text{SI}}  + \mathbf{n}_{j}, \label{eq: received signal at BS} \\
y_{k,j}^{\dl} & = \sum\nolimits_{k'=1}^{K} (\mathbf{h}_{k}^\dl)^H \boldsymbol{\Lambda}_j \mathbf{w}_{k',j} x_{k',j}^{\dl} + \underbrace{\sum\nolimits_{\ell=1}^{L} p_{\ell,j} g_{\ell k} x_{\ell,j}^{\ul}}_{\text{CCI}} + n_{k,j}\label{eq: received signal at user}
\end{align}
where $ \mathbf{h}_{\ell}^\ul \triangleq [(\mathbf{h}_{\ell,1}^\ul)^H \; (\mathbf{h}_{\ell,2}^\ul)^H]^H $ and $ \mathbf{h}_{k}^\dl \triangleq [(\mathbf{h}_{k,1}^\dl)^H \; (\mathbf{h}_{k,2}^\dl)^H]^H $. $ \boldsymbol{\Lambda}_j  $ and $ \boldsymbol{\bar{\Lambda}}_j $ are diagonal matrices defined as $ \boldsymbol{\Lambda}_j \triangleq \diag(\boldsymbol{\omega}_j) \otimes \mathbf{I}_N $ and $ \boldsymbol{\bar{\Lambda}}_j \triangleq \diag(\boldsymbol{\bar{\omega}}_j) \otimes \mathbf{I}_N $, where $ \boldsymbol{\bar{\omega}}_j $ is a bit-wise complement of the binary vector $ \boldsymbol{\omega}_j $.  $ p_{\ell,j}^2 \in \mathbb{R} $ and $ x_{\ell,j}^{\ul} $, with $ \mathbb{E}\bigl[|x_{\ell,j}^{\ul}|^2\bigr]=1 $, are the UL transmit power and information signal from $\ULU$ to BS in phase $j$, respectively. $ n_{k,j} \sim \mathcal{CN}(0, \sigma_k^2) $ and $ \mathbf{n}_{j} \sim \mathcal{CN}(\mathbf{0}, \sigma_{\mathtt{U}}^2\mathbf{I}) $ represent the additive white Gaussian noise. The parameter $ \rho \in \bigl[0,1\bigr] $ represents the degree of SI propagation \cite{Riihonen-SP-11}, while the block anti-diagonal matrix $ \mathbf{\bar{G}}_{\SI} $ is defined as $ \mathbf{\bar{G}}_{\SI} = 
\begin{bsmallmatrix}
\mathbf{0} & \mathbf{G}_{\SI} \\
\mathbf{G}_{\SI}^H & \mathbf{0} \\
\end{bsmallmatrix}. $

Let $ \mathbf{w}_{j} \triangleq \bigl[\mathbf{w}_{1,j}^H\;\mathbf{w}_{2,j}^H\cdots\mathbf{w}_{K,j}^H\bigr]^H \in \mathbb{C}^{2NK\times1} $ and $ \mathbf{p}_{j} \triangleq \bigl[p_{1,j}\;p_{2,j}\cdots p_{L,j}\bigr]^H \in \mathbb{C}^{L\times1} $. By using minimum mean square error and successive interference cancellation (MMSE-SIC) decoder \cite{Tse:book:05}, the  signal-to-interference-plus-noise ratio (SINR) of $\ULU$ at the BS in phase $j$ can be written as 
	\begin{equation} \label{eq: SINR ULUs}
	\gamma_{\ell}^{\ul}\bigl(\boldsymbol{\omega}_j, \mathbf{w}_j, \mathbf{p}_j\bigr) = p_{\ell,j}^2 (\mathbf{\tilde{h}}_{\ell,j}^{\ul})^H \boldsymbol{\Psi}_{\ell,j}^{-1} \mathbf{\tilde{h}}_{\ell,j}^{\ul}
	\end{equation}
where 
\[ \boldsymbol{\Psi}_{\ell,j} \triangleq  \sum\nolimits_{\ell'=\ell+1}^{L} p_{\ell',j}^2 \mathbf{\tilde{h}}_{\ell',j}^{\ul}(\mathbf{\tilde{h}}_{\ell',j}^{\ul})^H + \rho^2\sum\nolimits_{k=1}^{K} \mathbf{\tilde{G}}_{j}^H \mathbf{w}_{k,j} \mathbf{w}_{k,j}^H\mathbf{\tilde{G}}_{j} + \sigma_{\mathtt{U}}^2\mathbf{I},
\]
and $ \mathbf{\tilde{h}}_{\ell,j}^{\ul} \triangleq \boldsymbol{\bar{\Lambda}}_j\mathbf{h}_{\ell}^{\ul}  $ and $ \mathbf{\tilde{G}}_{j} \triangleq \boldsymbol{\Lambda}_j \mathbf{\bar{G}}_{\SI} $ are the effective UL  and residual SI channels, respectively. Next, we can express the received SINR of  $\DLU$ in phase $j$ as
	\begin{equation} \label{eq: SINR DLUs}
	\gamma_{k}^{\dl}\bigl(\boldsymbol{\omega}_j, \mathbf{w}_j, \mathbf{p}_j\bigr) = \frac{\bigl|(\mathbf{\tilde{h}}_{k,j}^\dl)^H \mathbf{w}_{k,j}\bigr|^2}{\psi_{k,j}}
	\end{equation}
where 
\[ \psi_{k,j} \triangleq \sum\nolimits_{k'=1,k'\neq k}^{K} \bigl|\bigl(\mathbf{\tilde{h}}_{k,j}^\dl\bigr)^H \mathbf{w}_{k',j}\bigr|^2 + \sum\nolimits_{\ell=1}^{L} p_{\ell,j}^2 |g_{\ell k}|^2 + \sigma_k^2,
\] and $ \mathbf{\tilde{h}}_{k,j}^\dl\triangleq \boldsymbol{\Lambda}_j \mathbf{h}_{k}^\dl $ is the effective channel for $\DLU$. 
 

{\hili 
\begin{remark}
	It is clear that the SINRs in \eqref{eq: SINR ULUs} and \eqref{eq: SINR DLUs} can express all possible cases of $ \boldsymbol{\Omega} $. Remarkably, the SINR is separately computed for each phase, depending on the columns of matrix $ \boldsymbol{\Omega} $, i.e., $ \boldsymbol{\omega}_j $. If the FD mode is selected, half of the effective channel vector is active corresponding to the HA mode in $ \boldsymbol{\omega}_j $. The HD mode is selected when both HAs are set to either Rx or Tx. When both HAs are in Rx mode (or Tx mode), the full vector of the effective UL channel in \eqref{eq: SINR ULUs} (or effective DL channel in \eqref{eq: SINR DLUs}) is active, while the DL SINRs in \eqref{eq: SINR DLUs} (or the UL SINRs in \eqref{eq: SINR ULUs}) are equal to 0 due to the full vector of the DL effective channel (or the effective UL channel) being off.
\end{remark}
}

From \eqref{eq: SINR ULUs} and \eqref{eq: SINR DLUs}, the achievable rates (nats/s/Hz) of $\ULU$ and $\DLU$ in one transmission time block are, respectively, expressed as
\begin{align} 
R_{\ell}^{\ul}\bigl(\boldsymbol{\Omega}, \mathbf{w}, \mathbf{P},\tau,\boldsymbol{\alpha}_{\ell}^{\ul}\bigr) & = \sum_{j=1}^{2} \alpha_{\ell,j}^{\ul}\bar{\tau}_j \ln \bigl(1+\gamma_{\ell}^{\ul}(\boldsymbol{\omega}_j, \mathbf{w}_j, \mathbf{p}_j)\bigr), \label{eq: nonconvex rates uplink} \\
R_{k}^{\dl}\bigl(\boldsymbol{\Omega}, \mathbf{w}, \mathbf{P},\tau,\boldsymbol{\alpha}_{k}^{\dl}\bigr) & = \sum_{j=1}^{2} \alpha_{k,j}^{\dl}\bar{\tau}_j \ln \bigl(1+\gamma_{k}^{\dl}(\boldsymbol{\omega}_j, \mathbf{w}_j, \mathbf{p}_j)\bigr) \label{eq: nonconvex rates downlink} 
\end{align}
where $ \mathbf{w} \triangleq [\mathbf{w}_1^H\;\mathbf{w}_2^H]^H $ and $ \mathbf{P} \triangleq [\mathbf{p}_1\;\mathbf{p}_2] $. The new binary variables $ \boldsymbol{\alpha}_{\ell}^{\ul} \triangleq [\alpha_{\ell,1}^{\ul}\;\alpha_{\ell,2}^{\ul}],\; \forall\ell\in\mathcal{L}$ and $ \boldsymbol{\alpha}_{k}^{\dl} \triangleq [\alpha_{k,1}^{\dl}\;\alpha_{k,2}^{\dl}],  \forall k\in\mathcal{K} $ are used for user assignments, and they indicate whether $\ULU$ and $\DLU$ are respectively served by the BS in  phase $j$. If they are, $ \alpha_{\ell,j}^{\ul} $ and $ \alpha_{k,j}^{\dl} $ are set to 1; otherwise, $ \alpha_{\ell,j}^{\ul} $ and $ \alpha_{k,j}^{\dl} $ are equal to 0. The communication time is defined as $\bar{\tau}_j\triangleq(2-j)\tau+(j-1)(1-\tau)$, ensuring that phases 1 and 2 are within $ \tau $ and $ (1-\tau) $, respectively. 

We consider a general objective function, denoted as $ \mathcal{R}\bigl(\{R_{\ell}^{\ul}\},\{R_{k}^{\dl}\}\bigr) $. For simplicity, $ R_{\ell}^{\ul} $ and $ R_{k}^{\dl} $ are used to represent $ R_{\ell}^{\ul}\bigl(\boldsymbol{\Omega}, \mathbf{w}, \mathbf{P},\tau,\boldsymbol{\alpha}_{\ell}^{\ul}\bigr) $ and $ R_{k}^{\dl}\bigl(\boldsymbol{\Omega}, \mathbf{w}, \mathbf{P},\tau,\boldsymbol{\alpha}_{k}^{\dl}\bigr) $ given in \eqref{eq: nonconvex rates uplink} and \eqref{eq: nonconvex rates downlink}, respectively.
In particular, a joint design problem is mathematically formulated  as
\begin{subequations} \label{eq: prob. general form}
	\begin{IEEEeqnarray}{cl}
		\underset{\boldsymbol{\Omega},\mathbf{w}, \mathbf{P},\tau,\boldsymbol{\alpha}_{\ell}^{\ul},\boldsymbol{\alpha}_{k}^{\dl}}{\text{maximize}} & \quad \mathcal{R}\bigl(\{R_{\ell}^{\ul}\},\{R_{k}^{\dl}\}\bigr)\label{eq: prob. general form a} \\
		\text{subject to} & \quad \omega_{ij} \in \{0,1\}, \forall i,j=1,2, \label{eq: prob. general form b} \\
		& \quad 1 \leq \sum\nolimits_{i=1}^{2}\sum\nolimits_{j=1}^{2}\omega_{ij} \leq 3, \label{eq: prob. general form c} \\
		& \quad \sum\nolimits_{j=1}^{2}\bar{\tau}_j p_{\ell,j}^2 \leq P_{\ell}^{\text{max}},\; \forall \ell \in \mathcal{L}, \label{eq: prob. general form d} \\
		& \quad p_{\ell,j}^2 \leq P_{\ell}^{\infty},\;\forall \ell \in \mathcal{L}, j=1,2, \label{eq: prob. general form e}\\
		& \quad p_{\ell,j} \geq 0, \;\forall \ell \in \mathcal{L}, j=1,2, \label{eq: prob. general form f} \\
		& \quad \sum\nolimits_{j=1}^{2}\bar{\tau}_j \|\mathbf{w}_j\|^2 \leq P_t^{\text{max}}, \label{eq: prob. general form g} \\
		& \quad \|\mathbf{w}_j\|^2 \leq P_t^{\infty}, \; j=1,2, \label{eq: prob. general form h}  \\
		& \quad 0 < \tau < 1, \label{eq: prob. general form i} \\
		& \quad \alpha_{\ell,j}^{\ul},\; \alpha_{k,j}^{\dl} \in \{0,1\}, \; \forall \ell \in \mathcal{L},\forall k \in \mathcal{K}, j=1,2. \label{eq: prob. general form j}
	\end{IEEEeqnarray}							
\end{subequations}
Constraints \eqref{eq: prob. general form b} and \eqref{eq: prob. general form c} are  the mode selections of HAs, in which constraint \eqref{eq: prob. general form c} guarantees that UL and DL transmissions are executed at least in one phase. {\hili In fact, when the summation in \eqref{eq: prob. general form c} is equal to 2, the system operates in either FD or HD mode. If this summation equals 1 or 3, the hybrid mode is activated.} Constraints \eqref{eq: prob. general form d} and \eqref{eq: prob. general form g} cap the maximum transmit power of  $\ULU$ and BS, respectively. As in \cite{Dinh:Access,Nguyen:JSAC:18}, the coupling between communication time $ \bar{\tau}_j $ and transmit power in \eqref{eq: prob. general form d} and \eqref{eq: prob. general form g} indicates that the transmit power in one phase is associated with its interval and independent of the other phase, under the condition that the total transmit power does not exceed the power budgets, $ P_{\ell}^{\text{max}} $ at $\ULU$ and $ P_t^{\text{max}} $ at the BS. The physical constraints \eqref{eq: prob. general form e} and \eqref{eq: prob. general form h} are imposed to ensure that it is not possible to transmit an arbitrary high power at every instant. {\hili Herein, $ P_{\ell}^{\infty} $ and $ P_t^{\infty} $ represent the maximum transmit power per phase at $ \ULU $ and at the BS, respectively. Without loss of generality, we choose $ P_{\ell}^{\infty}\equiv P_{\ell}^{\text{max}} $ and $ P_t^{\infty} \equiv P_t^{\text{max}} $.}
Finally, constraints \eqref{eq: prob. general form j} are associated with the users' assignments in each phase. {\hili Through problem \eqref{eq: prob. general form}, the SI and CCI are significantly reduced since the proposed two-phase scheme allows the system to serve its users in various transmission time $ \bar{\tau}_j $, i.e., one of two phases or the whole transmission block. In addition, the HA modes are dynamically selected in the optimization, and thus, this scheme can easily switch to FD scheme; or asymmetric HD scheme where the UL and DL transmissions independently operating in two phases of one time block can occupy different time intervals; or hybrid mode where FD is used in one phase, and HD using full antenna array for either uplink (UL) or downlink (DL) transmission is employed in the other phase.}

{\hili
\begin{remark}
	Based on problem \eqref{eq: prob. general form}, the modes for HAs are dynamically selected to maximize the objective function (i.e., SR or max-min rate). Moreover, the FD-enabled system can be switched to HD transmission if the SI and CCI become strong. Therefore, the HD mode case corresponds to one of those cases that are generated by HA modes in two phases. When the HD mode case is selected, the full-antenna array ($ 2N $ antennas) is used in each phase for UL or DL transmission while the SI and CCI automatically disappear due to the joint optimization of HA mode selection and power allocation.
\end{remark}
}

\section{Proposed Algorithm for Sum Rate Maximization}\label{sec: sum rate maximization}
In this section, we investigate the following joint design problem for SR maximization:
	\begin{subequations} \label{eq: prob. max sum SE}
		\begin{IEEEeqnarray}{cl}
			\underset{\boldsymbol{\Omega},\mathbf{w}, \mathbf{P},\tau,\boldsymbol{\alpha}_{\ell}^{\ul},\boldsymbol{\alpha}_{k}^{\dl}}{\text{maximize}} & \quad R_{\Sigma}\triangleq \sum\nolimits_{\ell=1}^{L}R_{\ell}^{\ul} + \sum\nolimits_{k=1}^{K}R_{k}^{\dl} \label{eq: prob. max sum SE a} \\
			\text{subject to} 
			& \quad \eqref{eq: prob. general form b}-\eqref{eq: prob. general form j}, \\
			& \quad R_{\ell}^{\ul} \geq \bar{R}_\ell^{\ul}, \forall \ell \in \mathcal{L}, \label{eq: prob. max sum SE c} \\
			& \quad R_{k}^{\dl} \geq \bar{R}_k^{\dl}, \forall k \in \mathcal{K} \label{eq: prob. max sum SE d}
		\end{IEEEeqnarray}
	\end{subequations}
where the objective function $ \mathcal{R}\bigl(\{R_{\ell}^{\ul}\},\{R_{k}^{\dl}\}\bigr)$ in \eqref{eq: prob. general form} is replaced by  $R_{\Sigma}=\sum\nolimits_{\ell\in\mathcal{L}}R_{\ell}^{\ul} + \sum\nolimits_{k\in\mathcal{K}}R_{k}^{\dl} $. $\bar{R}_\ell^{\ul}\geq 0$ in \eqref{eq: prob. max sum SE c} and $\bar{R}_k^{\dl}\geq 0$ in \eqref{eq: prob. max sum SE d} specify the minimum rate requirements for $\ULU$ and $\DLU$, respectively. It is clear that problem \eqref{eq: prob. max sum SE} is a mixed-integer non-convex  programming, which is an NP-hard problem \cite{Tawarmalani:2002:CGO:640669}. In addition, the  binary variables $(\boldsymbol{\Omega},\boldsymbol{\alpha}_{\ell}^{\ul},\boldsymbol{\alpha}_{k}^{\dl})$ and the continuous variables  $(\mathbf{w}, \mathbf{P},\tau)$ are strongly  coupled, making the design problem even more challenging. In what follows, we develop an efficient successive  algorithm with low  complexity to solve problem \eqref{eq: prob. max sum SE}. Specifically, we first decouple the original problem \eqref{eq: prob. max sum SE} into subproblems of lower dimensions, and then use a bound-tightening process to further relieve the computational complexity. Although each subproblem is still non-convex, we iteratively approximate each subproblem by a sequence of convex programs based on  the inner approximation method \cite{Marks:78}, for which an iterative algorithm is then proposed.

\subsection{Subproblems for Problem \eqref{eq: prob. max sum SE}}
Since the complexity of the BS-UE association will increase exponentially with the number of users, it is not practical to seek all possible BS-UE combinations. {\hili In particular, the number of possible cases for  two HAs is the product of $ 4^2 $ cases for the HA mode selection, $ 2^{2L} $ for UL and $ 2^{2K} $ for DL users assignments. To overcome this issue, we first utilize the symmetry property of problem \eqref{eq: prob. max sum SE} to deal with the binary variables $\boldsymbol{\Omega}$ in \eqref{eq: prob. general form b} and \eqref{eq: prob. general form c} by introducing  the following proposition.}
{\hili 
	\lbn
\begin{proposition} \label{prop: Constraint Reduction}
	A finite set is defined as $ \mathcal{B}\triangleq\{\mathrm{b}`b_1b_0|b_0,b_1\in\{0,1\}\} $, where $ \mathrm{b}`b_1b_0 $ denotes a binary number with two bits including the most significant bit, $ b_1 $, and the least significant bit, $ b_0 $. Then, we consider a function $ f_\omega $ as
	\begin{align} \label{eq: function f_omega}
		f_\omega: & \; \{0,1\}^{2\times 1}  \rightarrow \mathcal{B}, \nonumber \\ 
		& \qquad \mathbf{z} \quad \;\; \rightarrow \hat{z}, \; f_\omega(\mathbf{z})=\mathrm{b}`z_{1}z_{2}. 
	\end{align}
	From $ f_{\omega} $ in \eqref{eq: function f_omega}, a mapping is established as
	\begin{align}
	g_\omega: & \; \{0,1\}^{2\times 1}\times \{0,1\}^{2 \times 1} \rightarrow \mathcal{B}\times \mathcal{B}, \nonumber \\ 
	& \quad \boldsymbol{\Omega}=\boldsymbol{\omega}_1 \times \boldsymbol{\omega}_2 \quad \quad \;  \rightarrow \quad \boldsymbol{\hat{\omega}}, \; g_\omega(\boldsymbol{\Omega})=[f_\omega(\boldsymbol{\omega}_1)\; f_\omega(\boldsymbol{\omega}_2)]. \nonumber
	\end{align}
	We define a lexicographically ordered set as 
	\begin{align} 
		\mathcal{B}'\triangleq\{\boldsymbol{\hat{\omega}}=[\hat{\omega}_1\;\hat{\omega}_2]\;|\;\hat{\omega}_1<\hat{\omega}_2\;\vee\;\hat{\omega}_1=\hat{\omega}_2\notin\{\mathrm{b}`00,\mathrm{b}`11\} \}\subset \mathcal{B}\times\mathcal{B}. \label{eq: fixing omega - HA assignment}
	\end{align}
	Without loss of optimality, problem \eqref{eq: prob. max sum SE} can be decoupled into subproblems without constraints \eqref{eq: prob. general form b} and \eqref{eq: prob. general form c} by fixing $ \boldsymbol{\Omega} $, such that $ \boldsymbol{\Omega} \in \mathcal{S}_{\boldsymbol{\Omega}} = g_\omega^{-1}(\mathcal{B}') $. \thmend
\end{proposition}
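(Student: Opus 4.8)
The plan is to prove the statement in two stages: first reduce the mixed‑integer problem to a finite family of continuous subproblems by enumerating the binary matrix $\boldsymbol{\Omega}$, and then discard the redundant members of that family by exploiting a phase‑swap symmetry. Since $\omega_{ij}\in\{0,1\}$ by \eqref{eq: prob. general form b}, each column $\boldsymbol{\omega}_j$ takes one of the four values encoded by $f_\omega$, so \eqref{eq: prob. general form b}--\eqref{eq: prob. general form c} admit only finitely many matrices; constraint \eqref{eq: prob. general form c} removes exactly the two configurations $g_\omega^{-1}([\mathrm{b}`00\;\mathrm{b}`00])$ and $g_\omega^{-1}([\mathrm{b}`11\;\mathrm{b}`11])$, whose entry sums are $0$ and $4$. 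Solving \eqref{eq: prob. max sum SE} is therefore equivalent to solving, for each remaining feasible $\boldsymbol{\Omega}$, the subproblem obtained by fixing $\boldsymbol{\Omega}$ and dropping \eqref{eq: prob. general form b}--\eqref{eq: prob. general form c}, and returning the best optimal value. This decoupling is immediate; the real content of the proposition is that restricting $\boldsymbol{\Omega}$ to $g_\omega^{-1}(\mathcal{B}')$ discards no optimality.

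The core of the argument is a phase‑swap map. Given a feasible tuple $(\boldsymbol{\Omega},\mathbf{w},\mathbf{P},\tau,\{\boldsymbol{\alpha}_\ell^\ul\},\{\boldsymbol{\alpha}_k^\dl\})$ with $\boldsymbol{\Omega}=[\boldsymbol{\omega}_1\;\boldsymbol{\omega}_2]$, I would define a swapped tuple that interchanges every phase‑indexed quantity: the columns of $\boldsymbol{\Omega}$, the blocks $(\mathbf{w}_1,\mathbf{p}_1)\leftrightarrow(\mathbf{w}_2,\mathbf{p}_2)$, the assignment bits $\alpha_{\cdot,1}\leftrightarrow\alpha_{\cdot,2}$, and the time split $\tau\mapsto 1-\tau$. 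The key observation is that the communication‑time weight \emph{tracks} the phase: from $\bar{\tau}_j=(2-j)\tau+(j-1)(1-\tau)$ one has $\bar{\tau}_1=\tau$ and $\bar{\tau}_2=1-\tau$, so after the swap $\bar{\tau}_1=1-\tau$ and $\bar{\tau}_2=\tau$, i.e. the weights interchange in lockstep with the data. Because each SINR in \eqref{eq: SINR ULUs}--\eqref{eq: SINR DLUs} depends only on the phase‑$j$ variables $(\boldsymbol{\omega}_j,\mathbf{w}_j,\mathbf{p}_j)$, the two summands of $R_\ell^\ul$ and $R_k^\dl$ in \eqref{eq: nonconvex rates uplink}--\eqref{eq: nonconvex rates downlink} are merely relabelled $j=1\leftrightarrow j=2$, leaving each rate invariant, and hence the objective $R_\Sigma$ and the QoS constraints \eqref{eq: prob. max sum SE c}--\eqref{eq: prob. max sum SE d} unchanged.

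Next I would verify that the swapped tuple is feasible. The per‑phase constraints \eqref{eq: prob. general form e}, \eqref{eq: prob. general form f}, \eqref{eq: prob. general form h} hold phase‑by‑phase and so survive relabelling; the budget constraints \eqref{eq: prob. general form d} and \eqref{eq: prob. general form g} are sums of the form $\sum_j\bar{\tau}_j(\cdot)$, whose value is preserved under the simultaneous interchange of $\bar{\tau}_j$ and the phase‑$j$ quantities; \eqref{eq: prob. general form i} is symmetric in $\tau\leftrightarrow 1-\tau$; and \eqref{eq: prob. general form c} is unchanged because a column permutation preserves the total entry sum. Thus the map is a feasibility‑ and objective‑preserving involution on the feasible set, and the subproblems attached to $\boldsymbol{\Omega}=[\boldsymbol{\omega}_1\;\boldsymbol{\omega}_2]$ and to $[\boldsymbol{\omega}_2\;\boldsymbol{\omega}_1]$ have the same optimal value.

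Finally, under $g_\omega$ this involution acts on codes as $[\hat{\omega}_1\;\hat{\omega}_2]\mapsto[\hat{\omega}_2\;\hat{\omega}_1]$, so the feasible configurations split into swap‑orbits. The set $\mathcal{B}'$ in \eqref{eq: fixing omega - HA assignment} selects from each two‑element orbit (those with $\hat{\omega}_1\neq\hat{\omega}_2$) the unique lexicographically smaller representative, and keeps the two singleton fixed points $[\mathrm{b}`01\;\mathrm{b}`01]$ and $[\mathrm{b}`10\;\mathrm{b}`10]$, the remaining fixed points $[\mathrm{b}`00\;\mathrm{b}`00]$ and $[\mathrm{b}`11\;\mathrm{b}`11]$ being excluded by \eqref{eq: prob. general form c}; one checks that every element of $\mathcal{B}'$ indeed satisfies \eqref{eq: prob. general form c}. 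Hence $\mathcal{S}_{\boldsymbol{\Omega}}=g_\omega^{-1}(\mathcal{B}')$ is a transversal containing exactly one matrix from each orbit of feasible configurations, and by the involution no optimality is lost in restricting to it. I expect the main obstacle to be the feasibility bookkeeping rather than any conceptual difficulty—specifically, making sure the time–power coupling in \eqref{eq: prob. general form d} and \eqref{eq: prob. general form g} is genuinely invariant under the \emph{joint} swap of $\tau$ and the phase‑indexed blocks, since it is precisely this coupling that makes the two phases interchangeable.
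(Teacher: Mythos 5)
Your proposal is correct and takes essentially the same route as the paper's Appendix A: decouple problem \eqref{eq: prob. max sum SE} by fixing $\boldsymbol{\Omega}$, then exploit the phase-swap symmetry of the two phases to restrict the search to the lexicographic representatives in $\mathcal{B}'$. If anything, your write-up is more complete than the paper's, which merely asserts that $f_R(\mathbf{X}_1,\mathbf{X}_2)=R_{\Sigma}(\mathbf{X}_1)+R_{\Sigma}(\mathbf{X}_2)$ is symmetric; you make the underlying involution explicit (swapping $(\mathbf{w}_j,\mathbf{p}_j)$, the assignment bits, and $\tau\mapsto 1-\tau$ so that $\bar{\tau}_1\leftrightarrow\bar{\tau}_2$ in lockstep) and verify that the time--power coupled budgets \eqref{eq: prob. general form d} and \eqref{eq: prob. general form g} remain feasible under it, bookkeeping the paper leaves implicit.
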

}

\begin{proof}
	Please see Appendix \ref{app: Constraint Reduction}.
\end{proof}
  
\begin{remark} Since $ \hat{\omega}_j $'s are binary numbers, the inequality $ \hat{\omega}_1 \leq \hat{\omega}_2 $ is not equivalent to $ \boldsymbol{\omega}_1 \preceq \boldsymbol{\omega}_2 $. Therefore, $ (\hat{\omega}_1, \hat{\omega}_2) $ must be searched in the area of $ \mathcal{S}_{\hat{\omega}} $ to find all possible cases of HA modes. Then, the corresponding argument $ (\boldsymbol{\omega}_1, \boldsymbol{\omega}_2) $ obtained via the inverse of $ g_\omega $ is applied to problem \eqref{eq: prob. max sum SE} to derive a subproblem without the binary matrix variable $ \boldsymbol{\Omega} $.
\end{remark}

{\hili
\begin{remark} It is realized that $ \mathcal{S}_{\hat{\omega}} $ can be considered as an ordered set where a lexicographical order is applied to pairs of phases $ (\hat{\omega}_1,\hat{\omega}_2) $. This has two advantages: \textit{(i)} it requires to partly search for the optimal solution, instead of all symmetric regions; \textit{(ii)} without loss of generality, the property of lexicographical order is very useful to extend pairs of two phases to a sequence of multiple phases, while the HA mode selection is easily transformed into multiple sub-array antennas selection by replacing the set $ \mathcal{B} $ with $ \mathcal{B}_n\triangleq \{\mathrm{b}`b_{n-1}\dots b_0|b_0,b_1,\dots,b_{n-1}\in\{0,1\}\} $.
\end{remark}
}

Next, we deal with the binary nature of ($\boldsymbol{\alpha}_{\ell}^{\ul},\boldsymbol{\alpha}_{k}^{\dl}$) in \eqref{eq: prob. general form j}. For $ \boldsymbol{\alpha} \triangleq \bigl[ [\boldsymbol{\alpha}_{\ell}^{\ul}]_{\ell\in \mathcal{L}}^H\; [\boldsymbol{\alpha}_{k}^{\dl}]_{k\in \mathcal{K}}^H \bigr] $ and  $ \mathbf{R} \triangleq [ \mathbf{r}_{\ul}^H\;\mathbf{r}_{\dl}^H]^H $, with $ \mathbf{r}_{\ul}=\bigl[\bar{\tau}_1 \ln\bigl(1+\gamma_{\ell}^{\ul}(\mathbf{w}_1, \mathbf{p}_1,\boldsymbol{\omega}_1)\bigl) $  $ \bar{\tau}_2 \ln\bigl(1+\gamma_{\ell}^{\ul}(\mathbf{w}_2, \mathbf{p}_2,\boldsymbol{\omega}_2)\bigr)\bigr]_{\ell\in\mathcal{L}} $ and $ \mathbf{r}_{\dl}=\bigl[\bar{\tau}_1 \ln\bigl(1+\gamma_{k}^{\dl}(\mathbf{w}_1, \mathbf{p}_1,\boldsymbol{\omega}_1)\bigr) \; \bar{\tau}_2 \ln\bigl(1+\gamma_{k}^{\dl}(\mathbf{w}_2, \mathbf{p}_2,\boldsymbol{\omega}_2)\bigr)\bigr]_{k\in\mathcal{K}} $,  the SR is $ \tr(\boldsymbol{\alpha}\mathbf{R}) $. Since each element of $ \mathbf{R} $ is greater than zero, the SR is maximized if all entries of $ \boldsymbol{\alpha} $ are set to 1. Therefore, fixing $ \alpha_{\ell,j}^{\ul} $ and $ \alpha_{k,j}^{\dl} $ to 1 always provides an upper bound for an arbitrary feasible point. {\hili In other words, we assume an optimal solution with some $ \alpha_{\ell,j}^{\mathtt{u}}=0 $ and $ \alpha_{k,j}^{\mathtt{d}}=0,\; \ell \in \mathcal{L},k \in \mathcal{K}, j=1,2 $, denoted by $ \mathbf{X}^{*}=\{\boldsymbol{\Omega}^{*},\mathbf{w}^{*},\mathbf{P}^{*},\tau^{*},(\boldsymbol{\alpha}_{\ell}^{\mathtt{u}})^{*},(\boldsymbol{\alpha}_{k}^{\mathtt{d}})^{*}|\;\text{some } \alpha_{\ell,j}^{\mathtt{u}}=0 \;\&\; \alpha_{k,j}^{\mathtt{d}}=0\} $. Let $ \mathbf{X}^{*}|_{\alpha=1} $ be $ \mathbf{X}^{*}$ with the values of all $ \alpha_{\ell,j}^{\mathtt{u}} $ and $ \alpha_{k,j}^{\mathtt{d}} $ replaced by 1. It is realized that $ \mathbf{X}^{*}|_{\alpha=1} $ is also an optimal solution for \eqref{eq: prob. max sum SE}, since the objective function value given by $ \mathbf{\hat{X}}|_{\alpha=1} $ is always equal to or larger than that given by $ \mathbf{\hat{X}} $, where $ \mathbf{\hat{X}} $ is an arbitrary feasible point, and $ \mathbf{\hat{X}}|_{\alpha=1} $ is $ \mathbf{\hat{X}} $ with the values of all $ \alpha_{\ell,j}^{\mathtt{u}} $ and $ \alpha_{k,j}^{\mathtt{d}} $ replaced by 1. Intuitively, the user assignment variables aim to effectively switch users on-off, and thus direct computation of the values of user assignment variables as in \cite{Dinh:Access, Nam:TWC:Jun2015} is unnecessary and results in high complexity. As a result, we can consider the constraint \eqref{eq: prob. general form j} in problem \eqref{eq: prob. max sum SE} as feasibility by fixing $ \alpha_{\ell,j}^{\ul} $ and $ \alpha_{k,j}^{\dl} $ to 1 at the beginning of the optimization, and removing them from problem \eqref{eq: prob. max sum SE}.} After solving problem \eqref{eq: prob. max sum SE}, an expression for two cases is used for  user assignments as
\begin{equation} \label{eq: users assignments}
\alpha_{\ell,j}^{\ul}{(\text{or }\alpha_{k,j}^{\dl})} = \begin{cases}
1, & \text{if } \gamma_{\ell, j}^{\ul,(*)}{(\text{or }\gamma_{k,j}^{\dl,(*)}) } \geq \gamma_\varepsilon ,\\
0, & \text{otherwise}
\end{cases}
\end{equation}
where $ \gamma_{\ell,j}^{\ul} $ and $ \gamma_{k,j}^{\dl} $ merely represent $ \gamma_{\ell}^{\ul}\bigl(\boldsymbol{\omega}_j, \mathbf{w}_j, \mathbf{p}_j\bigr) $ and $ \gamma_{k}^{\dl}\bigl(\boldsymbol{\omega}_j, \mathbf{w}_j, \mathbf{p}_j\bigr) $, respectively. $ \gamma_\varepsilon $ is a threshold defined such that if the obtained SINR optimal values, $ \gamma_{\ell, j}^{\ul,(*)} {(\text{or }\gamma_{k, j}^{\dl,(*)})}  $, $ \forall \ell \in \mathcal{L}, \forall k\in \mathcal{K}, j=1,2 $, are lower than $ \gamma_\varepsilon $, they are small enough to be negligible and thus $ \alpha_{\ell,j}^{\ul}{(\text{or }\alpha_{k,j}^{\dl})} $ is set to 0.

By fixing ($ \boldsymbol{\Omega} $, $ \boldsymbol{\alpha}_{\ell}^{\ul} $, $ \boldsymbol{\alpha}_{k}^{\dl} $) as previously discussed, the achievable rates for $\ULU$ and $ \DLU $ are equivalently expressed as $ R_{\ell}^{\ul}\bigl(\mathbf{w}, \mathbf{P},\tau|\boldsymbol{\Omega}\bigr) = \sum\nolimits_{j=1}^{2} \bar{\tau}_j \ln \bigl(1+\gamma_{\ell}^{\ul}(\mathbf{w}_j, \mathbf{p}_j|\boldsymbol{\omega}_j)\bigr) $ and $ R_{k}^{\dl}\bigl(\mathbf{w}, \mathbf{P},\tau|\boldsymbol{\Omega}\bigr) = \sum\nolimits_{j=1}^{2} \bar{\tau}_j \ln\bigl(1+\gamma_{k}^{\dl}(\mathbf{w}_j, \mathbf{p}_j|\boldsymbol{\omega}_j)\bigr) $.
For a given value of $ \boldsymbol{\Omega} \in \mathcal{S}_{\boldsymbol{\Omega}} $,  problem  \eqref{eq: prob. max sum SE}  can be  decoupled into the following subproblems:
\begin{subequations} \label{eq: prob. max sum SE 1}
	\vspace{7pt}
	\begin{IEEEeqnarray}{cl}
		\underset{\mathbf{w}, \mathbf{P},\tau}{\text{maximize}} & \quad \sum\nolimits_{\ell=1}^{L}R_{\ell}^{\ul}\bigl(\mathbf{w}, \mathbf{P},\tau|\boldsymbol{\Omega}\bigr) + \sum\nolimits_{k=1}^{K}R_{k}^{\dl}\bigl(\mathbf{w}, \mathbf{P},\tau|\boldsymbol{\Omega}\bigr), \qquad \\
		\text{subject to} 
		& \quad \eqref{eq: prob. general form d}-\eqref{eq: prob. general form i}, \\
		& \quad R_{\ell}^{\ul}\bigl(\mathbf{w}, \mathbf{P},\tau|\boldsymbol{\Omega}\bigr) \geq \bar{R}_\ell^{\ul}, \; \forall \ell \in \mathcal{L}, \label{eq: prob. max sum SE 1 c} \\
		& \quad R_{k}^{\dl}\bigl(\mathbf{w}, \mathbf{P},\tau|\boldsymbol{\Omega}\bigr) \geq \bar{R}_k^{\dl},\; \forall k \in \mathcal{K}. \label{eq: prob. max sum SE 1 d}
	\end{IEEEeqnarray}
\end{subequations}

\subsection{Bound Tightening for Power Constraints}
We now exploit the properties of HA selection to tighten  bounds for power constraints \{\eqref{eq: prob. general form d}, \eqref{eq: prob. general form e}, \eqref{eq: prob. general form g},  \eqref{eq: prob. general form h}\}, so that the search region containing the optimal solution is downsized significantly. {\hili For instance, if a certain HA is set to Rx, its effective channel to all DL users is obviously inactive. Intuitively, the corresponding beamforming vector can be arbitrarily chosen in a restricted region while holding the optimality. Therefore, the following lemma is derived to limit the search region as it dynamically tightens the bounds for power constraints.}
\begin{lemma} \label{lem: Bound Tightening}
	Let $ \mathbf{w}_{k,i,j} $ be the $ i $-th half of beamforming vector $ \mathbf{w}_{k,j} $. For a subproblem with $ \omega_{ij}\; (i,j=1,2)$ being set to 0, $ \mathbf{w}_{k,i,j} $ can be fixed at a certain value without any effect on the optimal value. \thmend
\end{lemma}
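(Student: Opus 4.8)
The plan is to show that the ``certain value'' in the statement is simply $\mathbf{0}$: I would argue that, whenever $\omega_{ij}=0$, zeroing the sub-vector $\mathbf{w}_{k,i,j}$ leaves every SINR unchanged while only relaxing the power budget, so no optimality is lost. The whole argument rests on the block structure $\boldsymbol{\Lambda}_j=\diag(\boldsymbol{\omega}_j)\otimes\mathbf{I}_N$. Writing $\mathbf{w}_{k,j}=[\mathbf{w}_{k,1,j}^H\;\mathbf{w}_{k,2,j}^H]^H$ with each half in $\mathbb{C}^{N\times1}$, the Kronecker form gives $\boldsymbol{\Lambda}_j=\diag(\omega_{1j}\mathbf{I}_N,\omega_{2j}\mathbf{I}_N)$ and hence $\boldsymbol{\Lambda}_j\mathbf{w}_{k,j}=[\omega_{1j}\mathbf{w}_{k,1,j}^H\;\omega_{2j}\mathbf{w}_{k,2,j}^H]^H$; so when $\omega_{ij}=0$ the $i$-th block of $\boldsymbol{\Lambda}_j\mathbf{w}_{k,j}$ vanishes identically for every $k$, and $\boldsymbol{\Lambda}_j\mathbf{w}_{k,j}$ carries no dependence on $\mathbf{w}_{k,i,j}$.

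Next I would track every occurrence of $\mathbf{w}_{k,j}$ in the two SINRs and confirm that it always enters through the product $\boldsymbol{\Lambda}_j\mathbf{w}_{k,j}$. In the DL SINR \eqref{eq: SINR DLUs} each inner product is $(\mathbf{\tilde{h}}_{k,j}^\dl)^H\mathbf{w}_{k',j}=(\mathbf{h}_{k}^\dl)^H\boldsymbol{\Lambda}_j\mathbf{w}_{k',j}$ using $\boldsymbol{\Lambda}_j^H=\boldsymbol{\Lambda}_j$, while in the UL SINR \eqref{eq: SINR ULUs} the beamformers appear only in the SI term of $\boldsymbol{\Psi}_{\ell,j}$ through $\mathbf{\tilde{G}}_j^H\mathbf{w}_{k,j}=\mathbf{\bar{G}}_{\SI}^H\boldsymbol{\Lambda}_j\mathbf{w}_{k,j}$. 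By the previous step both are invariant to $\mathbf{w}_{k,i,j}$ once $\omega_{ij}=0$, so the numerator and the denominator $\psi_{k,j}$ of \eqref{eq: SINR DLUs}, the matrix $\boldsymbol{\Psi}_{\ell,j}$ in \eqref{eq: SINR ULUs}, and therefore the rate functions $R_\ell^\ul$ and $R_k^\dl$, are all independent of $\mathbf{w}_{k,i,j}$.

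I would then observe that the only surviving appearance of $\mathbf{w}_{k,i,j}$ is in the power terms $\|\mathbf{w}_j\|^2=\sum_k\bigl(\|\mathbf{w}_{k,1,j}\|^2+\|\mathbf{w}_{k,2,j}\|^2\bigr)$ of constraints \eqref{eq: prob. general form g} and \eqref{eq: prob. general form h}, where it enters additively with a ``$\leq$'' sense. Hence, given any feasible point of subproblem \eqref{eq: prob. max sum SE 1}, replacing $\mathbf{w}_{k,i,j}$ by $\mathbf{0}$ leaves the objective and the QoS constraints \eqref{eq: prob. max sum SE 1 c}--\eqref{eq: prob. max sum SE 1 d} untouched and can only decrease $\|\mathbf{w}_j\|^2$; feasibility is preserved and the objective value is unchanged. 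Consequently the optimum is attained with $\mathbf{w}_{k,i,j}=\mathbf{0}$, the asserted fixed value, which removes this $N$-dimensional sub-vector from the search space.

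The conceptually delicate part is the second step: one must be sure $\mathbf{w}_{k,j}$ has no hidden dependence bypassing $\boldsymbol{\Lambda}_j$, in particular that the self-interference enters exclusively via $\mathbf{\tilde{G}}_j=\boldsymbol{\Lambda}_j\mathbf{\bar{G}}_{\SI}$ and not through a separate copy of the beamformer. Once this bookkeeping is verified, the conclusion is immediate, the remainder being only the elementary monotonicity fact that a norm-type upper bound is slackened by zeroing a free sub-vector.
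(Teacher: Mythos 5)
Your proposal is correct and follows essentially the same route as the paper's own proof: both exploit the block structure of $\boldsymbol{\Lambda}_j$ to show that when $\omega_{ij}=0$ the sub-vector $\mathbf{w}_{k,i,j}$ affects no SINR and survives only additively inside the power constraints, so it can be fixed (the paper then argues zero is the natural choice) without changing the optimal value. If anything, your bookkeeping is slightly more complete than the paper's, which invokes only the DL SINR \eqref{eq: SINR DLUs} and leaves implicit the fact that the UL SINR's self-interference term also enters exclusively through $\mathbf{\tilde{G}}_j^H\mathbf{w}_{k,j}=\mathbf{\bar{G}}_{\SI}^H\boldsymbol{\Lambda}_j\mathbf{w}_{k,j}$ — a point you verify explicitly.
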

\begin{proof}
	Please see Appendix \ref{app: Bound Tightening}.
\end{proof}

Using Lemma \ref{lem: Bound Tightening}, we can tighten the bound of $ \mathbf{w}_{k,i,j} $ to reduce the search region when $ \omega_{ij}=0 $. To preserve the optimal value, the feasible set for other beamforming vectors needs to be kept as large as possible, meaning that $ \|\mathbf{w}_{k,i,j}^*\|^2=0 $. For further analysis, we consider the viewpoint that the beamforming is transmitted in one phase, i.e., switching to the HD mode. In the case of the proposed model, the power budget $ P_t^{\text{max}} $ is issued within one time interval of the transmission block. Meanwhile, a traditional HD system switches between UL and DL modes, and each mode occupies one transmission block, meaning that $ P_t^{\text{max}} $ is issued for every two transmission blocks (UL and DL). To make it general, the upper bound of the beamforming vectors in the proposed scheme is tightened through the time scale. In particular, for $ \chi_j \triangleq \omega_{1j}\; \mathbf{nor}\; \omega_{2j}, j=1,2 $, the DL power constraints \eqref{eq: prob. general form g} and \eqref{eq: prob. general form h} can be rewritten~as
\begin{subequations} \label{eq: DL power constraints}
	\vspace{-15pt}
	\begin{align}
		\sum\nolimits_{j=1}^{2}\tau_j^{\dl} \|\mathbf{w}_j\|^2 &\leq P_t^{\text{max}}, \label{eq: DL power constraints a} \\
		\sum\nolimits_{k=1}^{K} \|\mathbf{w}_{k,i,j}\|^2 &\leq P_t^{\infty}\omega_{ij},\; i,j=1,2 \label{eq: DL power constraints b}
	\end{align}
\end{subequations}
where $ \tau_j^{\dl} \triangleq \bar{\tau}_j+\chi_{3-j}\bar{\tau}_{3-j}, j=1,2 $. It is clear that $ \|\mathbf{w}_{k,i,j}\|^2 $ is automatically upper bounded by 0 when $ \omega_{ij}=0 $, and the number of variables for the beamforming vector is thus reduced to half per phase. Similarly, the power constraints for $\ULU$ in \eqref{eq: prob. general form d} and \eqref{eq: prob. general form e} can be expressed as
\begin{subequations} \label{eq: UL power constraints}
	\begin{align}
		& \sum\nolimits_{j=1}^{2}\tau_j^{\ul} p_{\ell,j}^2 \leq P_{\ell}^{\text{max}},\; \forall \ell \in \mathcal{L}, \label{eq: UL power constraints a} \\
		& p_{\ell,j}^2 \leq P_{\ell}^{\infty}\bar{\beta}_{j},\; \forall \ell \in \mathcal{L},j=1,2 \label{eq: UL power constraints b}
	\end{align}
\end{subequations}
where $ \tau_j^{\ul} \triangleq \bar{\tau}_j+\beta_{3-j}\bar{\tau}_{3-j}, j=1,2 $, with $ \beta_j \triangleq \omega_{1j}\; \mathbf{and}\; \omega_{2j}, $ and $ \bar{\beta}_{j} $ is the bit complement of $ \beta_{j} $.

\begin{remark} The power allocation for UL users is upper bounded by 0 only if both HAs are switched to the Tx mode. Therefore, constraints \eqref{eq: UL power constraints b} use $ \bar{\beta}_{j} $ on the right-hand side (RHS), instead of $ \omega_{ij} $ in \eqref{eq: DL power constraints b}. With the appearance of $ \beta_j $ and $ \chi_j $, the HD-switching mode achieves at least the sum rate of the traditional HD system because the proposed method can dynamically set the transmission time for $ \tau_j^{\ul} $ and $ \tau_j^{\dl} $ under  given values of $ \beta_j $ and $ \chi_j $, while the traditional HD system sets the same interval for both UL and DL. 
\end{remark}

Finally, subproblem \eqref{eq: prob. max sum SE 1} is reformulated as
\begin{subequations} \label{eq: prob. max sum SE 1'}
	\vspace{7pt}
	\begin{IEEEeqnarray}{cl}
		\underset{\mathbf{w}, \mathbf{P},\tau}{\text{maximize}} & \quad \sum\nolimits_{\ell=1}^{L}R_{\ell}^{\ul}\bigl(\mathbf{w}, \mathbf{P},\tau|\boldsymbol{\Omega}\bigr) + \sum\nolimits_{k=1}^{K}R_{k}^{\dl}\bigl(\mathbf{w}, \mathbf{P},\tau|\boldsymbol{\Omega}\bigr), \qquad \label{eq: prob. max sum SE 1':objective}\\
		\text{subject to} 
		& \quad \eqref{eq: prob. general form f}, \eqref{eq: prob. general form i}, \eqref{eq: prob. max sum SE 1 c}, \eqref{eq: prob. max sum SE 1 d}, \eqref{eq: DL power constraints}, \eqref{eq: UL power constraints}.
	\end{IEEEeqnarray}
\end{subequations}
For subproblem \eqref{eq: prob. max sum SE 1'} at hand, the power allocation for the optimization in $(\mathbf{w}, \mathbf{P},\tau)$ still remains highly non-convex. Specifically, the objective \eqref{eq: prob. max sum SE 1':objective} is a non-concave function while constraints \eqref{eq: prob. max sum SE 1 c}, \eqref{eq: prob. max sum SE 1 d}, \eqref{eq: DL power constraints a} and \eqref{eq: UL power constraints a} are non-convex.

\subsection{Proposed Convex Approximation-based Iterative Algorithm}
Herein, we will solve subproblem \eqref{eq: prob. max sum SE 1'} via a sequence of convex programs that provides the minorant solution. As in \cite{Tuybook}, a function $ \tilde{h} $ is a convex majorant (or concave minorant, respectively) of a function $ h $ at a point $ \bar{x}\in \mathbf{dom}\; h $ iff $ \tilde{h}(\bar{x})=h(\bar{x}) $ and $ \tilde{h}(x)\geq h(x) $ (or $ \tilde{h}(x)\leq h(x) $, respectively), $ \forall x \in \mathbf{dom}\;h $. According to this definition, subproblem \eqref{eq: prob. max sum SE 1'} can be convexified as follows.

Let us start  by introducing the alternative variables  $ \mu_j, j=1,2$  as
	\begin{IEEEeqnarray}{rCl}
		\bar{\tau}_2 & = &1/\mu_2, \label{eq: altering variables tau mu2}\\
		\bar{\tau}_1 & = &1 - 1/\mu_2 \geq 1/\mu_1. \label{eq: altering variables tau mu1} 
		\vspace{-5pt}
	\end{IEEEeqnarray}
For $ 0<\bar{\tau}_j<1 $, the following additional linear and convex constraints are imposed:
	\begin{IEEEeqnarray}{rCl}
		 \mu_j > 1, j=1,2,&&\label{eq: alternative var. constraint 1}\\
		 1/\mu_1 + 1/\mu_2 \leq 1&&. \label{eq: alternative var. constraint 2}
		\end{IEEEeqnarray}
To handle the non-concave function $ R_{\ell}^{\ul}\bigl(\mathbf{w}, \mathbf{P},\tau|\boldsymbol{\Omega}\bigr)$ in \eqref{eq: prob. max sum SE 1':objective}, we can generally examine $R_{\ell,j}^{\ul}(\mathbf{w}_j,\mathbf{p}_j, \mu_j)\triangleq \frac{1}{\mu_j}\ln\bigl(1+\gamma_{\ell}^{\ul}\bigl(\mathbf{w}_j, \mathbf{p}_j|\boldsymbol{\omega}_j\bigr)\bigr) $ instead of $ \bar{\tau}_j \ln\bigl(1+\gamma_{\ell}^{\ul}\bigl(\mathbf{w}_j, \mathbf{p}_j|\boldsymbol{\omega}_j\bigr)\bigr) $. By the Schur complement, the epigraph of $ \gamma_{\ell}^{\ul}\bigl(\mathbf{w}_j, \mathbf{p}_j|\boldsymbol{\omega}_j\bigr) $ can be expressed with a linear matrix inequality  \cite{Boyd-04-B}. In the spirit of \cite{Dinh:TCOMM:2017}, at a feasible point $(\mathbf{w}_j^{(\kappa)}, \mathbf{p}_j^{(\kappa)},\mu_j^{(\kappa)})$ found at  iteration $\kappa$ of the proposed iterative algorithm, a concave quadratic minorant of $ R_{\ell,j}^{\ul}(\mathbf{w}_j, \mathbf{p}_j,\mu_j)$, denoted by $\mathcal{R}_{\ell, j}^{\ul,(\kappa)}$, is 
	\begin{IEEEeqnarray} {cl} \label{eq: convexifying UL rate}
		 R_{\ell,j}^{\ul}(\mathbf{w}_j, \mathbf{p}_j,\mu_j) \geq  -\varpi(z_{\ell, j}^{(\kappa)},\mu_j^{(\kappa)})\mu_j  + \zeta(z_{\ell, j}^{(\kappa)},\mu_j^{(\kappa)}) - \frac{\varphi_{\ell}^{(\kappa)}(\mathbf{w}_j, \mathbf{p}_j|\boldsymbol{\omega}_j)}{\mu_j^{(\kappa)}} +  \frac{2z_{\ell, j}^{(\kappa)}}{\mu_j^{(\kappa)}p_{\ell,j}^{(\kappa)}}p_{\ell,j}:= \mathcal{R}_{\ell, j}^{\ul,(\kappa)}\qquad
	\end{IEEEeqnarray}
	where 
	\begin{align}
	&	z_{\ell, j}^{(\kappa)}\triangleq\gamma_{\ell}^{\ul}(\mathbf{w}_j^{(\kappa)}, \mathbf{p}_j^{(\kappa)}|\boldsymbol{\omega}_j),\; \varpi(z_{\ell, j}^{(\kappa)},\mu_j^{(\kappa)})\triangleq \frac{\ln\bigl(1+z_{\ell, j}^{(\kappa)}\bigr)}{(\mu_j^{(\kappa)})^2},\;  
		\zeta(z_{\ell, j}^{(\kappa)},\mu_j^{(\kappa)})  \triangleq \frac{2\ln\bigl(1+z_{\ell, j}^{(\kappa)}\bigr) -z_{\ell, j}^{(\kappa)}}{\mu_j^{(\kappa)}}, \nonumber \\
&		\varphi_{\ell}^{(\kappa)}(\mathbf{w}_j, \mathbf{p}_j|\boldsymbol{\omega}_j)  \triangleq
		\ds\tr\Bigl(\bigl(p_{\ell,j}^2\mathbf{\tilde{h}}_{\ell,j}^{\ul}(\mathbf{\tilde{h}}_{\ell,j}^{\ul})^H + \boldsymbol{\Psi}_{\ell,j}\bigl)\bigl( (\boldsymbol{\Psi}_{\ell,j}^{(\kappa)})^{-1} - (\boldsymbol{\Psi}_{(\ell-1),j}^{(\kappa)})^{-1}\bigr)\Bigr),   \nonumber \\
		& \boldsymbol{\Psi}_{\ell,j}^{(\kappa)} \triangleq  \sum\nolimits_{\ell'=\ell+1}^{L} \bigl(p_{\ell',j}^{(\kappa)}\bigr)^2 \mathbf{\tilde{h}}_{\ell',j}^{\ul}\bigl(\mathbf{\tilde{h}}_{\ell',j}^{\ul}\bigr)^H + \rho^2\sum\nolimits_{k=1}^{K} \mathbf{\tilde{G}}_{j}^H \mathbf{w}_{k,j}^{(\kappa)} \bigl(\mathbf{w}_{k,j}^{(\kappa)}\bigr)^H\mathbf{\tilde{G}}_{j} +\sigma_{\mathtt{U}}^2\mathbf{I}.\nonumber
	\end{align}

Next, we address the non-concave function $ R_{k}^{\dl}\bigl(\mathbf{w}, \mathbf{P},\tau|\boldsymbol{\Omega}\bigr) $. For $R_{k,j}^{\dl}(\mathbf{w}_j,\mathbf{p}_j, $ $ \mu_j)\triangleq \frac{1}{\mu_j}\ln\bigl(1+\gamma_{k}^{\dl}(\mathbf{w}_j, \mathbf{p}_j|\boldsymbol{\omega}_j)\bigr) $, it follows that
\begin{equation}\label{eq:dltrans}
R_{k,j}^{\dl}(\mathbf{w}_j,\mathbf{p}_j,  \mu_j) \geq \frac{1}{\mu_j}\ln\bigl(1+\frac{1}{\vartheta_{k,j}}\bigr)
\end{equation}
which imposes the following constraints.
\begin{subequations} \label{eq: vartheta constraints:a}
	\begin{gather}
	\gamma_{k}^{\dl}(\mathbf{w}_j, \mathbf{p}_j|\boldsymbol{\omega}_j) \geq \frac{1}{\vartheta_{k,j}}\; \forall k\in \mathcal{K},\; j=1,2, \label{eq: SOC constraint for SINR DLUs:a} \\
	\vartheta_{k,j}>0,\; \forall k\in \mathcal{K},\; j=1,2 \label{eq: positive vartheta constraint}
	\end{gather}
\end{subequations}
where $ \vartheta_{k,j} $ are  newly introduced variables. Note that constraint \eqref{eq: SOC constraint for SINR DLUs:a} will hold with equality at optimum leading to the equality of \eqref{eq:dltrans}. In addition, constraint \eqref{eq: SOC constraint for SINR DLUs:a} is convexified as
\begin{equation} \label{eq: vartheta constraints}
	\psi_{k,j} \leq \vartheta_{k,j} \gamma_{\text{s}}^{(\kappa)}\bigl(\mathbf{w}_{k,j}\bigr),\; \forall k\in \mathcal{K},\; j=1,2
\end{equation}
over the trust region
\begin{gather}
 \gamma_{\text{s}}^{(\kappa)}(\mathbf{w}_{k,j}) \triangleq 2\Re\{(\mathbf{\tilde{h}}_{k,j}^\dl)^H\mathbf{w}_{k,j}^{(\kappa)}\}\Re\{(\mathbf{\tilde{h}}_{k,j}^\dl)^H \mathbf{w}_{k,j}\}  - \bigl(\Re\{(\mathbf{\tilde{h}}_{k,j}^\dl)^H \mathbf{w}_{k,j}^{(\kappa)}\}\bigr)^2  > 0, \; \forall k, j\label{eq: positive effective channel constraint} 
\end{gather}
where $|(\mathbf{\tilde{h}}_{k,j}^\dl)^H \mathbf{w}_{k,j}|^2$ is innerly approximated by $\gamma_{\text{s}}^{(\kappa)}(\mathbf{w}_{k,j})$ due to its convexity \cite{Dinh:TCOMM:2017}. It can be seen that constraint \eqref{eq: vartheta constraints} is convex and also admits the second-order cone (SOC) representation. Furthermore, the concave  minorant of $R_{k,j}^{\dl}(\mathbf{w}_j,\mathbf{p}_j, $ $ \mu_j)$ in \eqref{eq:dltrans} at the $ (\kappa+1) $-th iteration, denoted by $\mathcal{R}_{k, j}^{\dl, (\kappa)}$, is found as
\begin{align} \label{eq: convexifying DL rate}
R_{k,j}^{\dl}(\mathbf{w}_j,\mathbf{p}_j,  \mu_j) & \geq \nu(\vartheta_{k,j}^{(\kappa)},\mu_j^{(\kappa)}) + \xi(\vartheta_{k,j}^{(\kappa)},\mu_j^{(\kappa)})\vartheta_{k,j} + \lambda(\vartheta_{k,j}^{(\kappa)},\mu_j^{(\kappa)})\mu_j := \mathcal{R}_{k, j}^{\dl, (\kappa)}
\end{align}
where $ \nu(\vartheta_{k,j}^{(\kappa)},\mu_j^{(\kappa)}) $, $ \xi(\vartheta_{k,j}^{(\kappa)},\mu_j^{(\kappa)}) $ and $ \lambda(\vartheta_{k,j}^{(\kappa)},\mu_j^{(\kappa)}) $ are respectively defined as
\begin{align}
\nu(\vartheta_{k,j}^{(\kappa)},\mu_j^{(\kappa)}) & \triangleq \frac{2}{\mu_j^{(\kappa)}}\ln\Bigl(1+\frac{1}{\vartheta_{k,j}^{(\kappa)}}\Bigr) + \frac{1}{\bigl(\vartheta_{k,j}^{(\kappa)}+1\bigr)\mu_j^{(\kappa)}}, \nonumber \quad
\xi(\vartheta_{k,j}^{(\kappa)},\mu_j^{(\kappa)}) \triangleq -\frac{1}{\vartheta_{k,j}^{(\kappa)}\bigl(\vartheta_{k,j}^{(\kappa)}+1\bigr)\mu_j^{(\kappa)}}, \nonumber \\
\lambda(\vartheta_{k,j}^{(\kappa)},\mu_j^{(\kappa)}) & \triangleq - \frac{1}{\bigl(\mu_j^{(\kappa)}\bigr)^2}\ln\Bigl(1+\frac{1}{\vartheta_{k,j}^{(\kappa)}}\Bigr). \nonumber
\end{align}
Obviously, we can iteratively replace $R_{\ell}^{\ul}\bigl(\mathbf{w}, \mathbf{P},\tau|\boldsymbol{\Omega}\bigr)$ and $R_{k}^{\dl}\bigl(\mathbf{w}, \mathbf{P},\tau|\boldsymbol{\Omega}\bigr)$ by $\sum_{j=1}^2 \mathcal{R}_{\ell,j}^{\ul, (\kappa)} $ and $\sum_{j=1}^2  \mathcal{R}_{k,j}^{\dl, (\kappa)} $ to arrive at the convex constraints for \eqref{eq: prob. max sum SE 1 c} and \eqref{eq: prob. max sum SE 1 d}, respectively.

Finally, let us convexify the power constraints \eqref{eq: DL power constraints a} and \eqref{eq: UL power constraints a}. By substituting \eqref{eq: altering variables tau mu2} into \eqref{eq: DL power constraints a} and \eqref{eq: UL power constraints a}, we obtain two new constraints as
	\begin{subequations} \label{eq: power constraints - altered variables}
	\begin{gather} 
		\mathcal{T}_1(\chi_2)\|\mathbf{w}_1\|^2  + \mathcal{T}_2(\chi_1)\|\mathbf{w}_2\|^2 \leq P_t^{\text{max}} + f_{\text{c}}^{d}\bigl(\mathbf{w}, \mu_2 \bigr), \label{eq: power constraints - altered variables - downlink} \\
		\mathcal{T}_1(\beta_2)p_{\ell,1}^2 + \mathcal{T}_2(\beta_1) p_{\ell,2}^2 \leq P_{\ell}^{\text{max}} + f_{\text{c}}^{u}\bigl(\mathbf{\hat{p}}_{\ell}, \mu_2 \bigr), \; \forall \ell \in \mathcal{L} \label{eq: power constraints - altered variables - uplink}
	\end{gather}
	\end{subequations}
where $ \mathcal{T}_1(x) = \bigl(1+\frac{x}{\mu_2}\bigr) $ and $ \mathcal{T}_2(x) = \bigl(\frac{1}{\mu_2}+x\bigr) $, while $ \chi_j $ and $ \beta_j $, $ j=1,2 $ are  constant, depending on a fixed value of $ \boldsymbol{\Omega} $. We define $ f_{\text{c}}^{d}\bigl(\mathbf{w}, \mu_2 \bigr) \triangleq \frac{1}{\mu_2}\|\mathbf{w}_1\|^2 + \frac{\chi_1}{\mu_2}\|\mathbf{w}_2\|^2 $ and $ f_{\text{c}}^{u}\bigl(\mathbf{\hat{p}}_{\ell}, \mu_2 \bigr) \triangleq \frac{1}{\mu_2}p_{\ell,1}^2 +  \frac{\beta_1}{\mu_2}p_{\ell,2}^2 $, where $ \mathbf{\hat{p}}_{\ell} \in \mathbb{R}^{2\times1} $ is the transpose of the $ \ell $-th row of $ \mathbf{P} $. We remark that both left-hand sides of \eqref{eq: power constraints - altered variables} are convex while their RHSs are quadratic-over-linear functions. By applying the first-order approximation to $ f_{\text{c}}^{d}\bigl(\mathbf{w}, \mu_2 \bigr) $ around the point $(\mathbf{w}^{(\kappa)}, \mu_2^{(\kappa)} \bigr)$ and $ f_{\text{c}}^{u}\bigl(\mathbf{\hat{p}}_{\ell}, \mu_2 \bigr) $ around the point $\bigl(\mathbf{\hat{p}}_{\ell}^{(\kappa)}, \mu_2^{(\kappa)} \bigr)$, the functions $ f_{\text{c}}^{d}\bigl(\mathbf{w}, \mu_2 \bigr) $  and  $ f_{\text{c}}^{u}\bigl(\mathbf{\hat{p}}_{\ell}, \mu_2 \bigr)  $ are innerly approximated as
\begingroup\allowdisplaybreaks
	\begin{subequations} \label{eq: appox. power contraints}
	\begin{align}
		f_{\text{c}}^{d}\bigl(\mathbf{w}, \mu_2 \bigr) & \geq f_{\text{c}}^{d}\bigl(\mathbf{w}^{(\kappa)}, \mu_2^{(\kappa)} \bigr) + \Bigl<\nabla f_{\text{c}}^{d}\bigl(\mathbf{w}^{(\kappa)}, \mu_2^{(\kappa)}\bigr),\bigl(\mathbf{w}, \mu_2 \bigr)-\bigl(\mathbf{w}^{(\kappa)}, \mu_2^{(\kappa)} \bigr) \Bigr> \nonumber \\
		& = 2\frac{\Re\{\bigl(\mathbf{w}_1^{(\kappa)}\bigr)^H\mathbf{w}_1\}}{\mu_2^{(\kappa)}} + 2\frac{\chi_1\Re\{\bigl(\mathbf{w}_2^{(\kappa)}\bigr)^H\mathbf{w}_2\}}{\mu_2^{(\kappa)}} - \frac{f_{\text{c}}^{d}\bigl(\mathbf{w}^{(\kappa)}, \mu_2^{(\kappa)} \bigr)}{\mu_2^{(\kappa)}}\mu_2 \nonumber \\ 
		& := \hat{f}_{\text{c}}^{d,(\kappa)}\bigl(\mathbf{w}, \mu_2 \bigr) \label{eq: appox. power contraint - downlink} \\
		f_{\text{c}}^{u}\bigl(\mathbf{\hat{p}}_{\ell}, \mu_2 \bigr) & \geq f_{\text{c}}^{u}\bigl(\mathbf{\hat{p}}_{\ell}^{(\kappa)}, \mu_2^{(\kappa)} \bigr) + \Bigl<\nabla f_{\text{c}}^{u}\bigl(\mathbf{\hat{p}}_{\ell}^{(\kappa)}, \mu_2^{(\kappa)} \bigr),\bigl(\mathbf{\hat{p}}_{\ell}, \mu_2 \bigr)-\bigl(\mathbf{\hat{p}}_{\ell}^{(\kappa)}, \mu_2^{(\kappa)} \bigr) \Bigr> \nonumber \\
		& = 2\frac{p_{\ell,1}^{(\kappa)}}{\mu_2^{(\kappa)}}p_{\ell,1} + 2\frac{\beta_1 p_{\ell,2}^{(\kappa)}}{\mu_2^{(\kappa)}}p_{\ell,2} - \frac{f_{\text{c}}^{u}\bigl(\mathbf{\hat{p}}_{\ell}^{(\kappa)}, \mu_2^{(\kappa)} \bigr)}{\mu_2^{(\kappa)}}\mu_2 := \hat{f}_{\text{c}}^{u,(\kappa)}\bigl(\mathbf{\hat{p}}_{\ell}, \mu_2 \bigr). \label{eq: appox. power contraint - uplink}
	\end{align}
	\end{subequations}
\endgroup
As a result,  the convex constraints for \eqref{eq: power constraints - altered variables} read as
	\begin{subequations} \label{eq: power constraints approx. - altered variables}
		\begin{gather}
		\mathcal{T}_1(\chi_2)\|\mathbf{w}_1\|^2  + \mathcal{T}_2(\chi_1)\|\mathbf{w}_2\|^2 \leq P_t^{\text{max}} + \hat{f}_{\text{c}}^{d,(\kappa)}\bigl(\mathbf{w}, \mu_2 \bigr), \label{eq: power constraints approx. - altered variables - downlink} \\
		\mathcal{T}_1(\beta_2)p_{\ell,1}^2 + \mathcal{T}_2(\beta_1) p_{\ell,2}^2 \leq P_{\ell}^{\text{max}} + \hat{f}_{\text{c}}^{u,(\kappa)}\bigl(\mathbf{\hat{p}}_{\ell}, \mu_2 \bigr). \label{eq: power constraints approx. - altered variables - uplink}
		\end{gather}
	\end{subequations}

In summary, the convex program providing a minorant maximization for \eqref{eq: prob. max sum SE 1'} solved at  iteration $(\kappa+1)$ of the proposed algorithm is given by
\begin{subequations} \label{eq: prob. max sum SE 2}
	\begin{IEEEeqnarray}{cl}
		\underset{\mathbf{w}, \mathbf{P},\boldsymbol{\mu}, \boldsymbol{\vartheta}}{\text{maximize}} & \quad \mathcal{R}_{\Sigma}^{(\kappa+1)} \triangleq \sum\nolimits_{\ell=1}^{L}\sum\nolimits_{j=1}^{2}\mathcal{R}_{\ell, j}^{\ul, (\kappa)} + \sum\nolimits_{k=1}^{K}\sum\nolimits_{j=1}^{2}\mathcal{R}_{k, j}^{\dl, (\kappa)} \label{eq: prob. max sum SE 2 a} \\
		\text{subject to} 
		& \quad \eqref{eq: prob. general form f}, \eqref{eq: DL power constraints b}, \eqref{eq: UL power constraints b}, \eqref{eq: alternative var. constraint 1}, \eqref{eq: alternative var. constraint 2}, \eqref{eq: positive vartheta constraint},  \eqref{eq: vartheta constraints}, \eqref{eq: positive effective channel constraint}, \eqref{eq: power constraints approx. - altered variables}, \label{eq: prob. max sum SE 2 b} \\
		& \quad \sum\nolimits_{j=1}^{2}\mathcal{R}_{\ell, j}^{\ul, (\kappa)} \geq \bar{R}_\ell^{\ul}, \forall \ell \in \mathcal{L}, \label{eq: prob. max sum SE 2 c} \\
		& \quad \sum\nolimits_{j=1}^{2}\mathcal{R}_{k, j}^{\dl, (\kappa)} \geq \bar{R}_k^{\dl}, \forall k \in \mathcal{K} \label{eq: prob. max sum SE 2 d}
	\end{IEEEeqnarray}
\end{subequations}	
where $ \boldsymbol{\mu} \triangleq [\mu_1\; \mu_2] $ and $ \boldsymbol{\vartheta} \triangleq [\vartheta_{k,j}]_{k\in \mathcal{K}, j=1,2} $. Note that the convex program \eqref{eq: prob. max sum SE 2} can be efficiently solved per iteration by existing  solvers (e.g., MOSEK \cite{MOSEK} or SeDuMi \cite{SeDuMi:2002}) and the optimal solution for  \eqref{eq: prob. max sum SE 1'} is  one of eight cases of $ \boldsymbol{\Omega} $ in \eqref{eq: Omega matrix - HA assignment}. By comparing the eight optimal values of  subproblems, the global solution $ \bigl(\boldsymbol{\Omega}^{*}, \mathbf{W}^{*}, \mathbf{P}^{*}, \tau^*\bigr) $ for \eqref{eq: prob. max sum SE} is derived, and it corresponds to the maximum among eight optimal values. The proposed iterative algorithm for solving  problem \eqref{eq: prob. max sum SE} is briefly described in Algorithm \ref{alg: for max sum SE}.

\begin{algorithm}[t]
	\begin{algorithmic}[1]
		\fontsize{10}{12}\selectfont
		\protect\caption{Proposed Iterative Algorithm for the SR Maximization Problem \eqref{eq: prob. max sum SE}}

		\label{alg: for max sum SE}
		
		\STATE \textbf{Initialization:} Set $ \mathcal{R}_{\text{SR}}^{\max} := -\infty $, $ \varepsilon:=10^{-3} $, $ \gamma_{\varepsilon} := 10^{-3}$, $ \boldsymbol{\alpha}_{\ell}^{\ul} = \boldsymbol{\alpha}_k^{\dl} := \mathbf{1} $, and $ \bigl(\boldsymbol{\Omega}^{*}, \mathbf{w}^{*},\mathbf{P}^{*},\tau^{*}\bigr)$\footnotemark $:=\mathbf{0} $.
		
		\FOR[solving subproblem \eqref{eq: prob. max sum SE 1'}] {each $ \boldsymbol{\Omega} \in \mathcal{S}_{\boldsymbol{\Omega}} $}	
		
		\STATE\textbf{Generating an initial point:} Set $\kappa:=0$ and solve \eqref{eq: prob. max sum SE - initialization} to generate $ (\mathbf{w}^{(0)},\mathbf{P}^{(0)},\boldsymbol{\mu}^{(0)},\boldsymbol{\vartheta}^{(0)}) $.
		
		\REPEAT
		\STATE Solve \eqref{eq: prob. max sum SE 2} to obtain  $ \bigl(\mathbf{w}^{\star},\mathbf{P}^{\star},\boldsymbol{\mu}^{\star},\boldsymbol{\vartheta}^{\star}\bigr) $ and  $\mathcal{R}_{\Sigma}^{(\kappa+1)}$.
		
		\STATE Update $(\mathbf{w}^{(\kappa+1)},\mathbf{P}^{(\kappa+1)}, \boldsymbol{\mu}^{(\kappa+1)}, \boldsymbol{\vartheta}^{(\kappa+1)}) :=(\mathbf{w}^{\star},   \mathbf{P}^{\star}, \boldsymbol{\mu}^{\star}, \boldsymbol{\vartheta}^{\star} $).
		\STATE Set $ \kappa := \kappa + 1 $.
		\UNTIL Convergence
		\IF[if true, then there exists a better solution for \eqref{eq: prob. max sum SE}] {$\mathcal{R}_{\Sigma}^{(\kappa)}>\mathcal{R}_{\text{SR}}^{\max} $}
		\STATE Update $ \mathcal{R}_{\text{SR}}^{\max}:= \mathcal{R}_{\Sigma}^{(\kappa)} $ and $ \bigl(\boldsymbol{\Omega}^{*}, \mathbf{w}^{*},\mathbf{P}^{*},\tau^{*}\bigr) := \bigl(\boldsymbol{\Omega}, \mathbf{w}^{(\kappa)},\mathbf{P}^{(\kappa)},(1-1/\mu_2^{(\kappa)} ) \bigr) $.
		\STATE Update $ \boldsymbol{\alpha}_{\ell}^{\ul} $ and $ \boldsymbol{\alpha}_k^{\dl} $ as in \eqref{eq: users assignments}, and  $ R_{\Sigma} $ in \eqref{eq: prob. max sum SE a}.
		\ENDIF
		\ENDFOR
	\end{algorithmic} 
\end{algorithm}
\footnotetext{Note that $ \bigl(\boldsymbol{\Omega}^{*}, \mathbf{w}^{*},\mathbf{P}^{*},\tau^{*}\bigr) $ represents the optimal solution for \eqref{eq: prob. max sum SE}, while $ \bigl(\mathbf{w}^{\star},\mathbf{P}^{\star},\boldsymbol{\mu}^{\star},\boldsymbol{\vartheta}^{\star}\bigr) $ only denotes a per-iteration optimal solution for subproblem \eqref{eq: prob. max sum SE 2}.}

\textit{Generating an initial feasible point:} To execute the loop for solving \eqref{eq: prob. max sum SE 1'}, Algorithm \ref{alg: for max sum SE} needs to initialize a feasible point $ (\mathbf{w}^{(0)},\mathbf{P}^{(0)},\boldsymbol{\mu}^{(0)},\boldsymbol{\vartheta}^{(0)}) $ by consecutively solving a simpler problem as
\begin{subequations} \label{eq: prob. max sum SE - initialization}
	\begin{IEEEeqnarray}{cl}
		\underset{\mathbf{w}, \mathbf{P},\boldsymbol{\mu}, \boldsymbol{\vartheta}, \varrho}{\text{maximize}} & \quad \varrho \label{eq: prob. max sum SE - initialization a} \\
		\text{subject to} 
		& \quad \eqref{eq: prob. max sum SE 2 b}, \\
		& \quad \sum\nolimits_{j=1}^{2}\mathcal{R}_{\ell, j}^{\ul, (\kappa)}-\bar{R}_\ell^{\ul} \geq \varrho, \forall \ell \in \mathcal{L}, \label{eq: prob. max sum SE - initialization c} \\
		& \quad \sum\nolimits_{j=1}^{2}\mathcal{R}_{k, j}^{\dl, (\kappa)}-\bar{R}_k^{\dl} \geq \varrho, \forall k \in \mathcal{K} \label{eq: prob. max sum SE - initialization d}
	\end{IEEEeqnarray}
\end{subequations}
where $ \varrho $ is a new variable.  The goal of solving \eqref{eq: prob. max sum SE - initialization} is not to focus on maximizing $ \varrho $, but rather to find a feasible point for \eqref{eq: prob. max sum SE 2}. Simply, the break condition for generating a starting point in \eqref{eq: prob. max sum SE - initialization} is $ \varrho > 0 $.

\textit{Convergence analysis:} As in \cite{Boyd-04-B}, the proposed algorithm obtains the optimal point satisfying the Karush-Kuhn-Tucker (KKT) conditions or KKT-invexity. In fact, since Algorithm \ref{alg: for max sum SE} is based on the inner approximation, it is true that for the sequence of convex programs  in \eqref{eq: prob. max sum SE 2} every point in a feasible set $ \mathcal{F}^{(\kappa)} $ of the considered problem at  iteration $ \kappa $ is also feasible for the problem at iteration $ (\kappa+1) $, i.e., $ \mathcal{F}^{(\kappa+1)} \supseteq \mathcal{F}^{(\kappa)} $ \cite{Marks:78}. This means that $ \mathcal{F}^{(\kappa)}, \forall \kappa $, is considered as a convex subset in the order topology, and thus the set $ \mathcal{F}^{(\kappa)} $ is connected \cite{Munkres:Topo}. According to \cite{Bestuzheva:InvexOp:Jul2017}, it satisfies the KKT-invex condition when $ \kappa \rightarrow \infty $. As a result, the optimal solution of this sequence is monotonically improved toward the KKT point. For the practical implementation, Algorithm \ref{alg: for max sum SE} terminates upon reaching $ \mathcal{R}_{\Sigma}^{(\kappa+1)}-\mathcal{R}_{\Sigma}^{(\kappa)}<\varepsilon $ after a finite number of iterations.

\textit{Complexity analysis:} The proposed method of solving  \eqref{eq: prob. max sum SE 2} has a low complexity since it only contains the SOC and linear constraints. In particular, it takes a polynomial time complexity $ \mathcal{O}\bigl((6L+7K+8)^{2.5}(2L+(4N+2)K+2)^2+(6L+7K+8)^{3.5}\bigr) $, resulting from $ 6L+7K+8 $ constraints and $ 2L+(4N+2)K+2 $ scalar variables \cite{SeDuMi:2002}.

\section{Proposed Algorithm for Max-Min Rate Optimization}\label{sec: max-min rate optimization}

The BS in problem \eqref{eq: prob. max sum SE} will favor users with good channel conditions to maximize the total SR whenever QoS requirements for all users are satisfied. It merely means that users with poor channel conditions will be easily disconnected from the service when their QoS requirements become stringent.  In order to support the most vulnerable users,   the following general max-min optimization problem  is considered:
\begin{subequations} \label{eq: prob. max-min SE}
	\begin{IEEEeqnarray}{cl}
		\underset{\boldsymbol{\Omega},\mathbf{w}, \mathbf{P},\tau,\boldsymbol{\alpha}_{\ell}^{\ul},\boldsymbol{\alpha}_{k}^{\dl}}{\text{maximize}} & \quad R_{\mathtt{min}}\triangleq \underset{{\ell\in\mathcal{L},k \in \mathcal{K}}}{\min}\bigl\{R_{\ell}^{\ul},\frac{1}{\eta}R_{k}^{\dl}\bigr\} \label{eq: prob. max-min SE a} \\
		\text{subject to} 
		& \quad \eqref{eq: prob. general form b}-\eqref{eq: prob. general form j}
	\end{IEEEeqnarray}
\end{subequations}	
where  the objective function $ \mathcal{R}\bigl(\{R_{\ell}^{\ul}\},\{R_{k}^{\dl}\}\bigr)$ in \eqref{eq: prob. general form} is replaced by  $ R_{\mathtt{min}}= \underset{{\ell\in\mathcal{L},k \in \mathcal{K}}}{\min}\bigl\{R_{\ell}^{\ul},\frac{1}{\eta}R_{k}^{\dl}\bigr\} $. $\eta\geq 1 $   denotes the asymmetric coefficient on DL  compared to UL since the required transmission rate of UL is often less than that of DL  \cite{3GPP}. The special case of $\eta =  1$ was also mentioned in \cite{Nguyen:JSAC:18}, but in a different context.

Similarly to  \eqref{eq: prob. max sum SE}, problem \eqref{eq: prob. max-min SE} also belongs to  the difficult class of mixed-integer  non-convex programming. Fortunately, the developments presented in Algorithm \ref{alg: for max sum SE} is readily extendable to solve the  max-min problem \eqref{eq: prob. max-min SE}. Specifically, by following the similar steps given in Section III, the  subproblems for \eqref{eq: prob. max-min SE} are given by
	\begin{subequations} \label{eq: prob. max min SE 1'}
	\vspace{7pt}
	\begin{IEEEeqnarray}{cl}
		\underset{\mathbf{w}, \mathbf{P},\tau}{\text{maximize}} & \quad \underset{{\ell\in\mathcal{L},k \in \mathcal{K}}}{\min}\bigl\{R_{\ell}^{\ul}\bigl(\mathbf{w}, \mathbf{P},\tau|\boldsymbol{\Omega}\bigr),\frac{1}{\eta}R_{k}^{\dl}\bigl(\mathbf{w}, \mathbf{P},\tau|\boldsymbol{\Omega}\bigr)\bigr\} 
		\qquad \label{eq: prob. max min SE 1':objective}\\
		\text{subject to} 
		& \quad \eqref{eq: prob. general form f},  \eqref{eq: DL power constraints}, \eqref{eq: UL power constraints}
	\end{IEEEeqnarray}
\end{subequations}
which are equivalently rewritten as
\begin{subequations} \label{eq: prob. max-min SE:eqi}
	\begin{IEEEeqnarray}{cl}
		\underset{\mathbf{w}, \mathbf{P},\tau,\phi}{\text{maximize}} & \qquad \phi \label{eq: prob. max-min SE a:eqi} \\
		\text{subject to} 
		& \quad \eqref{eq: prob. general form f},  \eqref{eq: DL power constraints}, \eqref{eq: UL power constraints}, \label{eq: prob. max-min SE b:eqi}\\
		&\quad R_{\ell}^{\ul}\bigl(\mathbf{w}, \mathbf{P},\tau|\boldsymbol{\Omega}\bigr) \geq \phi, \forall \ell \in \mathcal{L}, \label{eq: prob. max-min SE c:eqi} \\
		&\quad \frac{1}{\eta}R_{k}^{\dl}\bigl(\mathbf{w}, \mathbf{P},\tau|\boldsymbol{\Omega}\bigr) \geq \phi, \forall k \in \mathcal{K}\label{eq: prob. max-min SE d:eqi}
	\end{IEEEeqnarray}
\end{subequations}
where $ \phi $ is a newly introduced slack variable. The equivalence between \eqref{eq: prob. max min SE 1'} and \eqref{eq: prob. max-min SE:eqi} is guaranteed since constraints \eqref{eq: prob. max-min SE c:eqi} and \eqref{eq: prob. max-min SE d:eqi} must hold with equality at optimum. Clearly, the non-concave functions $R_{\ell}^{\ul}\bigl(\mathbf{w}, \mathbf{P},\tau|\boldsymbol{\Omega}\bigr)$ and $R_{k}^{\dl}\bigl(\mathbf{w}, \mathbf{P},\tau|\boldsymbol{\Omega}\bigr)$ were  addressed in \eqref{eq: convexifying UL rate} and \eqref{eq: convexifying DL rate}, respectively, while the non-convex constraints \eqref{eq: DL power constraints a} and \eqref{eq: UL power constraints a} were convexified in \eqref{eq: power constraints approx. - altered variables}. 
Consequently, at the $ (\kappa+1) $-th iteration, we solve the following  convex  program to provide  a minorant maximization for \eqref{eq: prob. max-min SE:eqi}:
	\begin{subequations} \label{eq: prob. max-min SE 1}
		\begin{IEEEeqnarray}{cl}
			\underset{\mathbf{w}, \mathbf{P},\boldsymbol{\mu}, \boldsymbol{\vartheta}, \phi}{\text{maximize}} & \quad \phi \label{eq: prob. max-min SE 1 a} \\
			\text{subject to} 
			& \quad \eqref{eq: prob. max sum SE 2 b}, \\
			& \quad \sum\nolimits_{j=1}^{2}\mathcal{R}_{\ell, j}^{\ul, (\kappa)} \geq \phi, \forall \ell \in \mathcal{L}, \label{eq: prob. max-min SE 1 c} \\
			& \quad \sum\nolimits_{j=1}^{2}\mathcal{R}_{k, j}^{\dl, (\kappa)} \geq \eta \phi, \forall k \in \mathcal{K}. \label{eq: prob. max-min SE 1 d}
		\end{IEEEeqnarray}
	\end{subequations}
The proposed algorithm to solve \eqref{eq: prob. max-min SE} is summarized in Algorithm \ref{alg: for max min SE}. Similarly to Algorithm \ref{alg: for max sum SE}, it can be shown that once initialized from a feasible point (i.e., Step 3), Algorithm \ref{alg: for max min SE} converges after finitely many iterations upon reaching $ |\phi^{(\kappa)}-\phi^{(\kappa-1)}| < \varepsilon $.

\begin{algorithm}[t]
	\begin{algorithmic}[1]
		\fontsize{11}{12}\selectfont
		\protect\caption{Proposed Iterative Algorithm for the Max-Min  Problem \eqref{eq: prob. max-min SE}}

		\label{alg: for max min SE}
		
		\STATE \textbf{Initialization:} Set $ \phi^{\max} := -\infty $, $ \varepsilon:=10^{-3} $,  $ \gamma_{\varepsilon} := 10^{-3}$,   $ \boldsymbol{\alpha}_{\ell}^{\ul} = \boldsymbol{\alpha}_k^{\dl} := \mathbf{1} $, and $ \bigl(\boldsymbol{\Omega}^{*}, \mathbf{w}^{*},\mathbf{P}^{*},\tau^{*}\bigr) :=\mathbf{0}$.
		
		\FOR[solving subproblem \eqref{eq: prob. max-min SE:eqi}] {each $ \boldsymbol{\Omega} \in \mathcal{S}_{\boldsymbol{\Omega}} $}	
		
		\STATE\textbf{Generating an initial point:} Set $ \kappa:=0 $, $ \phi^{(0)}:=0 $ and generate $ (\mathbf{w}^{(0)},\mathbf{P}^{(0)},\boldsymbol{\mu}^{(0)},\boldsymbol{\vartheta}^{(0)}) $.
				\REPEAT
		\STATE Solve \eqref{eq: prob. max-min SE 1} to obtain  $ \bigl(\mathbf{w}^{\star},\mathbf{P}^{\star},\boldsymbol{\mu}^{\star},\boldsymbol{\vartheta}^{\star}, \phi^{\star}\bigr) $.
		
		
		\STATE Update $(\mathbf{w}^{(\kappa+1)},\mathbf{P}^{(\kappa+1)}, \boldsymbol{\mu}^{(\kappa+1)}, \boldsymbol{\vartheta}^{(\kappa+1)}) :=(\mathbf{w}^{\star},   \mathbf{P}^{\star}, \boldsymbol{\mu}^{\star}, \boldsymbol{\vartheta}^{\star} $).
		\STATE Set $ \kappa := \kappa + 1 $.
		\UNTIL $ |\phi^{(\kappa)}-\phi^{(\kappa-1)}| < \varepsilon $
		\IF[if true, then there exists a better solution for  \eqref{eq: prob. max-min SE}] {$ \phi^{(\kappa)}>\phi^{\max} $}
		\STATE Update $ \phi^{\max}:= \phi^{(\kappa)} $ and $ \bigl(\boldsymbol{\Omega}^{*}, \mathbf{w}^{*},\mathbf{P}^{*},\tau^{*}\bigr) := \bigl(\boldsymbol{\Omega}, \mathbf{w}^{(\kappa)},\mathbf{P}^{(\kappa)},\bigl(1-1/\mu_2^{(\kappa)} \bigr) \bigr) $.
		\STATE Update $ \boldsymbol{\alpha}_{\ell}^{\ul} $ and $ \boldsymbol{\alpha}_k^{\dl} $ as in \eqref{eq: users assignments}, and  $ R_{\mathtt{min}} $ in \eqref{eq: prob. max-min SE a}.
		\ENDIF
		\ENDFOR
	\end{algorithmic} 
\end{algorithm}

\textit{Complexity analysis:} Compared to \eqref{eq: prob. max sum SE 2}, the max-min  problem  \eqref{eq: prob. max-min SE 1} has the same number of constraints, but it requires one additional variable $ \phi $. In each iteration of Algorithm \ref{alg: for max min SE}, the  computational complexity of solving \eqref{eq: prob. max-min SE 1} is thus $ \mathcal{O}\bigl((6L+7K+8)^{2.5}(2L+(4N+2)K+3)^2+(6L+7K+8)^{3.5}\bigr) $.

\section{Robust Transmission Design under Channel Uncertainty}\label{sec: Channel Uncertainty}
The solutions of the previous sections require perfect CSI of all channels, which rely on the channel estimations at the BS and users. In practice, it is important to realize that the perfect CSI assumption is  highly idealistic due to limited information exchange between the transceivers. Therefore, our next endeavor is to consider a robust design for the SR maximization problem \eqref{eq: prob. max sum SE} under the imperfect CSI assumption as that for the max-min problem \eqref{eq: prob. max-min SE} is straightforward. For the stochastic error model \cite{Maurer:TSP:Jan2011}, the channels can be modeled as
\begin{IEEEeqnarray}{rCl}\label{channelmodeling}
			\mathbf{h}_{\ell}^{\ul} =  \mathbf{\hat{h}}_{\ell}^{\ul} + \boldsymbol{\Delta}\mathbf{h}_{\ell}^{\ul},\;
			\mathbf{h}_{k}^{\dl} =  \mathbf{\hat{h}}_{k}^{\dl} + \boldsymbol{\Delta}\mathbf{h}_{k}^{\dl},\;\text{and}\;
			g_{\ell k} = \hat{g}_{\ell k} + \Delta g_{\ell k}
		\end{IEEEeqnarray}
where $(\mathbf{h}_{\ell}^{\ul},\mathbf{h}_{k}^{\dl},g_{\ell k})$ are the perfect channels; $(\mathbf{\hat{h}}_{\ell}^{\ul},\mathbf{\hat{h}}_{k}^{\dl},\hat{g}_{\ell k})$ are the channel estimates; $(\boldsymbol{\Delta}\mathbf{h}_{\ell}^{\ul},\boldsymbol{\Delta}\mathbf{h}_{k}^{\dl},$ $\Delta g_{\ell k})$ represent the CSI errors which are assumed to be independent of the channel estimates and distributed as $ \boldsymbol{\Delta}\mathbf{h}_{\ell}^{\ul} \sim \mathcal{CN}(0,\epsilon_{\ell}^{\ul} \mathbf{I}) $, $ \boldsymbol{\Delta}\mathbf{h}_{k}^{\dl} \sim \mathcal{CN}(0,\epsilon_{k}^{\dl} \mathbf{I}) $, and $ \Delta g_{\ell k} \sim \mathcal{CN}(0,\epsilon_{\ell,k}) $, respectively, where $(\epsilon_{\ell}^{\ul},\epsilon_{k}^{\dl},\epsilon_{\ell,k})$ denote the variances of the CSI errors caused by estimation inaccuracies \cite{Maurer:TSP:Jan2011}. The SI channel $\mathbf{G}_{\SI}$ is assumed to be the same as before, since the Tx and Rx antennas are co-located at the BS.

Herein, we consider a worst-case robust formulation by  treating the CSI errors as noise. For UL transmission, let $ \mathbf{\hat{y}}_{\ell,j}^{\ul}$ be the post-SIC received signal of $\ULU$ subject to the channel uncertainties in \eqref{channelmodeling}. The  $\ULU$'s signal received at the BS is estimated as $ \hat{x}_{\ell,j}^{\ul} = \mathbf{u}_{\ell, j}^H\mathbf{\hat{y}}_{\ell,j}^{\ul} $, where $ \mathbf{u}_{\ell, j} $ is an optimal weight vector that minimizes the MSE, $ \mathbb{E}\bigl[|x_{\ell,j}^{\ul}-\hat{x}_{\ell,j}^{\ul}|^2\bigr] $.

\begin{lemma}
	The optimal weight vector $ \mathbf{u}_{\ell, j} $ is derived as
	\begin{align} \label{eq: optimal weight vector}
		\mathbf{u}_{\ell, j} = p_{\ell, j}\bigl(p_{\ell,j}^2  \mathbf{\bar{h}}_{\ell,j}^{\ul}\bigl(\mathbf{\bar{h}}_{\ell,j}^{\ul}\bigr)^H + \boldsymbol{\hat{\Psi}}_{\ell,j} \bigr)^{-1}\mathbf{\bar{h}}_{\ell,j}^{\ul}
	\end{align}
	where $ \boldsymbol{\hat{\Psi}}_{\ell,j}\triangleq\sum_{\ell'=\ell+1}^{L}p_{\ell',j}^2  \mathbf{\bar{h}}_{\ell,j}^{\ul}\bigl(\mathbf{\bar{h}}_{\ell,j}^{\ul}\bigr)^H + \sum_{\ell'=1}^{L}p_{\ell',j}^2 \epsilon_{\ell'}^{\ul}\boldsymbol{\bar{\Lambda}}_j + \rho^2\sum_{k=1}^{K} \mathbf{\tilde{G}}_{j}^H \mathbf{w}_{k,j} \mathbf{w}_{k,j}^H\mathbf{\tilde{G}}_{j} + \sigma_{\mathtt{U}}^2\mathbf{I} $ and $ \mathbf{\bar{h}}_{\ell,j}^{\ul} \triangleq \boldsymbol{\bar{\Lambda}}_j \mathbf{\hat{h}}_{\ell}^{\ul} $.  The SINR of $\ULU$ in phase $ j $ under channel uncertainty can be expressed as
	\begin{align} \label{eq: SINR ULUs - chan. uncertainty}
		\hat{\gamma}_{\ell}^{\ul}(\boldsymbol{\omega}_j,\mathbf{w}_j , \mathbf{p}_j) = p_{\ell,j}^2 \bigl(\mathbf{\bar{h}}_{\ell,j}^{\ul}\bigr)^H\boldsymbol{\hat{\Psi}}_{\ell,j}^{-1}\mathbf{\bar{h}}_{\ell,j}^{\ul}.
	\end{align}
\end{lemma}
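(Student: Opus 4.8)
The plan is to recognize this as a standard linear MMSE (Wiener) receiver derivation, with the twist that the channel estimation errors must be propagated carefully through the successive interference cancellation (SIC) step. First I would write out the post-SIC received signal $\mathbf{\hat{y}}_{\ell,j}^{\ul}$ explicitly. Substituting $\mathbf{h}_{\ell'}^{\ul} = \mathbf{\hat{h}}_{\ell'}^{\ul} + \boldsymbol{\Delta}\mathbf{h}_{\ell'}^{\ul}$ into \eqref{eq: received signal at BS} and using $\mathbf{\bar{h}}_{\ell',j}^{\ul} = \boldsymbol{\bar{\Lambda}}_j\mathbf{\hat{h}}_{\ell'}^{\ul}$, the signal remaining after cancelling the estimated contributions of users $1,\dots,\ell-1$ decomposes into: the desired term $p_{\ell,j}(\mathbf{\bar{h}}_{\ell,j}^{\ul}+\boldsymbol{\bar{\Lambda}}_j\boldsymbol{\Delta}\mathbf{h}_{\ell}^{\ul})x_{\ell,j}^{\ul}$; the undecoded interference $\sum_{\ell'=\ell+1}^{L}p_{\ell',j}(\mathbf{\bar{h}}_{\ell',j}^{\ul}+\boldsymbol{\bar{\Lambda}}_j\boldsymbol{\Delta}\mathbf{h}_{\ell'}^{\ul})x_{\ell',j}^{\ul}$; the SIC residual $\sum_{\ell'=1}^{\ell-1}p_{\ell',j}\boldsymbol{\bar{\Lambda}}_j\boldsymbol{\Delta}\mathbf{h}_{\ell'}^{\ul}x_{\ell',j}^{\ul}$, in which the estimated channel cancels but the error survives; the residual SI $\rho\sum_{k=1}^{K}\mathbf{\tilde{G}}_j^H\mathbf{w}_{k,j}x_{k,j}^{\dl}$; and the noise $\mathbf{n}_j$.

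Next I would invoke the closed-form MMSE solution $\mathbf{u}_{\ell,j} = \mathbf{R}_{\hat{y}}^{-1}\mathbf{r}_{\hat{y}x}$, where $\mathbf{R}_{\hat{y}} \triangleq \mathbb{E}[\mathbf{\hat{y}}_{\ell,j}^{\ul}(\mathbf{\hat{y}}_{\ell,j}^{\ul})^H]$ and $\mathbf{r}_{\hat{y}x} \triangleq \mathbb{E}[\mathbf{\hat{y}}_{\ell,j}^{\ul}(x_{\ell,j}^{\ul})^{*}]$, with the expectations taken over the unit-power data symbols, the noise, and the CSI errors, assumed mutually independent. Because the errors are zero-mean, the cross-correlation keeps only the estimated desired channel, giving $\mathbf{r}_{\hat{y}x} = p_{\ell,j}\mathbf{\bar{h}}_{\ell,j}^{\ul}$. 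For the autocorrelation, each channel-error contribution is evaluated via $\mathbb{E}[\boldsymbol{\Delta}\mathbf{h}_{\ell'}^{\ul}(\boldsymbol{\Delta}\mathbf{h}_{\ell'}^{\ul})^H] = \epsilon_{\ell'}^{\ul}\mathbf{I}$ together with the idempotency $\boldsymbol{\bar{\Lambda}}_j\boldsymbol{\bar{\Lambda}}_j^H = \boldsymbol{\bar{\Lambda}}_j$; summing the error covariances from the desired user, the undecoded users, and the SIC-residual users consolidates into the single term $\sum_{\ell'=1}^{L}p_{\ell',j}^2\epsilon_{\ell'}^{\ul}\boldsymbol{\bar{\Lambda}}_j$ appearing in $\boldsymbol{\hat{\Psi}}_{\ell,j}$. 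This yields $\mathbf{R}_{\hat{y}} = p_{\ell,j}^2\mathbf{\bar{h}}_{\ell,j}^{\ul}(\mathbf{\bar{h}}_{\ell,j}^{\ul})^H + \boldsymbol{\hat{\Psi}}_{\ell,j}$, and substitution into the Wiener formula reproduces \eqref{eq: optimal weight vector} directly.

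Finally, for the SINR \eqref{eq: SINR ULUs - chan. uncertainty}, I would treat the estimated desired signal as the useful part and lump everything else (including the error on the desired channel) into $\boldsymbol{\hat{\Psi}}_{\ell,j}$, so that $\hat{\gamma}_{\ell}^{\ul} = p_{\ell,j}^2|\mathbf{u}_{\ell,j}^H\mathbf{\bar{h}}_{\ell,j}^{\ul}|^2/(\mathbf{u}_{\ell,j}^H\boldsymbol{\hat{\Psi}}_{\ell,j}\mathbf{u}_{\ell,j})$. Applying the Sherman--Morrison identity to $\mathbf{R}_{\hat{y}} = \boldsymbol{\hat{\Psi}}_{\ell,j} + (p_{\ell,j}\mathbf{\bar{h}}_{\ell,j}^{\ul})(p_{\ell,j}\mathbf{\bar{h}}_{\ell,j}^{\ul})^H$ shows $\mathbf{u}_{\ell,j} = (1+s)^{-1}p_{\ell,j}\boldsymbol{\hat{\Psi}}_{\ell,j}^{-1}\mathbf{\bar{h}}_{\ell,j}^{\ul}$ with $s \triangleq p_{\ell,j}^2(\mathbf{\bar{h}}_{\ell,j}^{\ul})^H\boldsymbol{\hat{\Psi}}_{\ell,j}^{-1}\mathbf{\bar{h}}_{\ell,j}^{\ul}$; the numerator then equals $s^2/(1+s)^2$ and the denominator $s/(1+s)^2$, leaving $\hat{\gamma}_{\ell}^{\ul} = s$ as claimed.

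The hard part will be the bookkeeping in the first step: arguing correctly that the SIC-cancelled users $1,\dots,\ell-1$ leave behind a residual error term whose covariance persists even though their estimated channel has been removed. This is precisely what turns the error-free SIC covariance, which sums only over undecoded users, into the full sum $\sum_{\ell'=1}^{L}$ inside $\boldsymbol{\hat{\Psi}}_{\ell,j}$, and it is the one place where the imperfect-CSI analysis departs structurally from the perfect-CSI SINR in \eqref{eq: SINR ULUs}. Everything downstream is the routine MMSE-equals-$s$ computation.
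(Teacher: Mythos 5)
Your proposal is correct and follows essentially the same route as the paper's Appendix C: both write the post-SIC signal with the CSI errors of all $L$ users (including already-decoded ones and the desired user) treated as noise, compute the same covariance $\mathbf{C}_{\mathbf{\hat{y}}_{\ell,j}^{\ul}\mathbf{\hat{y}}_{\ell,j}^{\ul}} = p_{\ell,j}^2\mathbf{\bar{h}}_{\ell,j}^{\ul}(\mathbf{\bar{h}}_{\ell,j}^{\ul})^H + \boldsymbol{\hat{\Psi}}_{\ell,j}$, obtain the MMSE weight (your Wiener formula is exactly what the paper gets from $\nabla_{\mathbf{u}_{\ell,j}^H}f_{\ell,j}^{\text{MSE}}=0$), and reach $\hat{\gamma}_{\ell}^{\ul}=p_{\ell,j}^2(\mathbf{\bar{h}}_{\ell,j}^{\ul})^H\boldsymbol{\hat{\Psi}}_{\ell,j}^{-1}\mathbf{\bar{h}}_{\ell,j}^{\ul}$ via the rank-one matrix inversion identity. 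The only (immaterial) difference is where that identity is applied: you use Sherman--Morrison on $\mathbf{u}_{\ell,j}$ itself and evaluate the output-SINR ratio directly, whereas the paper forms $\mathbb{E}[|s_{\ell,j}|^2]/\mathbb{E}[|z_{\ell,j}|^2]$ in terms of $\mathbf{C}^{-1}$ and applies the lemma to the denominator at the end.
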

\begin{proof}
	Please see Appendix \ref{app: UL SINR}.
\end{proof}
It can be seen that if the estimation error is $ \epsilon_{\ell}^{\ul}=0 $,  \eqref{eq: SINR ULUs - chan. uncertainty} reduces to \eqref{eq: SINR ULUs}. For  DL transmission,  the SINR of $\DLU$ in  phase $j$ can be formulated as
\begin{align} \label{eq: SINR DLUs - chan. uncertainty}
	\hat{\gamma}_{k}^{\dl}(\boldsymbol{\omega}_j,\mathbf{w}_j , \mathbf{p}_j) = \frac{|(\mathbf{\bar{h}}_{k,j}^\dl)^H \mathbf{w}_{k,j}|^2}{\hat{\psi}_{k,j}}
\end{align}
where $ \mathbf{\bar{h}}_{k,j}^{\dl} \triangleq \boldsymbol{\bar{\Lambda}}_j \mathbf{\hat{h}}_{k}^{\dl} $ and
\begin{equation}
\hat{\psi}_{k,j} \triangleq \sum\nolimits_{k'=1,k'\neq k}^{K} \bigl|\bigl(\mathbf{\bar{h}}_{k,j}^\dl\bigr)^H \mathbf{w}_{k',j}\bigr|^2 + \sum\nolimits_{k'=1}^{K} \epsilon_{k'}^{\dl} \bigl\|\boldsymbol{\Lambda}_j\mathbf{w}_{k',j}\bigr\|^2 + \sum\nolimits_{\ell=1}^{L} p_{\ell,j}^2 |\hat{g}_{\ell k}|^2 + \sum\nolimits_{\ell=1}^{L} p_{\ell,j}^2 \epsilon_{\ell,k} + \sigma_k^2. \nonumber
\end{equation}
From \eqref{eq: SINR ULUs - chan. uncertainty} and \eqref{eq: SINR DLUs - chan. uncertainty},  by incorporating  the channel uncertainties, the worst-case information rates (nats/s/Hz) of $\ULU$ and $\DLU$ in one transmission time block are given by 
\begin{align} 
\hat{R}_{\ell}^{\ul}  = \sum\nolimits_{j=1}^{2}\alpha_{\ell,j}^{\ul}\bar{\tau}_j \ln\bigl(1+\hat{\gamma}_{\ell}^{\ul}(\boldsymbol{\omega}_j,\mathbf{w}_j , \mathbf{p}_j)\bigr)  \; \text{and}\;
\hat{R}_{k}^{\dl}  = \sum\nolimits_{j=1}^{2} \alpha_{k,j}^{\dl}\bar{\tau}_j \ln\bigl(1+\hat{\gamma}_{k}^{\dl}(\boldsymbol{\omega}_j, \mathbf{w}_j , \mathbf{p}_j)\bigr) \label{eq: robust rates downlink} 
\end{align}
respectively.


The robust counterpart of the SR maximization problem \eqref{eq: prob. max sum SE} is then formulated as
	\begin{subequations} \label{eq: prob. max sum SE robust}
	\begin{IEEEeqnarray}{cl}
		\underset{\boldsymbol{\Omega},\mathbf{w}, \mathbf{P},\tau,\boldsymbol{\alpha}_{\ell}^{\ul},\boldsymbol{\alpha}_{k}^{\dl}}{\text{maximize}} & \quad \hat{R}_{\Sigma}\triangleq\sum\nolimits_{\ell=1}^{L}\hat{R}_{\ell}^{\ul} + \sum\nolimits_{k=1}^{K}\hat{R}_{k}^{\dl} \label{eq: prob. max sum SE robust a} \\
		\text{subject to} 
		& \quad \eqref{eq: prob. general form b}-\eqref{eq: prob. general form j}, \\
		& \quad \hat{R}_{\ell}^{\ul} \geq \bar{R}_\ell^{\ul}, \forall \ell \in \mathcal{L}, \label{eq: prob. max sum SE robust c} \\
		& \quad \hat{R}_{k}^{\dl} \geq \bar{R}_k^{\dl}, \forall k \in \mathcal{K} \label{eq: prob. max sum SE robust d}
	\end{IEEEeqnarray}
\end{subequations}
and its subproblems are given by
\begin{subequations} \label{eq: prob. robust max sum SE 1'}
	\vspace{7pt}
	\begin{IEEEeqnarray}{cl}
		\underset{\mathbf{w}, \mathbf{P},\tau}{\text{maximize}} & \quad \sum\nolimits_{\ell=1}^{L}\hat{R}_{\ell}^{\ul}\bigl(\mathbf{w}, \mathbf{P},\tau|\boldsymbol{\Omega}\bigr) + \sum\nolimits_{k=1}^{K}\hat{R}_{k}^{\dl}\bigl(\mathbf{w}, \mathbf{P},\tau|\boldsymbol{\Omega}\bigr), \qquad \label{eq: prob. robust max sum SE 1':objective}\\
		\text{subject to} 
		& \quad \eqref{eq: prob. general form f}, \eqref{eq: DL power constraints}, \eqref{eq: UL power constraints}, \\
		& \quad \hat{R}_{\ell}^{\ul}\bigl(\mathbf{w}, \mathbf{P},\tau|\boldsymbol{\Omega}\bigr) \geq \bar{R}_\ell^{\ul}, \; \forall \ell \in \mathcal{L}, \label{eq: prob. robust max sum SE 1 c} \\
		& \quad \hat{R}_{k}^{\dl}\bigl(\mathbf{w}, \mathbf{P},\tau|\boldsymbol{\Omega}\bigr) \geq \bar{R}_k^{\dl},\; \forall k \in \mathcal{K} \label{eq: prob. robust max sum SE 1 d}
	\end{IEEEeqnarray}
\end{subequations}
where $\hat{R}_{\ell}^{\ul}\bigl(\mathbf{w}, \mathbf{P},\tau|\boldsymbol{\Omega}\bigr)  = \sum_{j=1}^{2}\bar{\tau}_j \ln\bigl(1+\hat{\gamma}_{\ell}^{\ul}(\mathbf{w}_j , \mathbf{p}_j |\boldsymbol{\omega}_j)\bigr)$ and $\hat{R}_{k}^{\dl}\bigl(\mathbf{w}, \mathbf{P},\tau|\boldsymbol{\Omega}\bigr)  = \sum_{j=1}^{2}\bar{\tau}_j \ln\bigl(1+\hat{\gamma}_{k}^{\dl}(\mathbf{w}_j , \mathbf{p}_j |\boldsymbol{\omega}_j)\bigr)$. We are now in position to expose the hidden  convexity of the non-convex problem \eqref{eq: prob. robust max sum SE 1'}. It can be observed that the functions $\hat{R}_{\ell}^{\ul}\bigl(\mathbf{w}, \mathbf{P},\tau|\boldsymbol{\Omega}\bigr)$ and $\hat{R}_{k}^{\dl}\bigl(\mathbf{w}, \mathbf{P},\tau|\boldsymbol{\Omega}\bigr)$ admit the same form as $R_{\ell}^{\ul}\bigl(\mathbf{w}, \mathbf{P},\tau|\boldsymbol{\Omega}\bigr)$ and $R_{k}^{\dl}\bigl(\mathbf{w}, \mathbf{P},\tau|\boldsymbol{\Omega}\bigr)$, respectively. Therefore, we will show in the following that the proposed
approach to the SR maximization problem \eqref{eq: prob. max sum SE} can be extended to its robust counterpart.

To handle the non-concavity of function $\hat{R}_{\ell}^{\ul}\bigl(\mathbf{w}, \mathbf{P},\tau|\boldsymbol{\Omega}\bigr)$, in the same manner as \eqref{eq: convexifying UL rate}, we derive the concave quadratic minorant of $\hat{R}_{\ell,j}^{\ul}(\mathbf{w}_j, \mathbf{p}_j,\mu_j)\triangleq\frac{\ln\bigl(1+\hat{\gamma}_{\ell}^{\ul}(\mathbf{w}_j, \mathbf{p}_j|\boldsymbol{\omega}_j)\bigr)}{\mu_j}$ as
\begin{IEEEeqnarray} {cl} \label{eq: convexifying robust UL rate}
\hat{\mathcal{R}}_{\ell, j}^{\ul,(\kappa)} := -\varpi(\hat{z}_{\ell, j}^{(\kappa)},\mu_j^{(\kappa)})\mu_j  + \zeta(\hat{z}_{\ell, j}^{(\kappa)},\mu_j^{(\kappa)}) - \frac{\hat{\varphi}_{\ell}^{(\kappa)}(\mathbf{w}_j, \mathbf{p}_j|\boldsymbol{\omega}_j)}{\mu_j^{(\kappa)}} +  \frac{2\hat{z}_{\ell, j}^{(\kappa)}}{\mu_j^{(\kappa)}p_{\ell,j}^{(\kappa)}}p_{\ell,j}\qquad
	\end{IEEEeqnarray}
	where 
	\begin{align}
	&	\hat{z}_{\ell, j}^{(\kappa)}\triangleq\hat{\gamma}_{\ell}^{\ul}(\mathbf{w}_j^{(\kappa)}, \mathbf{p}_j^{(\kappa)}|\boldsymbol{\omega}_j), 
	\hat{\varphi}_{\ell}^{(\kappa)}(\mathbf{w}_j, \mathbf{p}_j|\boldsymbol{\omega}_j)  \triangleq
		\ds\tr\Bigl(\bigl(p_{\ell,j}^2\mathbf{\bar{h}}_{\ell,j}^{\ul}(\mathbf{\bar{h}}_{\ell,j}^{\ul})^H + \boldsymbol{\hat{\Psi}}_{\ell,j}\bigl)\bigl( (\boldsymbol{\hat{\Psi}}_{\ell,j}^{(\kappa)})^{-1} - (\boldsymbol{\hat{\Psi}}_{(\ell-1),j}^{(\kappa)})^{-1}\bigr)\Bigr),   \nonumber \\
		& \boldsymbol{\hat{\Psi}}_{\ell,j}^{(\kappa)} \triangleq  \sum\nolimits_{\ell'=\ell+1}^{L} \bigl(p_{\ell',j}^{(\kappa)}\bigr)^2 \mathbf{\bar{h}}_{\ell',j}^{\text{u}}\bigl(\mathbf{\bar{h}}_{\ell',j}^{\text{u}}\bigr)^H + \sum\nolimits_{\ell'=1}^{L}\bigl(p_{\ell',j}^{(\kappa)}\bigr)^2 \epsilon_{\ell'}^{\text{u}}\boldsymbol{\bar{\Lambda}}_j + \rho^2\sum\nolimits_{k=1}^{K} \mathbf{\tilde{G}}_{j}^H \mathbf{w}_{k,j}^{(\kappa)} \bigl(\mathbf{w}_{k,j}^{(\kappa)}\bigr)^H\mathbf{\tilde{G}}_{j} + \sigma^2\mathbf{I}.\nonumber
	\end{align}
Analogously to \eqref{eq: convexifying DL rate}, the concave  minorant of $\hat{R}_{k,j}^{\dl}(\mathbf{w}_j,\mathbf{p}_j, \mu_j)\triangleq\frac{\ln\bigl(1+\hat{\gamma}_{k}^{\dl}(\mathbf{w}_j, \mathbf{p}_j|\boldsymbol{\omega}_j)\bigr)}{\mu_j}$ reads as
\begin{align} \label{eq: convexifying DL rate robust}
\mathcal{\hat{R}}_{k, j}^{\dl, (\kappa)} := & \nu(\vartheta_{k,j}^{(\kappa)},\mu_j^{(\kappa)}) + \xi(\vartheta_{k,j}^{(\kappa)},\mu_j^{(\kappa)})\vartheta_{k,j} + \lambda(\vartheta_{k,j}^{(\kappa)},\mu_j^{(\kappa)})\mu_j 
\end{align}
with the additional convex constraints
\begin{subequations} \label{eq: vartheta constraints robust}
	\begin{gather}
	\hat{\psi}_{k,j} \leq \vartheta_{k,j} \hat{\gamma}_{\text{s}}^{(\kappa)}\bigl(\mathbf{w}_{k,j}\bigr),\; \forall k\in \mathcal{K},\; j=1,2, \label{eq: SOC constraint for SINR DLUs robust} \\
	\vartheta_{k,j} > 0,\; \forall k\in \mathcal{K},\; j=1,2 \label{eq: positive vartheta constraint robust}
	\end{gather}
\end{subequations}
over the trust region
\begin{align} \label{eq: positive effective channel uncert.}
	\hat{\gamma}_{\text{s}}^{(\kappa)}\bigl(\mathbf{w}_{k,j}\bigr) & \triangleq 2\Re\bigl\{\bigl(\mathbf{\bar{h}}_{k,j}^\dl\bigr)^H\mathbf{w}_{k,j}^{(\kappa)}\bigr\}\Re\bigl\{\bigl(\mathbf{\bar{h}}_{k,j}^\dl\bigr)^H \mathbf{w}_{k,j}\bigr\}  - \bigl(\Re\bigl\{(\mathbf{\bar{h}}_{k,j}^\dl)^H \mathbf{w}_{k,j}^{(\kappa)}\bigr\}\bigr)^2 > 0, \forall k, j.
\end{align}

With the above discussions, the following convex program solved at iteration $(\kappa+1)$ is minorant maximization for the non-convex problem \eqref{eq: prob. robust max sum SE 1'}. 
	\begin{subequations} \label{eq: prob. max sum SE 1 robust}
	\begin{IEEEeqnarray}{cl}
		\underset{\mathbf{w}, \mathbf{P},\boldsymbol{\mu}, \boldsymbol{\vartheta}}{\text{maximize}} & \quad \mathcal{\hat{R}}_{\Sigma}^{(\kappa+1)} \triangleq \sum\nolimits_{\ell=1}^{L}\sum\nolimits_{j=1}^{2}\mathcal{\hat{R}}_{\ell, j}^{\ul, (\kappa)} + \sum\nolimits_{k=1}^{K}\sum\nolimits_{j=1}^{2}\mathcal{\hat{R}}_{k, j}^{\dl, (\kappa)} \label{eq: prob. max sum SE 1 robust a} \\
		\text{subject to} 
		& \quad \eqref{eq: prob. general form f}, \eqref{eq: DL power constraints b}, \eqref{eq: UL power constraints b}, \eqref{eq: alternative var. constraint 1}, \eqref{eq: alternative var. constraint 2}, \eqref{eq: power constraints approx. - altered variables}, \eqref{eq: vartheta constraints robust}, \eqref{eq: positive effective channel uncert.}, \label{eq: prob. max sum SE 1 robust b} \\
		& \quad \sum\nolimits_{j=1}^{2}\mathcal{\hat{R}}_{\ell, j}^{\ul, (\kappa)} \geq \bar{R}_\ell^{\ul}, \forall \ell \in \mathcal{L}, \label{eq: prob. max sum SE 1 robust c} \\
		& \quad \sum\nolimits_{j=1}^{2}\mathcal{\hat{R}}_{k, j}^{\dl, (\kappa)} \geq \bar{R}_k^{\dl}, \forall k \in \mathcal{K}. \label{eq: prob. max sum SE 1 robust d}
	\end{IEEEeqnarray}
\end{subequations}
To solve \eqref{eq: prob. max sum SE robust}, we customize  Algorithm \ref{alg: for max sum SE} as follows. In Step 2, we solve the subproblems \eqref{eq: prob. robust max sum SE 1'} instead of \eqref{eq: prob.  max sum SE 1'}. In Step 5, we solve the convex program \eqref{eq: prob. max sum SE 1 robust} instead of \eqref{eq: prob. max sum SE 2}. The checking condition in Step 9 becomes $\hat{\mathcal{R}}_{\Sigma}^{(\kappa)}>\hat{\mathcal{R}}_{\text{SR}}^{\max} $ and the objective function value in Step 11 is now $\hat{R}_{\Sigma}$ in \eqref{eq: prob. max sum SE robust a}. We  refer to this customized algorithm as Algorithm 3.

{\hili The proposed algorithms in this paper provide centralized solutions, where a central processing station located at the BS carries out the computations. We assume that the CSI of users and the status of QoS requirements are locally known at the BS, which is done by using implicit beamforming and relying on channel reciprocity, e.g., as in the IEEE 802.11ac. At the beginning of transmission block, the CSI between the BS and all users can be estimated through the handshaking processes, in which the pilot signals are sent from all users to the BS. Then, the CSI estimates can be updated as UL users periodically send pilots to the BS and DL users send the acknowledgments to the BS while they successfully receive the data packets. For CCI estimation, the DL users might receive the training signals from UL users, and then send the CCI estimates to the BS at the beginning of the next transmission block. For a multicell scenario, decentralized solutions are more practically appealing and will be reported in future work. Specifically, each BS is required to carry out the computations locally and the information sharing among BSs to arrive to the optimal solution can be done via the LTE-X2 interface.}

\section{Numerical Results}\label{NumericalResults}
To evaluate the performance of the proposed algorithms through computer simulations, a small cell topology as illustrated in Fig. \ref{fig: User layout} is set up. There are 8 UL and 8 DL users randomly placed within a 100-meter radius from the BS. The channel vector between  HA$_i$ and a user at a distance $d_{\mathtt{BS,U}}$ is generated as $\mathbf{h}=\sqrt{10^{-\mathtt{PL}_{\mathtt{BS,U}}/10}}\ddot{\mathbf{h}}$, where $\mathbf{h}\in\{\mathbf{h}_{\ell,i}^{\ul},\mathbf{h}_{k,i}^{\dl}\}$ and $ \mathtt{U} \in \{\ULU, \DLU\} $. {\hili The flat fading channel between  HA$_i$ and a user at a distance $d_{\mathtt{BS,U}}$ is assumed to be Gaussian distributed, i.e., $\mathbf{h}\sim\mathcal{CN}(0,\mathtt{PL}_{\mathtt{BS,U}}\mathbf{I})$ with $\mathbf{h}\in\{\mathbf{h}_{\ell,i}^{\ul},\mathbf{h}_{k,i}^{\dl}\}$ and $ \mathtt{U} \in \{\ULU, \DLU\} $, where $\mathtt{PL}_{\mathtt{BS,U}}$ represents the path loss  \cite{Dan:TWC:14, Dinh:Access, 3GPP}. Similarly, the channel response from $\ULU$ to $\DLU$  at a distance $d_{\ell k}$ is generated as $ g_{\ell k} \sim\mathcal{CN}(0,\mathtt{PL}_{\ell k})$.} The entries of the SI channel $ \mathbf{G}_{\SI} $ are modeled as independent and identically distributed Rician random variables with Rician
factor of 5 dB. For ease of cross-referencing,  the other parameters given in Table~\ref{parameter} are also used in \cite{Bharadia13,Duarte:TWC:12,3GPP}. Without loss of generality, the rate thresholds  for all users are set as  $ \bar{R}^{\ul}_\ell = \bar{R}^{\dl}_k \equiv \bar{R}$.  The achieved rates divided by $\ln2$ are shown in all figures, to be presented in the unit of bps/Hz instead of nats/s/Hz. Except for the cumulative distribution functions (CDFs), we use the scenario in Fig. \ref{fig: User layout} and generate 500 random channels to calculate the average rates in the rest of simulations.

To conduct a performance assessment, we examine the proposed algorithms (Algorithms 1, 2, and 3) and other three schemes:
\begin{enumerate}
	\item An FD scheme with user grouping and fractional time (FD-UGFT) \cite{Dinh:Access};
	\item An FD scheme based on a conventional design in \cite{Dan:TWC:14} with QoS requirements (FD-CDQS);
	\item An HD system, in which UL and DL transmissions are independently executed in two transmission blocks with full-array antennas, and each block time is normalized to 1. In this case, the optimal rate of HD scheme is calculated as the average of UL and DL rates.
\end{enumerate}

\begin{table}[t]
	\begin{varwidth}[h]{0.5\linewidth}
			\centering
			\captionof{table}{Simulation Parameters}
			\label{parameter}
			\vspace{-15pt}
			\scalebox{0.9}{
			\begin{tabular}{l|l}
				\hline
				Parameters & Value \\
				\hline\hline
				Radius of small cell                   & 100 m \\
				Bandwidth & 10 MHz \\
				Noise power spectral density at  receivers& -174 dBm/Hz \\
				PL  between BS and a user,  $\mathtt{PL}_{\mathtt{BS,U}}$ & 103.8 + 20.9$\log_{10}(d_{\mathtt{BS,U}})$ dB\\
				PL from $\ULU$ to $\DLU$,	$\mathtt{PL}_{\ell k} $ & 145.4 + 37.5$\log_{10}(d_{\ell k})$ dB\\
				Power budgets, $ P_{t}^{\text{max}}$ and $P_{\ell}^{\text{max}},\forall\ell $ & 26 dBm and 23 dBm \\
				Number of antennas at  BS, $ 2N $ & 16\\
				Rate threshold, $ \bar{R} $	&  1 bps/Hz\\
					\hline		   				
			\end{tabular} }
	\end{varwidth}
	\hfill
	\begin{minipage}[ht]{0.42\linewidth}
		\centering
		\includegraphics[width=0.9\linewidth, trim={0cm, 0cm, 0cm, 0cm}]{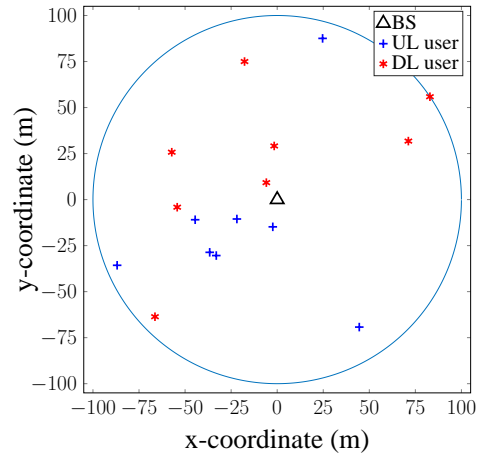}
		\vspace{-20pt}
		\captionof{figure}{A small cell setup with 8 DL and 8 UL users used in our numerical examples.}
		\label{fig: User layout}
	\end{minipage}
\end{table}

Fig. \ref{fig: convergence} shows the typical convergence behavior of the algorithms for FD schemes, at $ \rho^2=-30 $ dB for convenience. We do not consider the convergence behavior of HD due to the independence of UL and DL SR problems. As can be seen, Algorithm  \ref{alg: for max sum SE} always needs only a few tens of iterations to reach the converging value. The per-iteration optimized value of the proposed algorithm almost saturates from the 30-th iteration, while those of FD-UGFT and FD-CDQS algorithms still increase slightly. In addition, the proposed scheme provides the best performance among the FD-based systems. A method of brute-force search for the integer variables (BFS-IV) based on the proposed algorithm is presented in Fig. \ref{fig: convergence} as a baseline. In the BFS-IV method, the values of $(\boldsymbol{\Omega}, \boldsymbol{\alpha}_{\ell}^{\ul}, \boldsymbol{\alpha}_{k}^{\dl})$ are fixed to each of all possible combinations, and then \eqref{eq: prob. max sum SE 1'} is used for each case as the proposed FD. In most cases of BFS-IV, $ \boldsymbol{\alpha}_{\ell}^{\ul} $ and $ \boldsymbol{\alpha}_{k}^{\dl} $ are fixed to zero, and thus the associated UL transmit power and DL beamforming variables can be set to zero no matter how the other variables are optimized. Therefore, the convergence rate of BFS-IV is faster than those of the three other methods. However, a large number of possible cases for BFS-IV are inapplicable to the FD system. It is further observed that the proposed FD scheme provides the SR close to that of the BFS-IV as Proposition \ref{prop: Constraint Reduction} and Lemma \ref{lem: Bound Tightening} hold the optimality.

\begin{figure}[t] 
	\centering
	\vspace{-30pt}
	\includegraphics[width=0.45\columnwidth, trim={0cm, 0cm, 0cm, -2cm}]{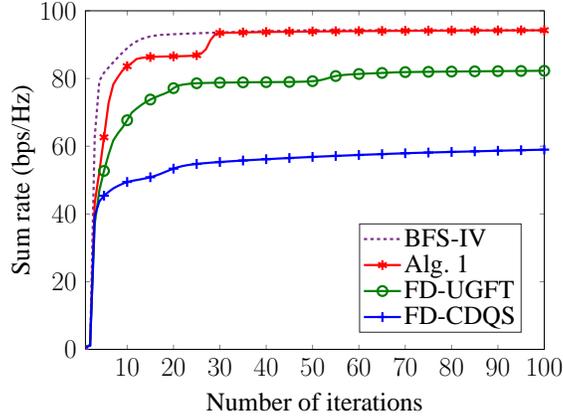}
	\vspace{-20pt}
	\caption{Typical convergence rate of FD schemes with SR maximization problem.}
	\label{fig: convergence}
\end{figure}

\begin{table}[b]
	\setlength\extrarowheight{5pt}
	
	\centering
	\color{blue}{
		\vspace{10pt}
		\captionof{table}{\color{blue} Complexity Comparison}
		\vspace{-10pt}
		\label{Complexity}
		\scalebox{1}{
			\begin{tabular}{l|c|c|c|c}
				\hline
				Methods & \#Sub-problems & \#Variables & \#Constraints & Per-iteration complexity \\
				\hline\hline
				BFS-IV & $ s=4^2.2^{2K}.2^{2L} $ & $v_1=4NK+2K+2L+2 $ &  $ c_1=7K+6L+8 $ & $ s.\mathcal{O}\bigl(c_1^{2.5}v_1^{2}+c_1^{3.5}\bigr) $\\
				\hline
				FD-CDQS & 1 & $ v_2=N^2K+2K+3L $ &  $ c_2=6K+7L+1 $ & $ \mathcal{O}\bigl(c_2^{2.5}v_2^{2}+c_2^{3.5}\bigr) $ \\ 
				\hline	
				FD-UGFT & 1 & $ v_3=2NK+12K+14L+4 $ & $ c_3=15K+16L+6 $ & $ \mathcal{O}\bigl(c_3^{2.5}v_3^{2}+c_3^{3.5}\bigr) $ \\ 
				\hline
				Algorithm 1 & 8 & $ v_1 $ &  $ c_1 $ & $ 8.\mathcal{O}\bigl(c_1^{2.5}v_1^{2}+c_1^{3.5}\bigr) $ \\
				\hline	   				
	\end{tabular} } }
\end{table}

{\hili Besides the convergence speed, an important measure for computation is the per-iteration complexity as it has a significant impact on the total complexity. In each iteration one or more sub-problems are solved, depending on the algorithm structure. As illustrated in Table \ref{Complexity}, we provide a per-iteration complexity comparison among the algorithms used for FD schemes, in which the complexity of each sub-problem is represented in term of big-O notation as given in \cite{SeDuMi:2002}. It is true that the number of variables and constraints are the same in BFS-IV and the proposed method, since the power control for BFS-IV is based on Algorithm \ref{alg: for max sum SE}. However, the number of sub-problems for BFS-IV is very high which makes it inapplicable to practical implementation. In the next simulations, we no longer consider BFS-IV due to its extremely high complexity. The number of constraints required by FD-CDQS and the proposed algorithm are comparable. However, since FD-CDQS must handle the rank-1 relaxed variables for the beamforming vectors, the proposed scheme needs much less number of variables, leading to lower per-iteration complexity. As compared to the FD-UGFT method, although Algorithm \ref{alg: for max sum SE} needs twice the number of the beamforming vector variables, it has a huge payoff related to the number of users. In particular, the largest exponent in big-O notation is with respect to the number of constraints which is significantly reduced in Algorithm \ref{alg: for max sum SE}. Therefore, the proposed method provides a lower complexity, especially when the number of users is large. The per-iteration complexity of the algorithms also has an impact on the required number of iterations to capture a feasible starting point in \eqref{eq: prob. max sum SE - initialization} as investigated in Fig. \ref{fig: Finding a feasible starting point}. In Fig. \ref{fig: Feasible point}, we plot the typical progresses of algorithms to achieve a feasible point from the same initial point which involves the same values of beamforming vector and UL transmit power. Despite starting at the same value, the per-user rate with the proposed algorithm increases to the feasible region (positive per-user rate) more quickly than that with other algorithms. Fig. \ref{fig: Average No. Iterations} further depicts the average number of iterations until the feasible point is reached through examining 500 random channels. Algorithm \ref{alg: for max sum SE} requires the least iterations for the feasible point, as it guarantees less constraints and uses less decision variables. The above results confirm that the proposed method provides a lower complexity in terms of both convergence speed and per-iteration complexity.} 

%

\begin{figure}[t] 
	\centering
	
	\begin{subfigure}[]
		{
			\includegraphics[width=0.43\linewidth, trim={0cm, 0.8cm, 0cm, 0cm}]{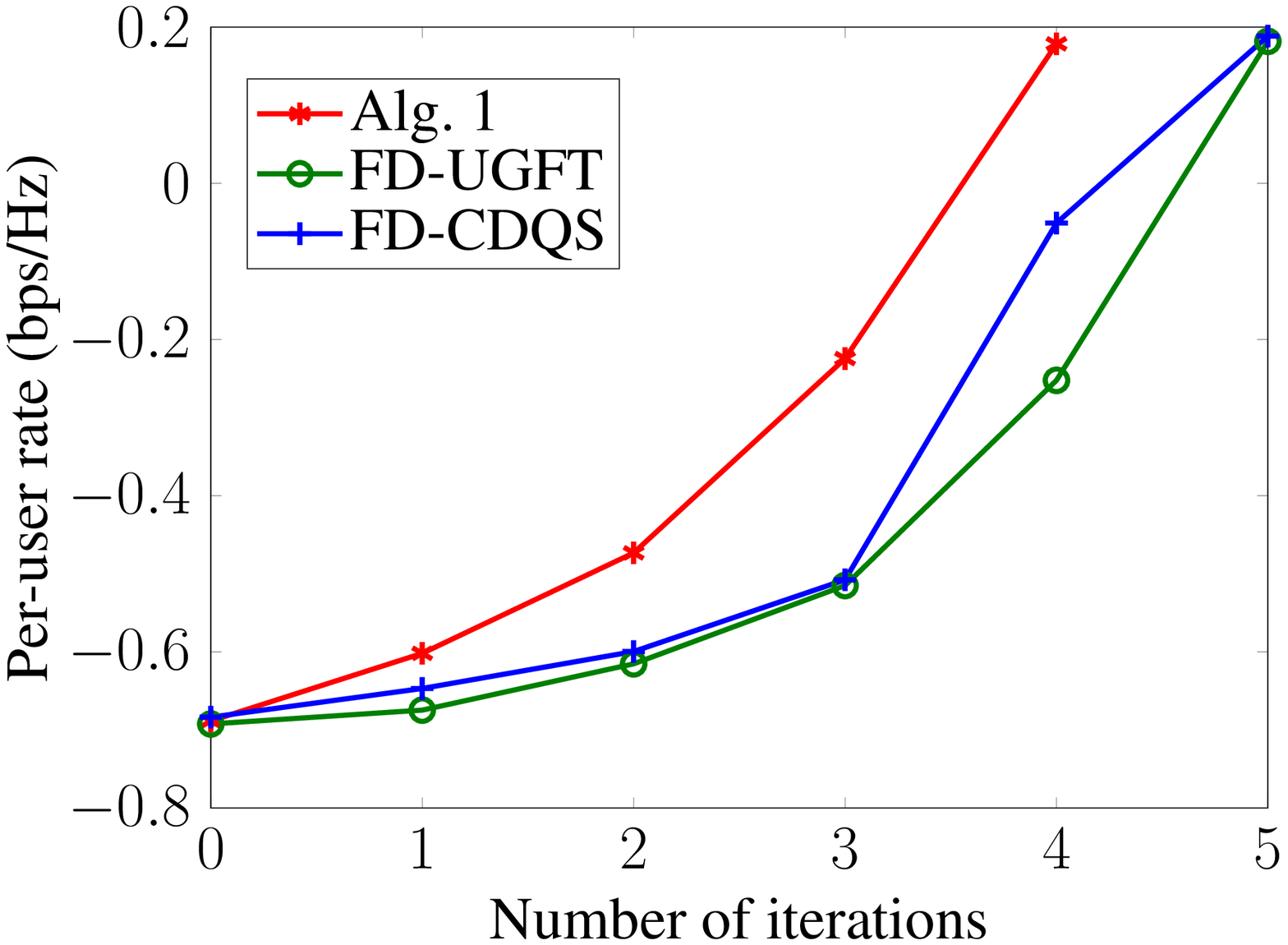}
					\vspace{-20pt}
			\label{fig: Feasible point}
		}
	\end{subfigure}
	\hfill
	\begin{subfigure}[]
		{
			\includegraphics[width=0.43\linewidth, trim={0cm, 0.8cm, 0cm, 0cm}]{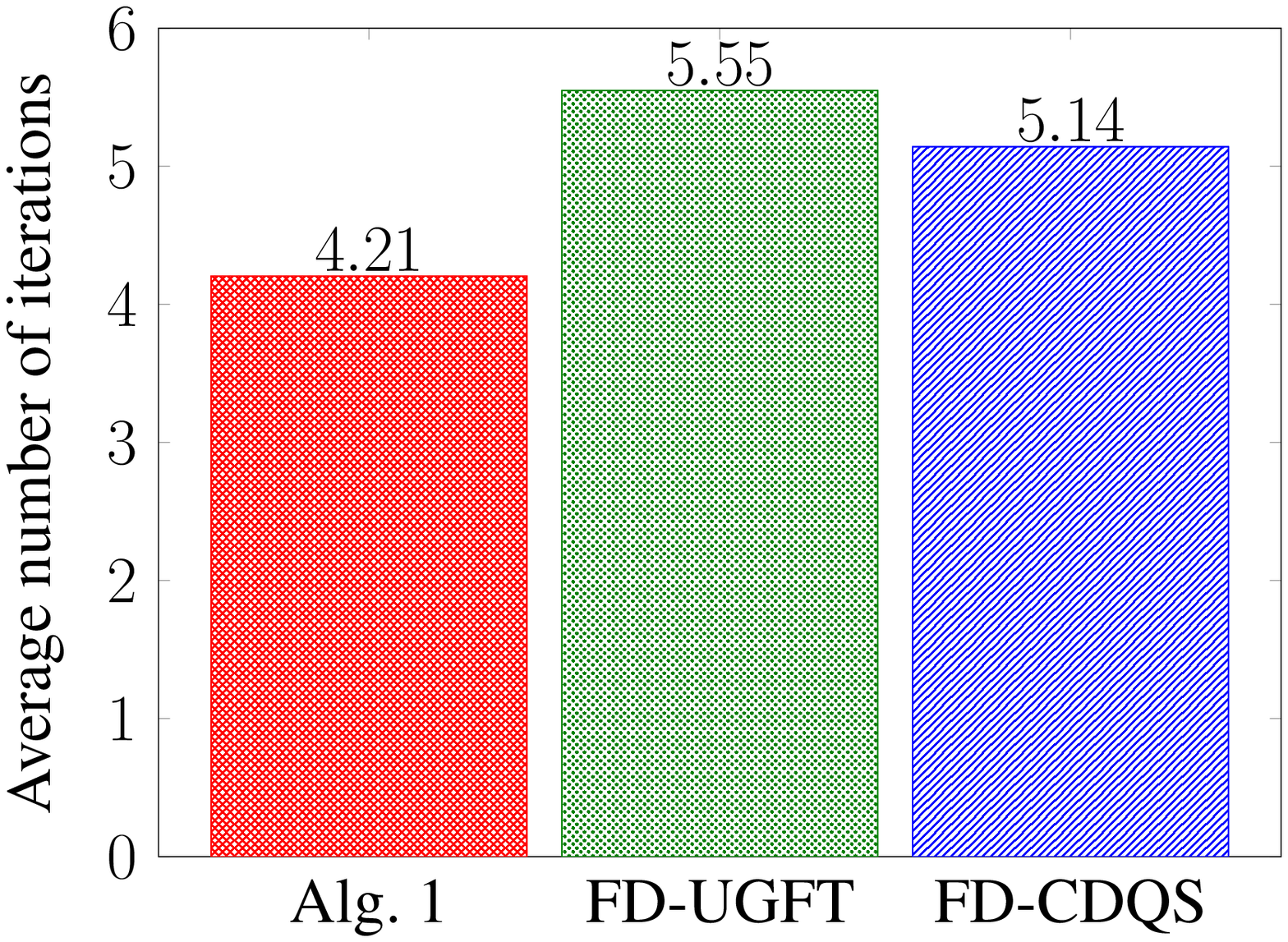}
			\vspace{-20pt}
			
			\label{fig: Average No. Iterations}
		}
	\end{subfigure}
			\vspace{-10pt}
	\caption{\hili The number of iterations for finding a feasible starting point. (a) A typical iterative progress for finding a feasible point. (b) The average number of iterations to capture a feasible starting point.}
	\label{fig: Finding a feasible starting point}
\end{figure}

\begin{figure}[t] 
		\centering
		\vspace{-45pt}
		\includegraphics[width=0.45\columnwidth, trim={0cm, 0cm, 0cm, -2cm}]{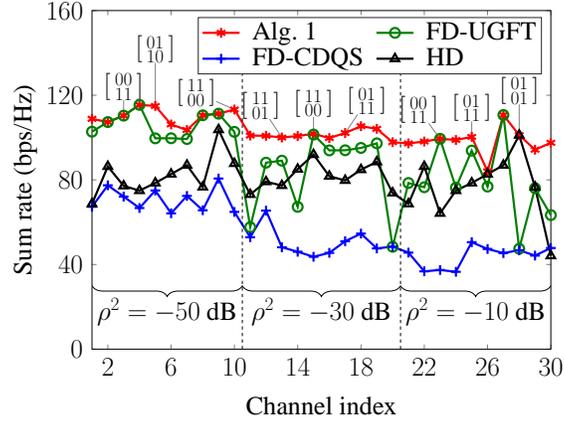}
		\vspace{-20pt}
		\caption{Sum rate versus random channels.}
		\label{fig: SR vs random channels}
\end{figure}


{\hili To deeply understand the algorithm, we capture the HA mode selection in two phases under different random channels and residual SI levels. Through the values of $ \boldsymbol{\Omega} $, we can perceive the mode changes of HAs so that the system is well adapted to various conditions.} In Fig. \ref{fig: SR vs random channels}, we randomly generate 30 channels: 10 for every three values of $ \rho^2\in\{-50, -30,  -10 \}$ dB. The SRs of the FD-CDQS are worse than those for the other schemes, while the proposed FD provides at least the same performance as the HD or FD-UGFT. If we observe the HA mode selection at the channel index 5, for instance, HA$_1$ and HA$_2$ in phase 1 (1-st column of $ \boldsymbol{\Omega} $) are set to the Rx (0) and Tx (1) modes, respectively; and vice versa in phase 2 (2-nd column of $ \boldsymbol{\Omega} $). Such settings for transmission provide the best performance, as compared to the other FD schemes without Tx/Rx-mode antenna selection. Notably, when $ \rho^2 = -50 $ dB, the HA mode selection tends to have FD transmission in both phases. Therefore, the performance of the proposed FD is at least equal to that of FD-UGFT. On the other hand, the trend of the mode selection shifts from the two-phase FD transmission to the hybrid-mode, where both FD and HD are successively used in two phases, and finally to the HD-mode transmission when $ \rho^2 $ becomes large, i.e., $ \rho^2 $ = -10 dB. It demonstrates that the proposed FD is more stable in the change of random channels.

\begin{figure}[t] 
	\centering
	\vspace{-20pt}
	\begin{subfigure}[Sum rate versus residual SI.]
		{
			\includegraphics[width=0.45\linewidth, trim={0cm, .8cm, 0cm, 0cm}]{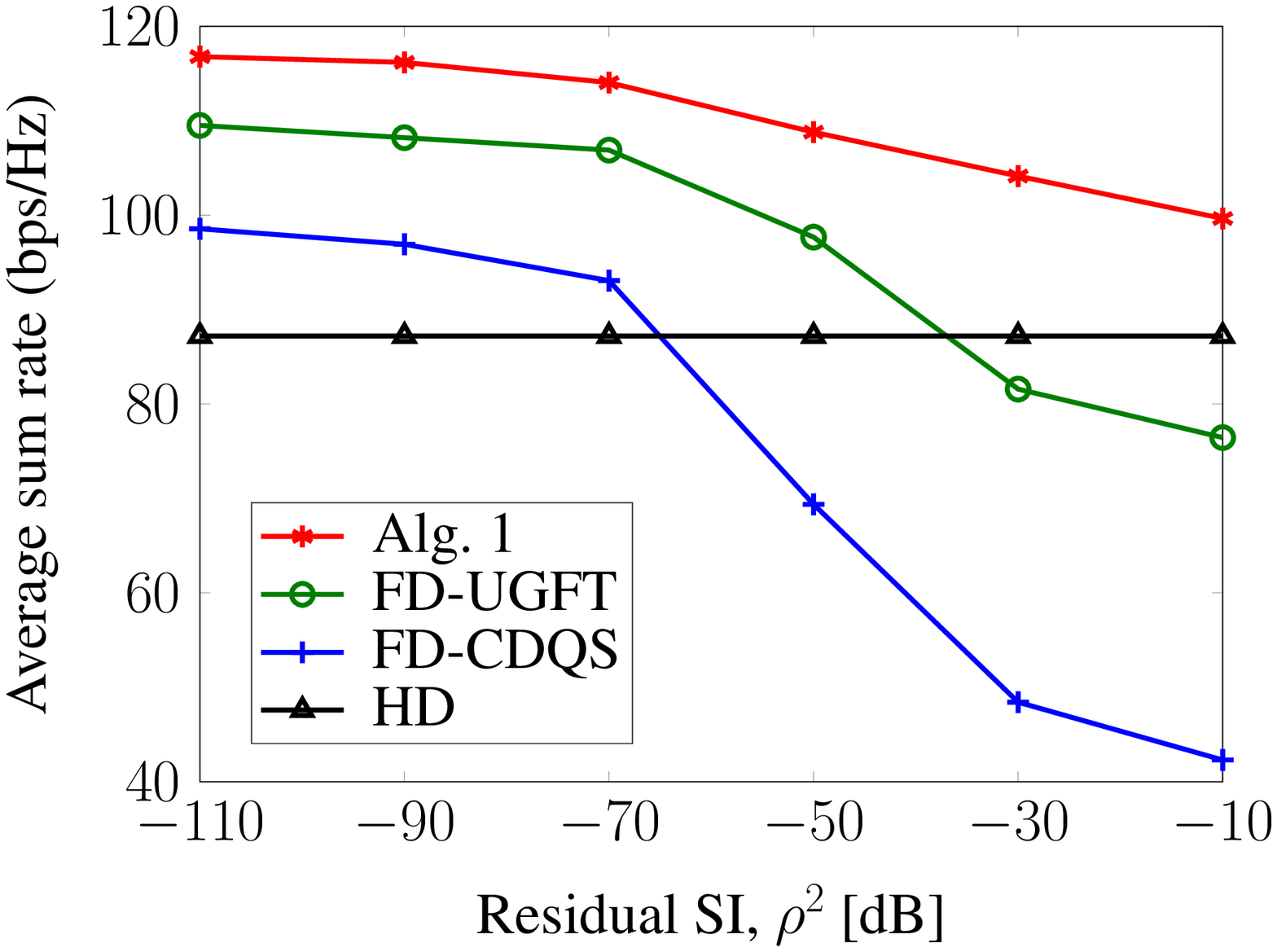}
			\label{fig: SR vs SI}
		}
	\end{subfigure}
	\hfill
	\begin{subfigure}[Max-min rate versus residual SI.]
		{\includegraphics[width=0.45\linewidth, trim={0cm, .8cm, 0cm, 0cm}]{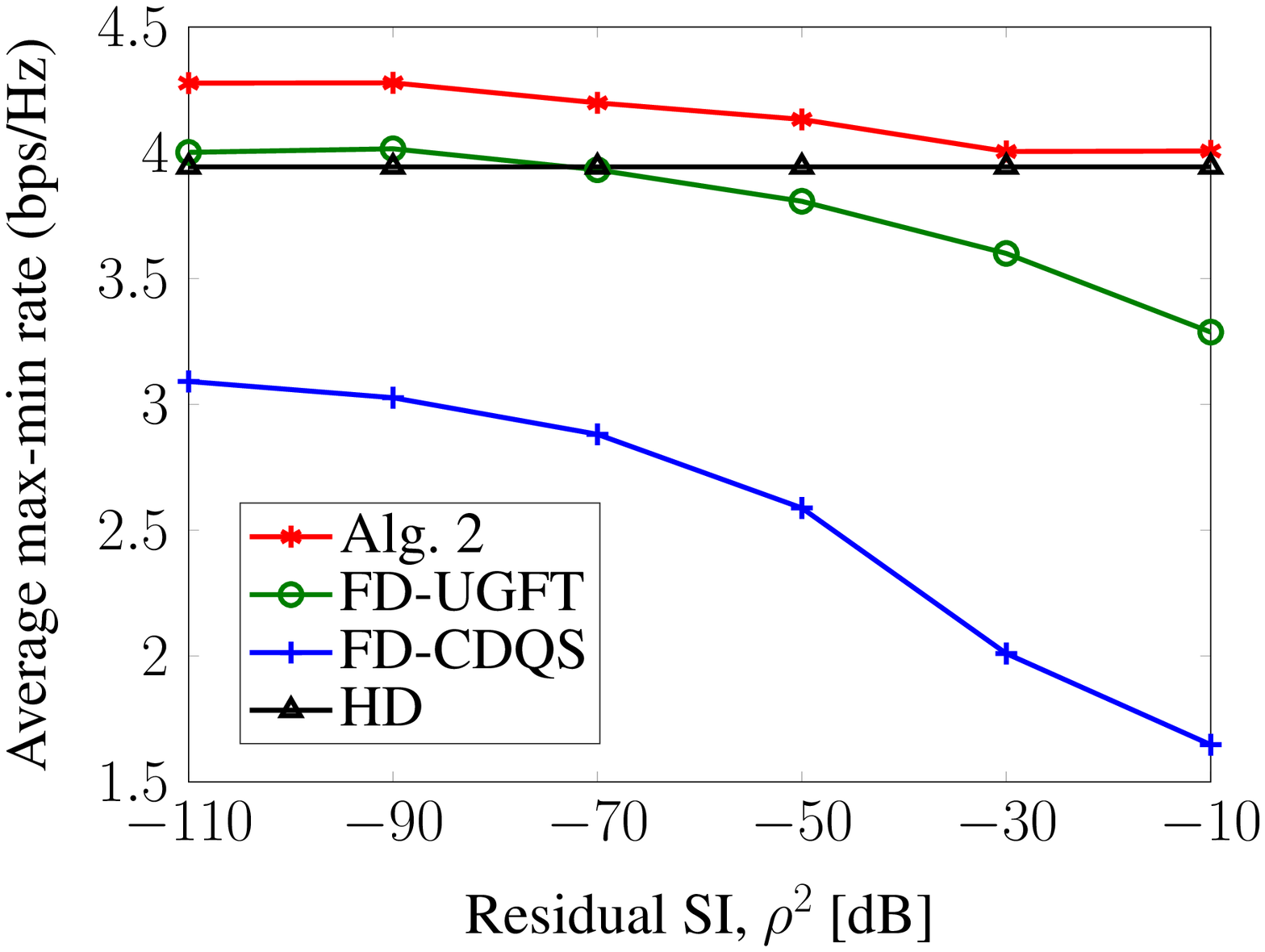}
			\label{fig: Max-min vs SI}
		}
	\end{subfigure}
	\vspace{-10pt} 
	\caption{The effect of residual SI on the achievable rate.}
	\vspace{10pt}
\end{figure}

Fig. \ref{fig: SR vs SI} plots the SR as a function of the residual SI, $\rho^2$. Obviously, as the residual SI increases, the performance of three FD schemes deteriorates, while that of the HD system keeps unchanged. As expected, the proposed FD always achieves the highest SR compared to the others. The proposed FD enables the hybrid modes with a dynamically-timed two-phase transmission, and thus, it outperforms the HD scheme even when $\rho^2$ becomes more stringent. Otherwise, the advantages of the proposed FD  are efficiently exploited by the HA-enabled Tx/Rx-antenna mode. In addition,    other FD schemes plummet to under the HD in terms of the SR when the residual SI becomes larger than -40 dB.  Fig. \ref{fig: Max-min vs SI} depicts the max-min rate for both UL and DL users as $\rho^2$ increases.  As compared to HD, the performance of FD-UGFT is slightly better when $\rho^2$ is very small, e.g., $\rho^2=\{-110, -90 \}$ dB, while FD-CDQS is always inferior to HD in terms of max-min rate because the HD scheme suffers from no SI and CCI, and it has more DoF due to the full-array antennas for transmission. The max-min rate of the proposed FD using a hybrid approach is at least equal to that of HD in case of unfavorable channels for using FD radio.

\begin{figure}[t] 
	\centering
	\begin{subfigure}[CDF of the sum rate.]
		{
			\includegraphics[width=0.45\linewidth, trim={0cm, .9cm, 0cm, 1cm}]{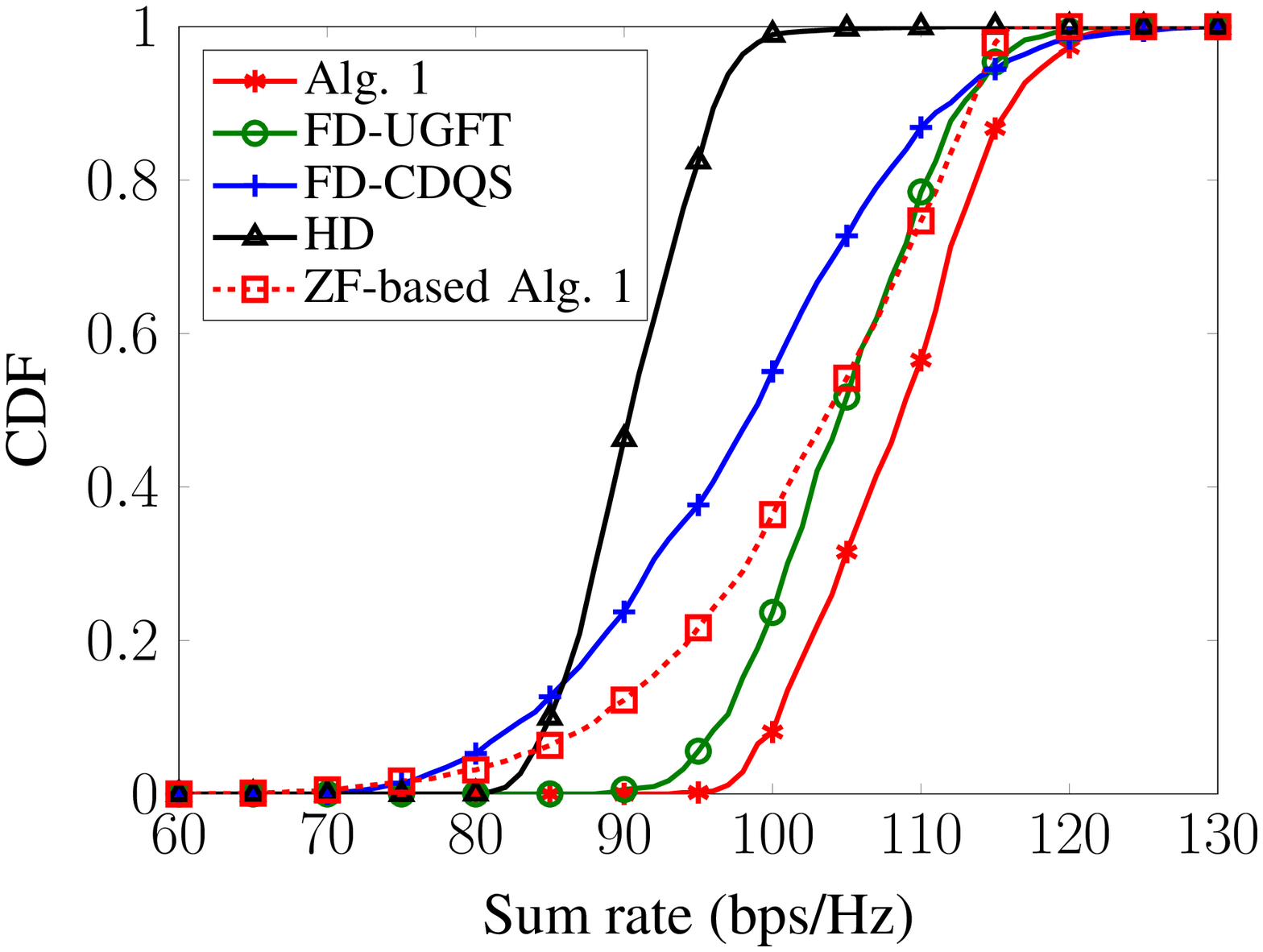}
			\label{fig: CDF vs Sum rate}
		}
	\end{subfigure}
	\hfill
	\begin{subfigure}[CDF of the max-min rate.]
		{
			\includegraphics[width=0.45\linewidth, trim={0cm, 0.9cm, 0cm, 1cm}]{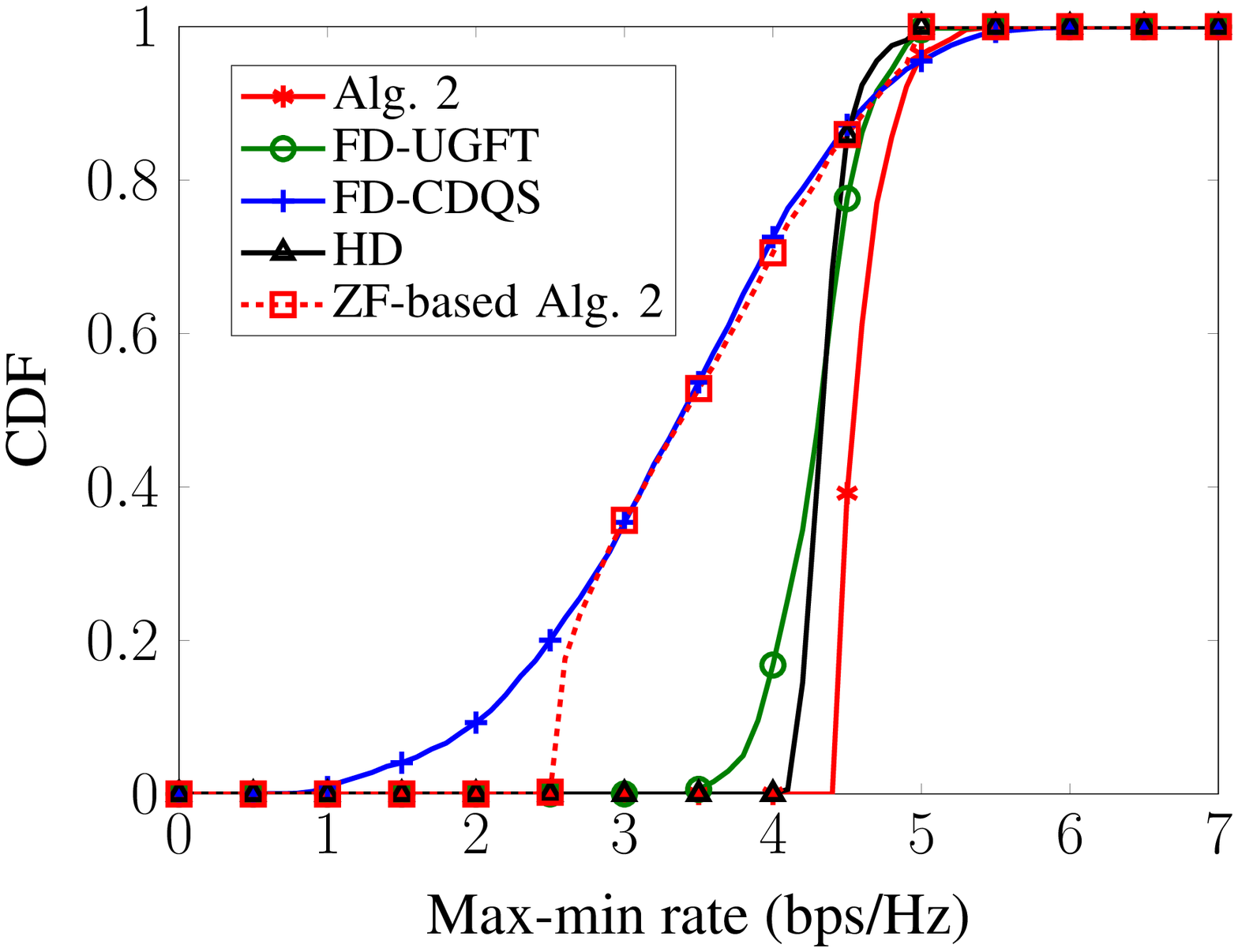}
			
			\label{fig: CDF vs Max-min rate}
		}
	\end{subfigure}
	\vspace{-10pt}
	\caption{Cumulative distribution functions (CDFs). (a) CDF of the sum rate with $\bar{R}=1 $ bps/Hz. (b) CDF of max-min rate.}
	\vspace{-0pt}
	\label{fig: CDF}
\end{figure}

Fig. \ref{fig: CDF} shows the CDFs of the SR and max-min rate. We examine 1000 topologies of users' random positions and 500 random channels for each topology. In both cases, the proposed FD scheme outperforms the others for moderate rate requirements. Interestingly, Fig. \ref{fig: CDF vs Sum rate} indicates that the SR optimization is highly flexible for the channel information, including users' positions and random channels. In fact, the differences between the 5- and 95-percentile of all methods are larger than 20 bps/Hz, which is more than 1 bps/Hz per user on average. Meanwhile, the interval within the same range of percentiles in Fig. \ref{fig: CDF vs Max-min rate} is lower than 0.5 bps/Hz, except for the FD-CDQS method. This is a result of the SR optimization having the tendency to improve the sum of the channel efficiency as long as the QoS constraints are satisfied, while the max-min  optimization improves all users' rates in association to each other. {\hilise To examine insight the property of proposed scheme, we provide the CDFs of a baseline method with Alg. \ref{alg: for max sum SE} using zero-forcing for both UL and DL transmission (ZF-based Alg. \ref{alg: for max sum SE}). As depicted in Fig. \ref{fig: CDF vs Sum rate}, the ZF-based Alg. \ref{alg: for max sum SE} method gives the worst performance when compared to the other two-phase FD schemes (FD-UGFT and Alg. \ref{alg: for max sum SE}), which verifies the advantages of MMSE-SIC and beamforming design. Interestingly, the CDF of ZF-based Alg. \ref{alg: for max min SE} in Fig. \ref{fig: CDF vs Max-min rate} reflects both properties of ZF and two-phase transmission. The ZF-based Alg. \ref{alg: for max min SE} may utilize the different transmission modes to reduce the SI and CCI, while FD-CDQS still suffers from the interference in the whole time block. Accordingly, though the FD-CDQS method uses MMSE-SIC and beamforming design, ZF-based Alg. \ref{alg: for max min SE} with merely canceling the multiuser interference takes more advantage through managing the SI and CCI with two phases. However, Alg. \ref{alg: for max min SE} with MMSE-SIC and beamforming design further facilitates the data transmission as compared to ZF-based Alg. \ref{alg: for max min SE}.}

\begin{figure}[t] 
	\centering
	\vspace{-20pt}
	\begin{subfigure}[]
		{
			\vspace{-20pt}
			\includegraphics[width=0.42\linewidth, trim={0cm, 1cm, 0cm, 0cm}]{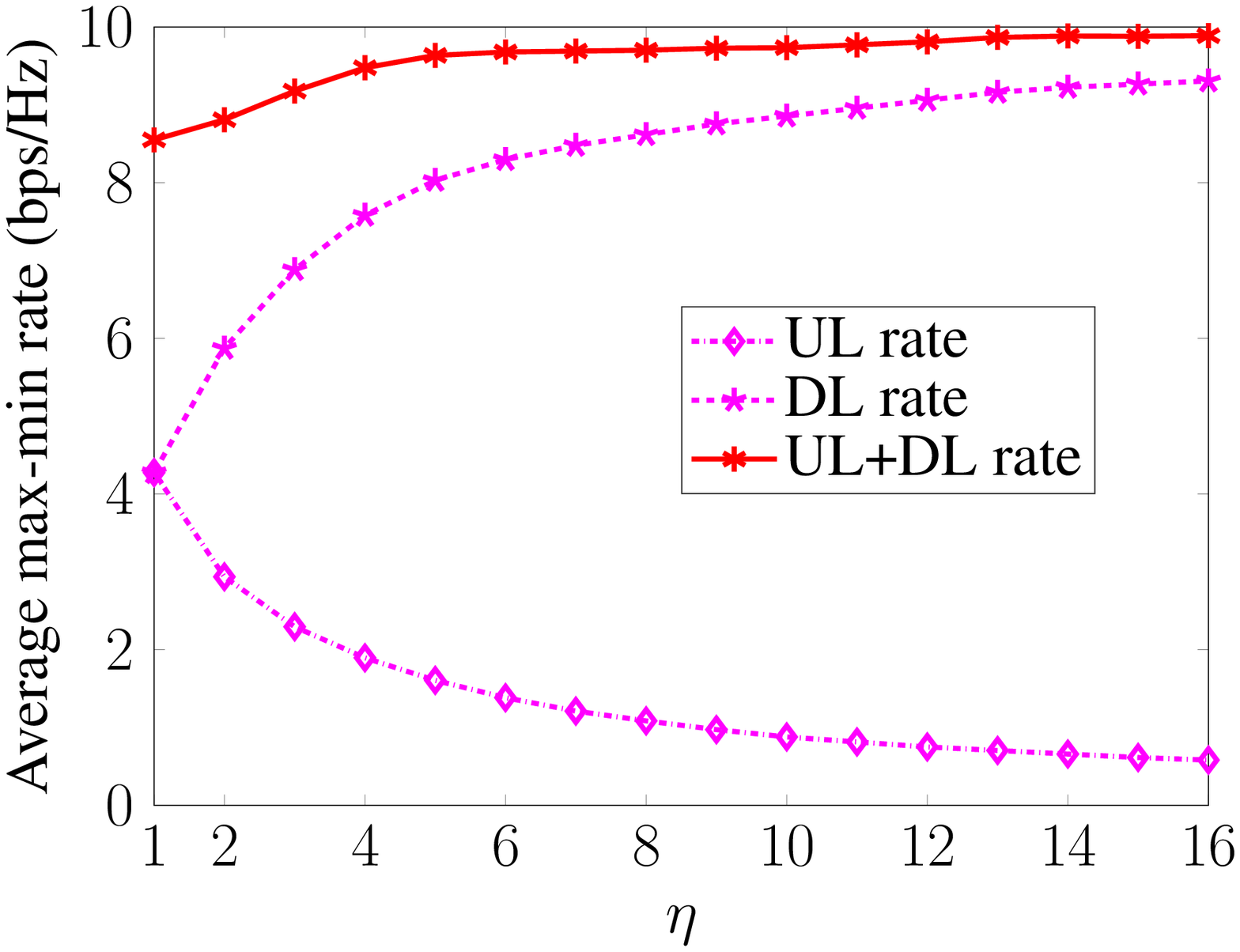}
			\label{fig: Rate vs eta}
		}
	\end{subfigure}
	\hfill
	\begin{subfigure}[]
		{
			\vspace{-20pt}
			\includegraphics[width=0.42\linewidth, trim={0cm, 1cm, 0cm, 0cm}]{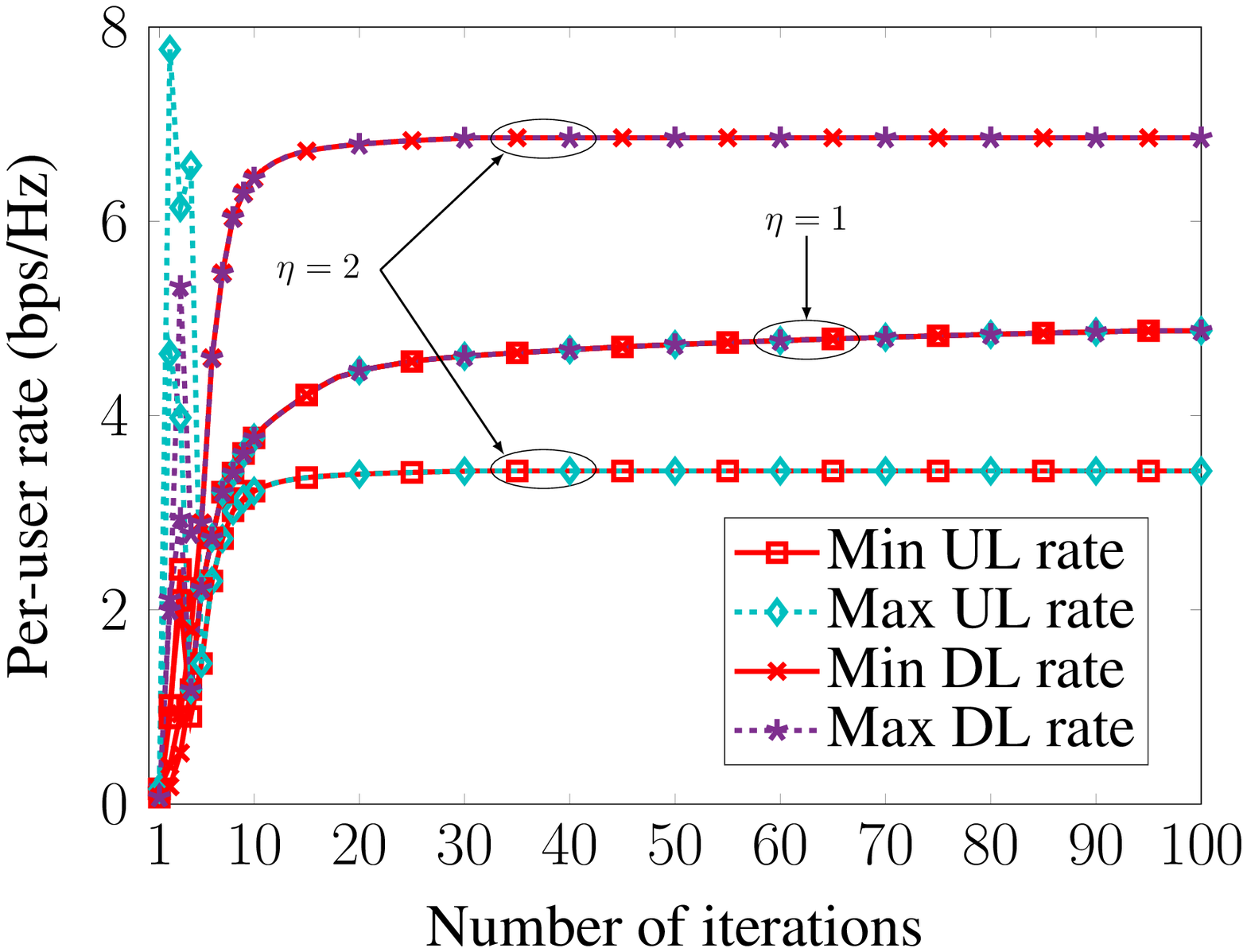}
			
			\label{fig: Max-min convergence}
		}
	\end{subfigure}
	\vspace{-10pt}
	\caption{Per-user rate for max-min optimization. (a) Average max-min rate as $ \eta=1 \rightarrow 16 $. (b) Typical convergence behavior.}
	\label{fig: Max-min eta}
\end{figure}

By using Algorithm \ref{alg: for max min SE}, Fig. \ref{fig: Max-min eta}(a) shows the per-user rate versus $\eta$. It is clear that the UL users keep the max-min rate, which is the best value of $ \phi $ among eight subproblems \eqref{eq: prob. max-min SE:eqi}, and the DL users achieve the scaled rate of $ \eta\phi $. Although the max-min rate decreases as $ \eta $ increases, the total of UL and DL per-user rates increases. This is because when $ \eta $ becomes larger, the BS with array antennas effectively favors DL users over UL users. Subsequently, the amount of the increase in the DL rate is more than that of the decrease in the UL rate, which carries the total rate up. To derive further insight into the rates among UL (or DL) users, we investigate the convergences of their maximum and minimum per-user rates, as shown in Fig. \ref{fig: Max-min convergence}. For a given value of $ \eta $, the maximum and minimum of UL per-user rates converge at the same value. It indicates that the UL per-user rates are equal to each other in the max-min optimization, and similar observation can be made for DL users, as constraints \eqref{eq: prob. max-min SE c:eqi} and \eqref{eq: prob. max-min SE d:eqi} hold. In fact, it first takes several iterations to find a feasible point, and from the 5-th iteration, the maximum and minimum UL (or DL) rates keep almost the same value. It is also true that when $ \eta=2 $, the UL rates diverge from the DL rates, verifying that the DL gain rates are higher than the UL loss rates.

In Fig. \ref{fig: SR vs Max-min}, we study the relationship between the SR and max-min rate under different levels of $ \eta $ and $ \rho^2 $. As can be observed, the SR is directly proportional to $ \eta $ while the max-min rate follows the inverse trend, which is evident from UL rate in Fig. \ref{fig: Max-min eta}(a). The gap  of  the SR or max-min rate plummets as $ \eta $ increases. At $ \rho^2=-70 $ dB,  the gap between SRs (or max-min rates) is around 2 bps/Hz (or 1.25 bps/Hz) and less than 1 bps/Hz (or 0.5 bps/Hz) when $ \eta $ varies from 1 to 2 and from 8 up to 16, respectively.

\begin{figure}[t] 
	\begin{minipage}[t]{0.48\columnwidth}
		\centering
		\vspace{-30pt}
		\includegraphics[width=0.9\columnwidth, trim={0cm, 0cm, 0cm, -2cm}]{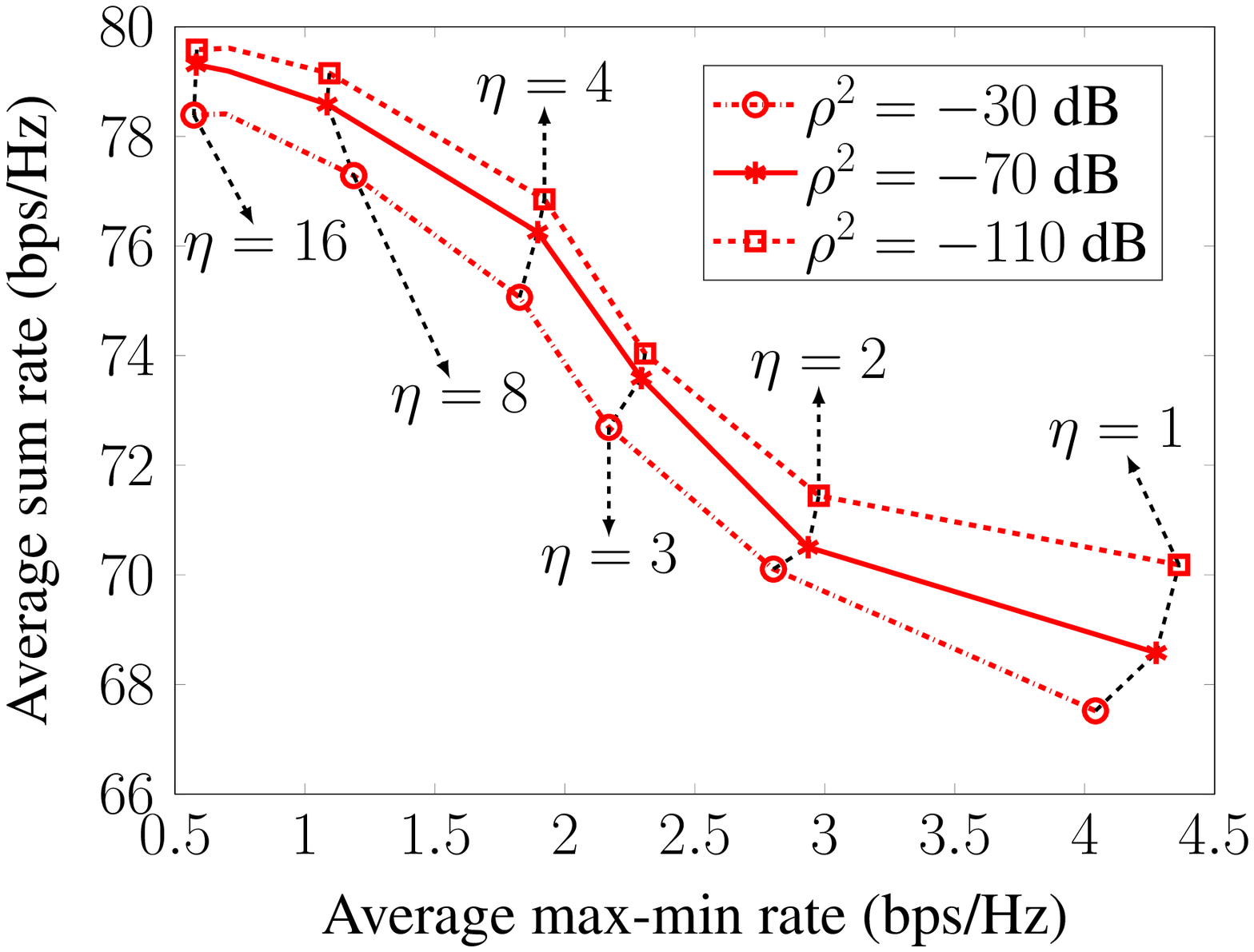}
		\vspace{-20pt}
		\caption{Relationship between sum rate and max-min rate.}
		\label{fig: SR vs Max-min}
	\end{minipage}
	\hfill
	\begin{minipage}[t]{0.48\columnwidth}
		\centering
		\vspace{-30pt}
		\includegraphics[width=0.9\columnwidth, trim={0cm, 0cm, 0cm, -2cm}]{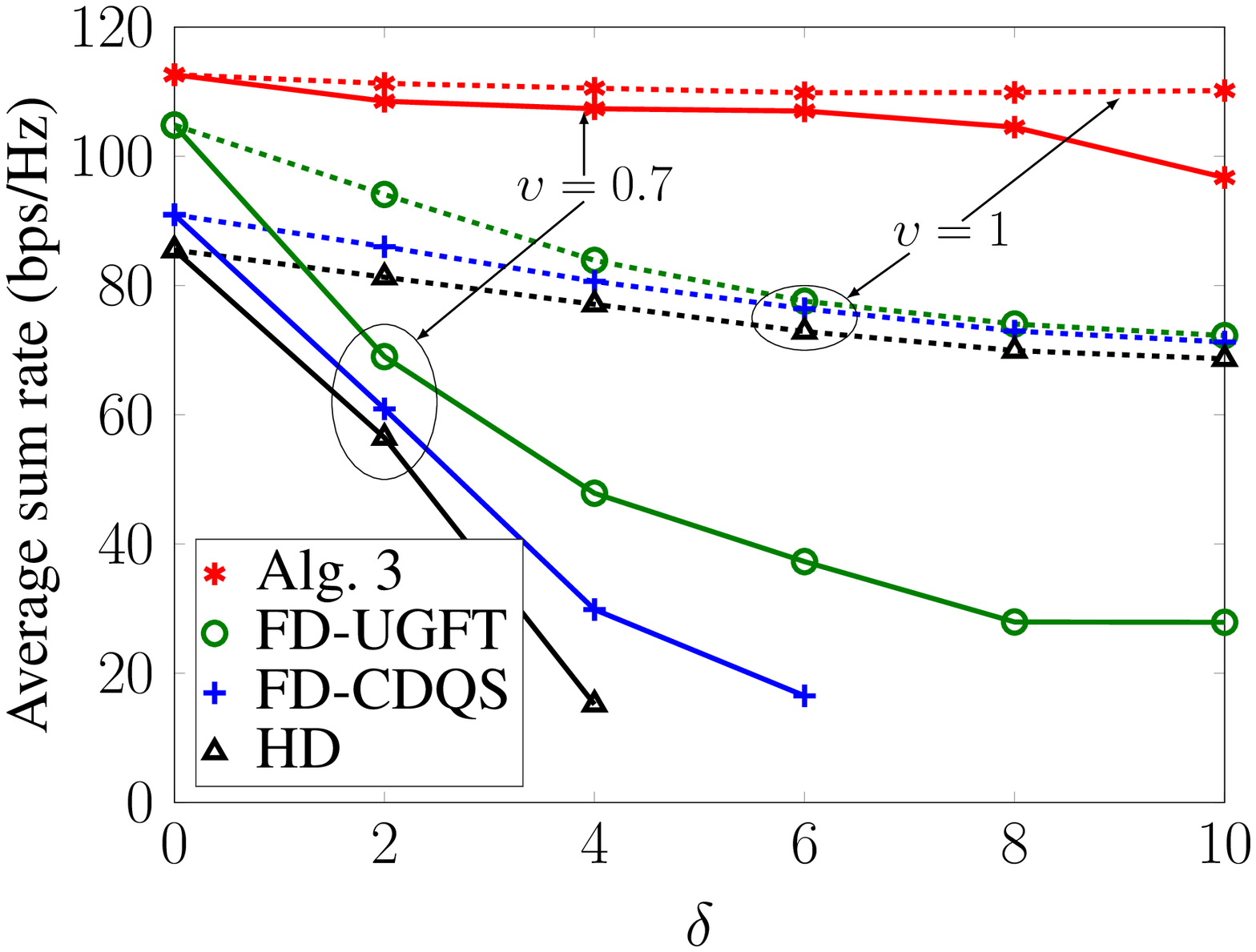}
		\vspace{-20pt}
		\caption{Sum rate with different stochastic error levels.}
		\label{fig: SR vs Delta}
	\end{minipage}
\end{figure}

As the last numerical example, we investigate the effect of different values of channel uncertainties on the SR in Fig. \ref{fig: SR vs Delta}. The variances of the CSI errors are modeled as $ \epsilon_{\ell}^{\ul} = \epsilon_{k}^{\dl} = \epsilon_{\ell,k} = \epsilon = \delta r_{sn}^{-\upsilon} $, where  $ \delta, \upsilon \geq 0 $ and $ r_{sn} $  is the signal-to-noise ratio \cite{Maurer:TSP:Jan2011}. Clearly, the perfect CSI scenario is obtained with $ \epsilon=0 $ when $ \delta=0 $ for  an arbitrary value of $ \upsilon $.  To evaluate the performance, we use Algorithm 3 for the proposed robust FD and  extend the FD-UGFT  \cite{Dinh:Access} and the FD-CDQS  \cite{Dan:TWC:14}  with the estimation errors for robust designs. As seen from Fig. \ref{fig: SR vs Delta}, the SRs of the existing designs drop quickly as $ \delta $ increases. For $ \upsilon=0.7 $, the existing designs surprisingly deteriorate, and it is even impossible to find a feasible solution at $ \delta=4 $ (HD) or at $ \delta=6 $ (FD-CDQS). In contrast, the SR of the proposed FD design is less sensitive to the stochastic error levels, which confirms its robustness against the effect of imperfect CSI as compared to the others.

\section{Conclusion}\label{Conclusion}
In this paper, we have proposed new joint BS-UE association, time phase and power control designs for the FD MU-MISO systems. The considered problem is either: 1) to maximize the total sum rate subject to the minimum rate requirements for UL and DL users, or 2) to maximize the minimum rate among all users, which is formulated as a mixed-integer non-convex programming. To address the design problem, we have transformed the original problem into subproblems of lower dimensions. Then, iterative low-complexity algorithms have been proposed based on the  inner approximation method to solve the sequence of convex programs, exploiting its special properties. Our algorithms with realistic parameters monotonically improve the objective functions and convergence to a stationary point is guaranteed. Through numerical experiments, we have demonstrated the usefulness of our proposed algorithms in improving both the SR and max-min rate, even under strong effects of residual SI. The extension to the case of  imperfect CSI  has also been considered to further confirm the robustness of the proposed design.

\appendices
\renewcommand{\thesectiondis}[2]{\Alph{section}:}
\section{Proof of Proposition \ref{prop: Constraint Reduction}} \label{app: Constraint Reduction}
\vspace{-5pt}
{\hili It is clear that  function $ f_\omega $ in \eqref{eq: function f_omega} is well-defined since any $ \mathbf{z}=[z_1\;z_2]^T\in\{0,1\}^{2\times 1} $ can derive an $ f_\omega(\mathbf{z}) $ in co-domain $ \mathcal{B} $. For an arbitrary value of $ \hat{z}=\mathrm{b}`b_1b_0 \in \mathcal{B}$, we easily obtain the bit-wise matching as $ z_1=b_1 $ and $ z_2=b_0 $. For $ \mathbf{z}=[z_1\;z_2]^T$, there exists an inverse mapping $ \mathbf{z}=f_\omega^{-1}(\hat{z})=[b_1\;b_0]^T \in \{0,1\}^2,\; \forall \hat{z}\in\mathcal{B} $, indicating that $ f_\omega $ is surjective. Suppose that $ f_\omega(\mathbf{z}), f_\omega(\mathbf{z}') \in \mathcal{B} $ and $ f_\omega(\mathbf{z})= f_\omega(\mathbf{z}') =\mathrm{b}`b_1b_0 $. According to the definition of $ f_\omega $ in \eqref{eq: function f_omega}, we get $ \mathbf{z}=\mathbf{z}'=[b_1\;b_0]^T $, which represents an injection. This verifies the fact that the function $ f_\omega $ is bijective, and thus a function $ g_\omega:\; \{0,1\}^{2\times 1}\times\{0,1\}^{2\times 1}\rightarrow\mathcal{B}\times\mathcal{B},\; g_\omega(\boldsymbol{\Omega})=[f_\omega(\boldsymbol{\omega}_1)\;f_\omega(\boldsymbol{\omega}_2)] $ is also bijective.}

Now, we consider $ \boldsymbol{\omega}_j, \; j=1,2, $ in the objective function of \eqref{eq: prob. max sum SE}. Based on the structure of the formulation problem, $ \boldsymbol{\omega}_j \; (j=1,2) $ are independently selected in $ \{0,1\}^{2\times 1} $ and play the same role as mode selection in  two phases. In this manner, $ \boldsymbol{\Omega} $ is fixed as a 2-tuple $ (\boldsymbol{\omega}_1, \boldsymbol{\omega}_2) $, so that we can remove \eqref{eq: prob. general form b} and \eqref{eq: prob. general form c} to derive  subproblems of lower dimensions. 

To reduce the complexity of \eqref{eq: prob. max sum SE}, we further reconstruct the objective function  as $ f_R(\mathbf{X}_1, \mathbf{X}_2)= R_{\Sigma}(\mathbf{X}_1)+R_{\Sigma}(\mathbf{X}_2) $, where $ R_{\Sigma}(\mathbf{X}_j)\triangleq \sum_{\ell\in\mathcal{L}}\alpha_{\ell,j}^{\ul}\bar{\tau}_j \ln\bigl(1+\gamma_{\ell}^{\ul}\bigl(\boldsymbol{\omega}_j, \mathbf{w}_j, \mathbf{p}_j\bigr)\bigr)+\sum_{k\in\mathcal{K}}\alpha_{k,j}^{\dl}\bar{\tau}_j \ln\bigl(1+\gamma_{k}^{\dl}\bigl(\boldsymbol{\omega}_j, \mathbf{w}_j, \mathbf{p}_j\bigr)\bigr) $ is the sum rate of  phase $j$ and $ \mathbf{X}_j \triangleq \{\boldsymbol{\omega}_j, \mathbf{w}_j, \mathbf{p}_j, \tau, \boldsymbol{\alpha}_{\ell,j}^{\ul}, \boldsymbol{\alpha}_{k,j}^{\dl} \} $, $ j=1,2 $.  It is readily seen that $f_R(\mathbf{X}_1, \mathbf{X}_2) $ is a symmetric function, meaning $ f_R(\mathbf{X}_1^*, \mathbf{X}_2^*)= f_R(\mathbf{X}_2^*, \mathbf{X}_1^*) $. {\hili Moreover, since $ g_\omega(\boldsymbol{\Omega}) $ is a bijection from the finite set $ \{0,1\}^{2\times 1}\times \{0,1\}^{2\times 1} $ to the finite set $ \mathcal{B}\times \mathcal{B} $, there exists a subset $ \mathcal{S}_{\boldsymbol{\Omega}}\subset\{0,1\}^{2\times 1}\times \{0,1\}^{2\times 1} $ and a subset $ \mathcal{B}' \subset \mathcal{B}\times \mathcal{B} $ such that $ g_\omega^{-1}(\mathcal{B}')=\mathcal{S}_{\boldsymbol{\Omega}} $ and $ g_\omega(g_\omega^{-1}(\mathcal{B}'))=\mathcal{B}' $. From the symmetry property, we construct a set $ \mathcal{B}' $ as in \eqref{eq: fixing omega - HA assignment}, which is $ \mathcal{B}' \subset \mathcal{B}\times \mathcal{B} $, and the search area is thus reduced to a half of $ \mathcal{B}\times \mathcal{B} $. For each image $ \boldsymbol{\hat{\omega}}\in\mathcal{B}' $, the argument $ \boldsymbol{\Omega}=g_\omega^{-1}(\boldsymbol{\hat{\omega}})\in\mathcal{S}_{\boldsymbol{\Omega}}  $ satisfying constraints \eqref{eq: prob. general form b} and \eqref{eq: prob. general form c} is used as a constant parameter corresponding to a subproblem, which completes the proof.}

\vspace{-10pt}
\section{Bound Tightening in Lemma \ref{lem: Bound Tightening}} \label{app: Bound Tightening}
\vspace{-5pt}
To examine the effect of $ \mathbf{w}_{k,i,j} $ on the optimal value, we consider the beamforming vectors for DL users at the optimum of the subproblem. Let us reconstruct the constraints \eqref{eq: prob. general form g} and \eqref{eq: prob. general form h} at the optimum as
\begin{subequations} \label{eq: power constraints as optimum}
	\begin{gather} 
	\sum\nolimits_{k=1}^{K}\bar{\tau}_j \|\mathbf{w}_{k,i,j}^*\|^2 + \sum\nolimits_{k=1}^{K}\underset{(i',j')\neq (i,j)}{\sum\nolimits_{i'=1}^{2}\sum\nolimits_{j'=1}^{2}}\bar{\tau}_j \|\mathbf{w}_{k,i',j'}^*\|^2 \leq P_t^{\text{max}}, \\
	\sum\nolimits_{k=1}^{K}\|\mathbf{w}_{k,i,j}^*\|^2 + \sum\nolimits_{k=1}^{K}\|\mathbf{w}_{k,i',j}^*\|^2 \leq P_t^{\infty}, \; i'\neq i.
	\end{gather}
\end{subequations}
It is implicit that $ \mathbf{w}_{k,i,j}^* \in \mathcal{D}_{\mathbf{w}_{k,i,j}} \triangleq \{\mathbf{w}_{k,i,j}\in \mathbb{C}^{N\times1}| \sum_{k=1}^{K}\bar{\tau}_j \|\mathbf{w}_{k,i,j}\|^2 \in [0,a_{i,j}]\; \& \; \sum_{k=1}^{K}\|\mathbf{w}_{k,i,j}^*\|^2 \in [0,b_{i,j}] \}, \forall k \in \mathcal{K} $, where $ a_{i,j}\triangleq P_t^{\text{max}}-\sum\nolimits_{k=1}^{K}\underset{(i',j')\neq (i,j)}{\sum\nolimits_{i'=1}^{2}\sum\nolimits_{j'=1}^{2}}\bar{\tau}_j \|\mathbf{w}_{k,i',j'}^*\|^2 $ and $ b_{i,j}\triangleq P_t^{\infty}-\sum\nolimits_{k=1}^{K}\|\mathbf{w}_{k,i',j}^*\|^2 $. On the other hand, since $ \omega_{ij}=0 $, the channels corresponding to the $ i $-th HA are inactive. According to \eqref{eq: SINR DLUs}, any value of the beamforming vector $ \mathbf{w}_{k,i,j}^*\in \mathcal{D}_{\mathbf{w}_{k,i,j}} $ satisfying \eqref{eq: power constraints as optimum} impacts neither $ \gamma_{k}^{\dl}\bigl(\boldsymbol{\omega}_j, \mathbf{w}_j, \mathbf{p}_j\bigr) $ nor the $ k $-th user's rate. In other words, for $ \omega_{ij}=0 $, if we change the value of the vector $ \mathbf{w}_{k,i,j}^* $ at the optimum such that \eqref{eq: power constraints as optimum} holds, the optimal value would remain unchanged.

\section{Uplink SINR} \label{app: UL SINR}
To retrieve the UL information, the optimal weight vector is determined by minimizing the MSE function, i.e.,
\begin{align}
	f^\text{MSE}_{\ell, j} & =\mathbb{E}\Bigl[\bigl|x_{\ell,j}^{\ul}-\mathbf{u}_{\ell, j}^H\mathbf{\hat{y}}_{\ell,j}^{\ul}\bigr|^2\Bigr] = 1 + \mathbf{u}_{\ell, j}^H\mathbf{C}_{\mathbf{\hat{y}}_{\ell,j}^{\ul} \mathbf{\hat{y}}_{\ell,j}^{\ul}}\mathbf{u}_{\ell, j} - 2p_{\ell,j} \mathbf{u}_{\ell, j}^H \mathbf{\bar{h}}_{\ell,j}^{\ul}
\end{align}
where the covariance matrix of $ \mathbf{\hat{y}}_{\ell,j}^{\ul} $ is calculated by 
\begin{align}
	\mathbf{C}_{\mathbf{\hat{y}}_{\ell,j}^{\ul}\mathbf{\hat{y}}_{\ell,j}^{\ul}} & = \mathbb{E}\bigl[\mathbf{\hat{y}}_{\ell,j}^{\ul}\bigl(\mathbf{\hat{y}}_{\ell,j}^{\ul}\bigr)^H\bigr] = p_{\ell,j}^2  \mathbf{\bar{h}}_{\ell,j}^{\ul}\bigl(\mathbf{\bar{h}}_{\ell,j}^{\ul}\bigr)^H + \boldsymbol{\hat{\Psi}}_{\ell,j}
\end{align}
with $ \boldsymbol{\hat{\Psi}}_{\ell,j}\triangleq\sum_{\ell'=\ell+1}^{L}p_{\ell',j}^2  \mathbf{\bar{h}}_{\ell',j}^{\ul}\bigl(\mathbf{\bar{h}}_{\ell',j}^{\ul}\bigr)^H + \sum_{\ell'=1}^{L}p_{\ell',j}^2 \epsilon_{\ell'}^{\ul}\boldsymbol{\bar{\Lambda}}_j + \rho^2\sum_{k=1}^{K} \mathbf{\tilde{G}}_{j}^H \mathbf{w}_{k,j} \mathbf{w}_{k,j}^H\mathbf{\tilde{G}}_{j} + \sigma_{\mathtt{U}}^2\mathbf{I} $ and $ \mathbf{\bar{h}}_{\ell,j}^{\ul} \triangleq \boldsymbol{\bar{\Lambda}}_j \mathbf{\hat{h}}_{\ell}^{\ul} $. The optimal solution for MSE minimization satisfying $ \nabla_{\mathbf{u}_{\ell, j}^H} f^\text{MSE}_{\ell, j} = 0 $ is
\begin{align} \label{eq: weight vector for MMSE est.}
	\mathbf{u}_{\ell, j} = p_{\ell,j} \mathbf{C}_{\hat{\mathbf{y}}_{\ell,j}^{\ul} \hat{\mathbf{y}}_{\ell,j}^{\ul}}^{-1} \mathbf{\bar{h}}_{\ell,j}^{\ul}
\end{align}
showing  \eqref{eq: optimal weight vector}.

Furthermore, the information detection is performed as
$	\hat{x}_{\ell,j}^{\ul} = \mathbf{u}_{\ell, j}^H\mathbf{\hat{y}}_{\ell,j}^{\ul} = s_{\ell,j} + z_{\ell,j},
$
where the expected signal and interference-plus-noise are respectively given by 
\begin{align}
	s_{\ell,j} & = \mathbf{u}_{\ell, j}^H \mathbf{\bar{h}}_{\ell,j}^{\ul}x_{\ell,j}^{\ul} =  p_{\ell,j}^2 \bigl(\mathbf{\bar{h}}_{\ell,j}^{\ul}\bigr)^H \mathbf{C}_{\mathbf{\hat{y}}_{\ell,j}^{\ul} \mathbf{\hat{y}}_{\ell,j}^{\ul}}^{-1} \mathbf{\bar{h}}_{\ell,j}^{\ul}x_{\ell,j}^{\ul}, \label{eq: expected signal - chan. uncertainty} \\
	z_{\ell,j} & = \mathbf{u}_{\ell, j}^H\bigl(\sum\nolimits_{\ell'>\ell}^{L} p_{\ell',j}  \mathbf{\bar{h}}_{\ell'}^{\ul}x_{\ell',j}^{\ul} + \sum\nolimits_{\ell'=1}^{L}p_{\ell',j} \boldsymbol{\bar{\Lambda}}_j\boldsymbol{\Delta}\mathbf{h}_{\ell'}^{\ul} x_{\ell',j}^{\ul} + \rho\sum\nolimits_{k=1}^{K} \mathbf{\tilde{G}}_j^H \mathbf{w}_{k,j} x_{k,j}^{\dl} + \mathbf{n}_{\ell,j}\bigr). \label{eq: interf. plus noise - chan. uncertainty}
\end{align}
From \eqref{eq: weight vector for MMSE est.}-\eqref{eq: interf. plus noise - chan. uncertainty}, the SINR of $\ULU$ in phase $j$ by treating the CSI errors as noise is derived as
\begin{align} \label{eq: UL SINR - chan. uncertainty}
	\hat{\gamma}_{\ell}^{\ul}(\boldsymbol{\omega}_j,\mathbf{w}_j , \mathbf{p}_j) = \frac{\mathbb{E}\bigl[s_{\ell,j}s_{\ell,j}^H\bigr]}{\mathbb{E}\bigl[z_{\ell,j}z_{\ell,j}^H\bigr]} & = \frac{p_{\ell,j}^2 \bigl(\mathbf{\bar{h}}_{\ell,j}^{\ul}\bigr)^H \mathbf{C}_{\mathbf{\hat{y}}_{\ell,j}^{\ul} \mathbf{\hat{y}}_{\ell,j}^{\ul}}^{-1} \mathbf{\bar{h}}_{\ell,j}^{\ul}}{1-p_{\ell,j}^2 \bigl(\mathbf{\bar{h}}_{\ell,j}^{\ul}\bigr)^H \mathbf{C}_{\mathbf{\hat{y}}_{\ell,j}^{\ul} \mathbf{\hat{y}}_{\ell,j}^{\ul}}^{-1} \mathbf{\bar{h}}_{\ell,j}^{\ul}}.
\end{align}
By applying the matrix inversion lemma to the denominator of \eqref{eq: UL SINR - chan. uncertainty}, i.e., $1-p_{\ell,j}^2 \bigl(\mathbf{\bar{h}}_{\ell,j}^{\ul}\bigr)^H \mathbf{C}_{\mathbf{\hat{y}}_{\ell,j}^{\ul} \mathbf{\hat{y}}_{\ell,j}^{\ul}}^{-1} \mathbf{\bar{h}}_{\ell,j}^{\ul}  = \bigl(1+p_{\ell,j}^2 (\mathbf{\bar{h}}_{\ell,j}^{\ul})^H\boldsymbol{\hat{\Psi}}_{\ell,j}^{-1}\mathbf{\bar{h}}_{\ell,j}^{\ul}\bigr)^{-1}$,
\eqref{eq: UL SINR - chan. uncertainty} can be further simplified as
\begin{align}
	\hat{\gamma}_{\ell}^{\ul}(\boldsymbol{\omega}_j,\mathbf{w}_j , \mathbf{p}_j) & = \frac{1-\bigl(1+p_{\ell,j}^2 (\mathbf{\bar{h}}_{\ell,j}^{\ul})^H\boldsymbol{\hat{\Psi}}_{\ell,j}^{-1}\mathbf{\bar{h}}_{\ell,j}^{\ul}\bigr)^{-1}}{\bigl(1+p_{\ell,j}^2 (\mathbf{\bar{h}}_{\ell,j}^{\ul})^H\boldsymbol{\hat{\Psi}}_{\ell,j}^{-1}\mathbf{\bar{h}}_{\ell,j}^{\ul}\bigr)^{-1}} = p_{\ell,j}^2 \bigl(\mathbf{\bar{h}}_{\ell,j}^{\ul}\bigr)^H\boldsymbol{\hat{\Psi}}_{\ell,j}^{-1}\mathbf{\bar{h}}_{\ell,j}^{\ul}
\end{align}
showing \eqref{eq: SINR ULUs - chan. uncertainty}, and the proof is completed.

\titlespacing{\section}{-10pt}{-5pt}{-10pt}
\bibliographystyle{IEEEtran}
\bibliography{IEEEfull}

\begin{thebibliography}{10}
\providecommand{\url}[1]{#1}
\csname url@samestyle\endcsname
\providecommand{\newblock}{\relax}
\providecommand{\bibinfo}[2]{#2}
\providecommand{\BIBentrySTDinterwordspacing}{\spaceskip=0pt\relax}
\providecommand{\BIBentryALTinterwordstretchfactor}{4}
\providecommand{\BIBentryALTinterwordspacing}{\spaceskip=\fontdimen2\font plus
\BIBentryALTinterwordstretchfactor\fontdimen3\font minus
  \fontdimen4\font\relax}
\providecommand{\BIBforeignlanguage}[2]{{%
\expandafter\ifx\csname l@#1\endcsname\relax
\typeout{** WARNING: IEEEtran.bst: No hyphenation pattern has been}%
\typeout{** loaded for the language `#1'. Using the pattern for}%
\typeout{** the default language instead.}%
\else
\language=\csname l@#1\endcsname
\fi
#2}}
\providecommand{\BIBdecl}{\relax}
\BIBdecl

\bibitem{Hieu:GLOBECOM:2017}
H.~V. Nguyen, V.-D. Nguyen, O.~A. Dobre, and O.-S. Shin, ``Sum rate
  maximization based on sub-array antenna selection in a full-duplex system,''
  in \emph{Proc. IEEE Global Commun. Conf. (GLOBECOM)}, Dec. 2017, pp. 1--6.

\bibitem{MietznerCST09}
J.~Mietzner, R.~Schober, L.~Lampe, W.~H. Gerstacker, and P.~A. Hoeher,
  ``Multiple-antenna techniques for wireless communications: {A} comprehensive
  literature survey,'' \emph{IEEE Commun. Surveys $\&$ Tutorials}, vol.~11,
  no.~2, pp. 87--105, 2nd Quarter 2009.

\bibitem{Sabharwal:JSAC:Feb2014}
A.~Sabharwal, P.~Schniter, D.~Guo, D.~W. Bliss, S.~Rangarajan, and R.~Wichman,
  ``In-band full-duplex wireless: Challenges and opportunities,'' \emph{IEEE J.
  Sel. Areas Commun.}, vol.~32, no.~9, pp. 1637--1652, Feb. 2014.

\bibitem{ZhangCM15}
Z.~Zhang, X.~Chai, K.~Long, A.~V. Vasilakos, and L.~Hanzo, ``Full duplex
  techniques for {5G} networks: {S}elf-interference cancellation, protocol
  design, and relay selection,'' \emph{IEEE Commun. Mag.}, vol.~53, no.~5, pp.
  128--137, May 2015.

\bibitem{Wong5Gbook17}
V.~W. Wong, R.~Schober, D.~W.~K. Ng, and L.-C. Wang, \emph{Key Technologies for
  5G Wireless Systems}.\hskip 1em plus 0.5em minus 0.4em\relax Cambridge, MA,
  USA: Cambridge Univ. Press, 2017.

\bibitem{Yadav:VehCOMM:18}
\BIBentryALTinterwordspacing
A.~Yadav, G.~I. Tsiropoulos, and O.~A. Dobre, ``Full duplex communications:
  Performance in ultra-dense small-cell wireless networks,'' \emph{IEEE Veh.
  Technol. Mag.}, Apr. 2018. [Online]. Available:
  \url{https://arxiv.org/ftp/arxiv/papers/1803/1803.00473.pdf}
\BIBentrySTDinterwordspacing

\bibitem{Duarte:TWC:12}
M.~Duarte, C.~Dick, and A.~Sabharwal, ``Experiment-driven characterization of
  full-duplex wireless systems,'' \emph{IEEE Trans. Wireless Commun.}, vol.~11,
  no.~12, pp. 4296--4307, Dec. 2012.

\bibitem{Bharadia13}
D.~Bharadia, E.~McMilin, and S.~Katti, ``Full duplex radios,'' in \emph{Proc.
  ACM SIGCOMM Computer Commun. Review}, vol.~43, no.~4, 2013, pp. 375--386.

\bibitem{KorpiCoMag16}
D.~Korpi, J.~Tamminen, M.~Turunen, T.~Huusari, Y.~S. Choi, L.~Anttila,
  S.~Talwar, and M.~Valkama, ``Full-duplex mobile device: Pushing the limits,''
  \emph{IEEE Commun. Mag.}, vol.~54, no.~9, pp. 80--87, Sept. 2016.

\bibitem{LaughlinCoMag15}
L.~Laughlin, M.~A. Beach, K.~A. Morris, and J.~L. Haine, ``Electrical balance
  duplexing for small form factor realization of in-band full duplex,''
  \emph{IEEE Commun. Mag.}, vol.~53, no.~5, pp. 102--110, May 2015.

\bibitem{Yadav:LSP:2018}
A.~Yadav, O.~A. Dobre, and H.~V. Poor, ``Is self-interference in full-duplex
  communications a foe or a friend?{},'' \emph{IEEE Signal Process. Lett.},
  Feb. 2018, {DOI}:10.1109/LSP.2018.2802780.

\bibitem{Day:TSP:Jul2012}
B.~P. Day, A.~R. Margetts, D.~W. Bliss, and P.~Schniter, ``Full-duplex
  bidirectional {MIMO}: Achievable rates under limited dynamic range,''
  \emph{IEEE Trans. Signal Process.}, vol.~60, no.~7, pp. 3702--3713, July
  2012.

\bibitem{Zlatanov:TCOMM:Mar2017}
N.~Zlatanov, E.~Sippel, V.~Jamali, and R.~Schober, ``Capacity of the {G}aussian
  two-hop full-duplex relay channel with residual self-interference,''
  \emph{IEEE Trans. Commun.}, vol.~65, no.~3, pp. 1005--1021, Mar. 2017.

\bibitem{Dan-SP-13}
D.~Nguyen, L.-N. Tran, P.~Pirinen, and M.~Latva-aho, ``Precoding for full
  duplex multiuser {MIMO} systems: Spectral and energy efficiency
  maximization,'' \emph{IEEE Trans. Signal Process.}, vol.~61, no.~16, pp.
  4038--4050, Aug. 2013.

\bibitem{Dinh:TCOMM:2017}
V.-D. Nguyen, T.~Q. Duong, H.~D. Tuan, O.-S. Shin, and H.~V. Poor, ``Spectral
  and energy efficiencies in full-duplex wireless information and power
  transfer,'' \emph{IEEE Trans. Commun.}, vol.~65, no.~5, pp. 2220--2233, May
  2017.

\bibitem{Yadav:Access}
A.~Yadav, O.~A. Dobre, and N.~Ansari, ``Energy and traffic aware full duplex
  communications for {5G} systems,'' \emph{IEEE Access}, vol.~5, pp.
  11\,278--11\,290, May 2017.

\bibitem{Dan:TWC:14}
D.~Nguyen, L.-N. Tran, P.~Pirinen, and M.~Latva-aho, ``On the spectral
  efficiency of full-duplex small cell wireless systems,'' \emph{IEEE Trans.
  Wireless Commun.}, vol.~13, no.~9, pp. 4896--4910, Sept. 2014.

\bibitem{Tam:TCOM:16}
H.~H.~M. Tam, H.~D. Tuan, and D.~T. Ngo, ``Successive convex quadratic
  programming for quality-of-service management in full-duplex {MU-MIMO}
  multicell networks,'' \emph{IEEE Trans. Commun.}, vol.~64, no.~6, pp.
  2340--2353, Jun. 2016.

\bibitem{Aquilina:TCOMM:2017}
P.~Aquilina, A.~C. Cirik, and T.~Ratnarajah, ``Weighted sum rate maximization
  in full-duplex multi-user multi-cell {MIMO} networks,'' \emph{IEEE Trans.
  Commun.}, vol.~65, no.~4, pp. 1590--1608, Apr. 2017.

\bibitem{Dinh:Access}
V.-D. Nguyen, H.~V. Nguyen, C.~T. Nguyen, and O.-S. Shin, ``Spectral efficiency
  of full-duplex multiuser system: Beamforming design, user grouping, and time
  allocation,'' \emph{IEEE Access}, vol.~5, pp. 5785--5797, Mar. 2017.

\bibitem{Cirik:WCL:June2016}
A.~C. Cirik, S.~Biswas, S.~Vuppala, and T.~Ratnarajah, ``Robust transceiver
  design for full duplex multiuser {MIMO} systems,'' \emph{IEEE Wireless
  Commun. Lett.}, vol.~5, no.~3, pp. 260--263, June 2016.

\bibitem{Cirik:TCOMM:March2018}
A.~C. Cirik, M.~J. Rahman, and L.~Lampe, ``Robust fairness transceiver design
  for a full-duplex {MIMO} multi-cell system,'' \emph{IEEE Trans. Commun.},
  vol.~66, no.~3, pp. 1027--1041, March 2018.

\bibitem{Nguyen:JSAC:18}
\BIBentryALTinterwordspacing
V.-D. Nguyen, H.~V. Nguyen, O.~A. Dobre, and O.-S. Shin, ``A new design
  paradigm for secure full-duplex multiuser systems,'' \emph{IEEE J. Select.
  Areas Commun.}, 2018. [Online]. Available:
  \url{https://arxiv.org/abs/1801.00963}
\BIBentrySTDinterwordspacing

\bibitem{Nam:TWC:Jun2015}
C.~Nam, C.~Joo, and S.~Bahk, ``Joint subcarrier assignment and power allocation
  in full-duplex {OFDMA} networks,'' \emph{IEEE Trans. Wireless Commun.},
  vol.~14, no.~6, pp. 3108--3119, Jun. 2015.

\bibitem{Di:TWC:Dec2016}
B.~Di, S.~Bayat, L.~Song, Y.~Li, and Z.~Han, ``Joint user pairing, subchannel,
  and power allocation in full-duplex multi-user {OFDMA} networks,'' \emph{IEEE
  Trans. Wireless Commun.}, vol.~15, no.~12, pp. 8260--8272, Dec. 2016.

\bibitem{Silva:TWC:Oct2017}
J.~M.~B. da~Silva, G.~Fodor, and C.~Fischione, ``Fast-{L}ipschitz power control
  and user-frequency assignment in full-duplex cellular networks,'' \emph{IEEE
  Trans. Wireless Commun.}, vol.~16, no.~10, pp. 6672--6687, Oct. 2017.

\bibitem{Ahn:TWC:Nov2016}
M.~Ahn, H.~Kong, H.~M. Shin, and I.~Lee, ``A low complexity user selection
  algorithm for full-duplex {MU-MISO} systems,'' \emph{IEEE Trans. Wireless
  Commun.}, vol.~15, no.~11, pp. 7899--7907, Nov. 2016.

\bibitem{Yang:TWC:Jul2015}
K.~Yang, H.~Cui, L.~Song, and Y.~Li, ``Efficient full-duplex relaying with
  joint antenna-relay selection and self-interference suppression,'' \emph{IEEE
  Trans. Wireless Commun.}, vol.~14, no.~7, pp. 3991--4005, July 2015.

\bibitem{Jang:TVT:Dec2016}
S.~Jang, M.~Ahn, H.~Lee, and I.~Lee, ``Antenna selection schemes in
  bidirectional full-duplex {MIMO} systems,'' \emph{IEEE Trans. Veh. Technol.},
  vol.~65, no.~12, pp. 10{}097--10{}100, Dec. 2016.

\bibitem{Riihonen-SP-11}
T.~Riihonen, S.~Werner, and R.~Wichman, ``Mitigation of loopback
  self-interference in full-duplex {MIMO} relays,'' \emph{IEEE Trans. Signal
  Process.}, vol.~59, no.~12, pp. 5983--5993, Dec. 2011.

\bibitem{Tse:book:05}
D.~Tse and P.~Viswanath, \emph{Fundamentals of Wireless Communication.}\hskip
  1em plus 0.5em minus 0.4em\relax Cambridge Univ. Press, UK, 2005.

\bibitem{Tawarmalani:2002:CGO:640669}
M.~Tawarmalani and N.~V. Sahinidis, \emph{Convexification and Global
  Optimization in Continuous and Mixed-Integer Nonlinear Programming: Theory,
  Algorithms, Software, and Applications}.\hskip 1em plus 0.5em minus
  0.4em\relax Norwell, MA, USA: Kluwer Academic Publishers, 2002.

\bibitem{Marks:78}
B.~R. Marks and G.~P. Wright, ``A general inner approximation algorithm for
  nonconvex mathematical programs,'' \emph{Operations Research}, vol.~26,
  no.~4, pp. 681--683, Jul.-Aug. 1978.

\bibitem{Tuybook}
H.~Tuy, \emph{Convex Analysis and Global Optimization}.\hskip 1em plus 0.5em
  minus 0.4em\relax Kluwer Academic, 2001.

\bibitem{Boyd-04-B}
S.~Boyd and L.~Vandenberghe, \emph{Convex Optimization}.\hskip 1em plus 0.5em
  minus 0.4em\relax Cambridge University Press, 2004.

\bibitem{MOSEK}
\BIBentryALTinterwordspacing
``I.~{MOSEK} aps,'' 2014. [Online]. Available: \url{at http://www.mosek.com}
\BIBentrySTDinterwordspacing

\bibitem{SeDuMi:2002}
D.~Peaucelle, D.~Henrion, Y.~Labit, and K.~Taitz, ``{User's} guide for {SeDuMi}
  interface 1.04,'' in \emph{LAAS-CNRS}, Toulouse, France, Jun. 2002.

\bibitem{Munkres:Topo}
J.~R. Munkres, \emph{Topology: A First Course}.\hskip 1em plus 0.5em minus
  0.4em\relax Englewood Cliffs, NJ: Prentice-Hall, 1974.

\bibitem{Bestuzheva:InvexOp:Jul2017}
\BIBentryALTinterwordspacing
K.~Bestuzheva and H.~Hijazi, ``Invex optimization revisited,'' July 2017.
  [Online]. Available: \url{https://arxiv.org/abs/1707.01554v1}
\BIBentrySTDinterwordspacing

\bibitem{3GPP}
\emph{3GPP Technical Specification Group Radio Access Network, Evolved
  Universal Terrestrial Radio Access (E-UTRA): Further Advancements for E-UTRA
  Physical Layer Aspects (Release 9), document 3GPP TS 36.814 V9.0.0, 2010.}

\bibitem{Maurer:TSP:Jan2011}
J.~Maurer, J.~Jalden, D.~Seethaler, and G.~Matz, ``Vector perturbation
  precoding revisited,'' \emph{IEEE Trans. Signal Process.}, vol.~59, no.~1,
  pp. 315--328, Jan. 2011.

\end{thebibliography}
\end{document}